\verbdef{\vtext}{http://arxiv.org/abs/1303.2760}
\tikzset{>=latex}
\tikzset{%
block/.style    = {draw, thick, rectangle, minimum height = 3em,
    minimum width = 3em},
  block1/.style    = {draw, thick, rectangle, minimum height = 1.7em,
    minimum width = 1.7em,fill=gray!70},
      block2/.style    = {draw, thick, rectangle, minimum height = 1.7em,
    minimum width = 1.7em,fill=gray!30},
  sum/.style      = {draw, circle, node distance = 1.8cm}, % Adder
}
\title{Optimal distributed control for platooning via sparse coprime factorizations$^\star$ \thanks{$\star$ Submitted for publication in May 2015. Provisionally accepted for publication as a Full Paper in the IEEE Transactions on Automatic Control.}}
\author{\c{S}erban Sab\u{a}u$^\sharp$,\thanks{$^\sharp$\c{S}erban Sab\u{a}u
is with the Electrical and Computer Engineering Dept.,
Stevens Institute of Technology.
email: ssabau@stevens.edu}
Cristian Oar\u{a}$^\ddag$,\thanks{$^\ddag$Cristian
Oar\u{a} is with the Automatic Control and Systems Engineering Dept., University ``Politehnica'' Bucharest. email: cristian.oara@acse.pub.ro. The work of Cristian Oar\u{a} has been supported by a grant of the Romanian National Authority for Scientific Research, CNCS -  UEFISCDI, project identification number PN-II-ID-PCE-2011-3-0235.}
Sean Warnick$^\dagger$ \thanks{$\dagger$ Sean Warnick is with the Information and Decision Algorithms Labs. and  the Computer Science Department at Brigham Young University. email: sean.warnick@gmail.com.}
and Ali Jadbabaie$^\flat$ \thanks{$^\flat$Ali Jadbabaie is with the Systems Engineering Dept., University of Pennsylvannia. email: jadbabai@seas.upenn.edu.}}
\renewcommand{\tilde}{\widetilde}
\newcommand{\FF}{{{\rm I \kern -0.2em R}}}
\newcommand{\RR}{{{\rm I \kern -0.2em R}}}
\newcommand{\CC}{{{\mbox{\rm \hspace*{0.05ex}
\rule[.18ex]{.18ex}{1.24ex} \kern -.65em C}}}} %\newcommand{\T}{{{\rm I}\hspace{-1mm}{\rm T}}}
\newcommand{\bea}{\begin{eqnarray}}
\newcommand{\eea}{\end{eqnarray}}
\newtheorem{theorem}{Theorem}[section]
\newtheorem{rem}[theorem]{Remark}
\newtheorem{prop}[theorem]{Proposition}
\newtheorem{lem}[theorem]{Lemma}
\newtheorem{coro}[theorem]{Corollary}
\newtheorem{defn}[theorem]{Definition}
\newtheorem{assumption}[theorem]{Assumption}
\newcommand{\ba}{\left[ \begin{array}}
\newcommand{\baa}{\begin{array}}
\newcommand{\ea}{\end{array} \right]}
\newcommand{\eaa}{\end{array}}
\newcommand{\be}{\begin{equation}}
\newcommand{\ee}{\end{equation}}
\newcommand{\bb}{\begin{equation}\label}
\newcommand{\la}{s}
\def\math#1{\ifmmode{#1} \else {$#1$}\fi}
\newcommand{\sg}{\ifmmode \Sigma \else $\Sigma$ \fi}
\date{}
\begin{document}
\maketitle

\begin{abstract} We introduce a novel distributed control architecture for heterogeneous platoons of linear time--invariant autonomous vehicles.  Our approach is based on a generalization of the concept of {\em leader--follower} controllers for which we provide a Youla--like parameterization, %Unlike existing methods in distributed/decentralized control where the structural constraints  are formulated as sparsity constraints the transfer function matrix of the controller, our method 
while the sparsity constraints are imposed on the controller's left coprime factors, outlying a new concept of structural constraints in distributed control. 
%It is this particular type of constraints that induces the distributed implementation of resulted controllers as a network of linear  time--invariant subsystems, where the sub--controller on board each vehicle uses only  information from its predecessor in the string.
The proposed scheme is amenable to optimal controller design via norm based costs, it guarantees string stability and eliminates the accordion effect from the behavior of the platoon. We also introduce a synchronization mechanism for the exact compensation of the time delays induced by the wireless broadcasting of information.%\marginpar{scos/ adaugat; am pus si grantul meu} %A comprehensive analysis detailing the underlying connections with previously studied platooning control strategies and with existing distributed/decentralized control architectures is  performed.
\end{abstract}

\section{Introduction}

Formation control for platooning of autonomous vehicles has been a longstanding problem in control theory for almost fifty years, going back to the early days of {\em intelligent vehicle highway systems}  \cite{TAC1966}. Since no available control solution was  deemed completely satisfactory, considerable research efforts are still  being spent \cite{Bamieh,Jovanovic,Kan,rick,nasol} motivated by the advent of assisted driving systems and the imminence of driverless vehicles.

The automated control system's objective in the platooning problem is to regulate the inter--vehicle spacing distances (to a pre--specified value) in the presence of disturbances caused by the road and traffic conditions. The problem could be completely solved within the classical  control framework, under the assumption that each vehicle has access, in real time, to an accurate measurement of its relative positions with respect to all its predecessors in the string (centralized control). It became clear from the very beginning that this scenario  is infeasible from several engineering practice standpoints, therefore the control strategies investigated in the literature look only at the situation in which the controller on board each vehicle has access to local measurements only.%\marginpar{modified topic}

The most common premise is that the measurement available to each agent is the instantaneous distance with respect to the vehicle in front of it (measured using onboard sensors), resulting in a control strategy dubbed {\em predecessor follower}. Although (under the standard assumption of linear dynamics for each vehicle) the internal stability of the aggregated platoon can be achieved, this basic architecture was proved to exhibit a severe drawback known as ``string instability'' \cite{Hedrick}. While several %equivalent
formal definitions of string instability exist \cite{TAC1996}, they essentially describe the phenomenon of amplification downstream the platoon of the response to a disturbance at a single vehicle. Correspondingly, we will designate as ``string stable'' those feedback configurations for which the $\mathcal{H}_\infty$ norm of the transfer function from the disturbances at any given vehicle to any point in the  aggregated closed--loop of the platoon, does not formally depend on the number of vehicles in the string  \cite{TAC2010}.

If the vehicles dynamics contain a double integrator, then for  predecessor follower schemes of homogeneous platoons  with identical sub--controllers, string instability will  occur irrespective of the chosen linear control law  \cite{Hedrick}, as it is an effect of fundamental limitations of the feedback--loop. This shortcoming cannot be overcome by adding the relative distances with respect to multiple preceding vehicles to the measurements available to each sub--controller  (multiple look--ahead schemes) \cite{Cook, bidirect}, nor can it be overcome by including the successor's relative position (bi--directional control)  \cite{Peppard, Vinnicombe}, without exacerbating the so--called {\em accordion effect} (or settling time) \cite{TAC2010}. The heterogeneous controller tuning  proposed \cite{Brogliato,Davison, Hespanha} offers some benefits for string stability but only at the steep expense of the {\em integral absolute error} specification \cite{TAC2010}. The authors of \cite{TH1,TH2,TH3} proved that (unlike constant inter--vehicle policies) a class of inter--spacing policies dependent of the vehicle's velocity  (dubbed ``time--headways'') can achieve string stability, but only for sufficiently large time--headways which will impair the ``tightness'' of the formation.

A more elaborate, optimal control approach to platooning was also investigated, but the issues pertaining to the increase in size of the platoon persist. In \cite{Bamieh} optimal quadratic regulators for platooning are proposed while showing that for an increasing number of vehicles the resulted LQR problems become ill--posed. It was later proved in \cite{Mitra} that ``local'' measurements based distributed controllers cannot achieve ``coherent'' coordination in large--sized platoons,  results further extended in \cite{Jovanovic} as to achieve superior coherence formation via optimal controllers.

Remarkable performance in terms of both string stability and sensitivity to disturbances can be achieved by the so--called {\em leader--follower} policies \cite{Hedrick}, in which each member of the string has access to the state of the leader's vehicle or an estimate  of the leader's state. However, this approach raises the immediate concern of eventual disruptions in the broadcast of the  leader's state to the follower vehicles. Furthermore, the comprehensive analysis done in \cite{rick} shows that the performance of {\em leader--follower} schemes entailing the transmission the leader's state or its estimate is irremediably altered by the presence of the communications delays induced by the physical limitations of existing wireless systems.

A particularly interesting control architecture \cite{Ploeg2} (named Cooperative Adaptive Cruise Control -- or CACC)  was recently proposed and further adapted as to include an $\mathcal{H}_\infty$ optimality criterion \cite{nasol}. The scheme is based on the elegant results earlier reported in \cite{Naus}, where each vehicle broadcasts its acceleration to its successor in the platoon. However, the performance of the control algorithm proposed in \cite{nasol} is compromised by the presence of (wireless) communications induced delays \cite{nasol2}, since string stability can only be achieved for time--headways  policies,  in accordance with the classical results reported in \cite{TH1,TH2,TH3}. The experimental validation from \cite{nasol2} shows that even for small latencies of the wireless communications systems ({\em e.g.} $20$ milliseconds), relatively large time--headways are needed in order to guarantee string stability.

\subsection{Contributions of This Paper}

In this paper we provide a novel distributed control architecture for heterogeneous platoons of linear time--invariant autonomous vehicles.  We introduce a generalization of the concept of {\em leader--follower} controllers for which we provide a Youla--like parameterization. The structural constraints imposed on the distributed controller can be recast as  sparsity constraints on the Youla parameter, resulting in the tractability of the optimal controller synthesis via $\mathcal{H}_2/\mathcal{H}_\infty$ norm based costs. The distributed implementation allows for the sub--controller on board each vehicle to use only  information from its predecessor in the string.  The proposed architecture is able to compensate the communications induced time delays and can be implemented using existing  high accuracy GPS time base synchronization mechanisms. Such synchronization mechanisms will entail fixed, commensurate and point--wise time delays, thus avoiding the inherent difficulties caused by time--varying or stochastic  or distributed delays. Our approach improves on existing methods in the following essential aspects:
\begin{itemize}
\item[$\bullet$] guarantees string stability in the presence of time delays induced by the wireless communications \cite{nasol2, nasol,rick}; 
\item[$\bullet$] eliminates the {\em accordion effect} from the behavior of the platoon  \cite{TAC2010};  %and insures a small settling time
\item[$\bullet$] achieves string stability, even for constant inter--spacing policies \cite{nasol2, Naus}; % \marginpar{punctuatie, scos ``can"}
\item[$\bullet$] allows for optimal controller design via norm based costs, while accommodating heterogeneous  vehicles equipped with heterogeneous controllers \cite{TAC2010, nasol, rick}.
\end{itemize}

Classical methods in distributed/decentralized control formulate the structural constraints on the controller  as sparsity constraints on its transfer function matrix. In turn, our approach formulates certain  sparsity constraints on the controller's left coprime factors \cite{R,CDC,SW1} (that have no meaningful implication on the sparsity of the controller's transfer function matrix), thus outlying a novel concept of structural constraints in the distributed control of multi--agent systems. It is precisely this particular type of constraints on the coprime factors of the controller that induces the distributed implementation of resulted controllers as a network of linear  time--invariant subsystems, such that the sub--controller on board each vehicle uses only  information from its predecessor in the string. This approach to distributed controllers  as linear dynamical networks hinges on the concept of {\em dynamical structure functions}, originally introduced in \cite{Sean08, SW2} and further developed in \cite{SW3,SW4,SW5,SW6,SW7,SW8,SW9,SW10, SW11}.

In addition, we provide a unifying analysis to platooning control, detailing the intrinsic connections of our scheme with the {\em leader--follower} control policies \cite{Hedrick}, with the CACC design  \cite{nasol, nasol2} and  with previous results in distributed/decentralized control such as  {\em quadratic invariant} architectures \cite{Michael's_I3E}. Our analysis concludes that for platooning control  the only  ``local'' measurements needed at each agent in the string are: the inter--spacing distance with respect to its predecessor and the predecessor's control signal, to be used in conjunction with the knowledge of the predecessor's dynamical model. This is an important point since it clarifies previous conjectures \cite[Section~V--B]{nasol},\cite[pp.~5]{TRB}, \cite{nasol2} that additional  information from multiple predecessors (``beyond the direct line of sight'') might lead to superior performance, since they provide a ``preview of disturbances''.

\subsection{Paper Organization}

Section~\ref{PGF} introduces the notation and the instrumental expressions of the doubly coprime factorization within the standard unity feedback control scheme, while in Section~\ref{TPMCP} we provide the precise formulation of the platooning control problem as a disturbances attenuation problem and we also briefly review the {\em predecessor follower} and {\em leader--follower} control policies for platooning.   In Section~\ref{MR} we introduce the concept of {\em leader information} controller, we provide the class of all such controllers associated with a given platoon of vehicle, we discuss the controller's distributed implementation and we point out the intrinsic connections with {\em leader--follower} type policies. Section~\ref{tuneup} is dedicated to the design methods for {\em leader information} controllers, outlying the inherent structural properties of the scheme but also the achievable performance in disturbances attenuation, quantified via  $\mathcal{H}_2/\mathcal{H}_\infty$  norm--based costs. %A supplemental discussion underlying the connections with {\em quadratic invariance} is also performed. 
Section~\ref{delaydelay} presents a synchronization based mechanism that can completely compensate for the communications induced time delays specific to the physical implementation of {\em leader information} controllers. A comprehensive analysis detailing the underlying connections with previously studied platooning control strategies and with existing distributed/decentralized control architectures including {\em quadratic invariance} is  performed in Section~\ref{soa}. A numerical example displaying the benefits of our novel control scheme is presented 
in Section~\ref{nex}, while Section~\ref{concl} draws the conclusions.%\marginpar{adaugat sectiunile lipsa}  

\section{Preliminaries and General Framework} \label{PGF}
\subsection{Basic Notation}
Most of the notation we use  in this paper is quite standard in the systems and control literature.
%the Laplace transform and inverse Laplace transform operator are denoted by $\mathcal{L}$ and $\mathcal{L}^{-1}$.
The Laplace transform complex variable is $s\in \mathbb{C}$ and the Laplace transform of the real signal $u(t)$  will be typically denoted with  $u(s)$ and can be distinguished by the change in the argument. % and $\mathcal{L}^{-1}\{ U(s)\}=u(t)$. %The notation $P(s) \star u(t)$ is used to denote the time response with zero initial conditions of a linear and time invariant (LTI) system with transfer function $P(s)$ and input $u(t)$.
When the time argument $\cdot (t)$ or the frequency argument $\cdot (s)$ can be inferred from the context or is irrelevant, it is omitted.%\marginpar{argument omis}

 Table~\ref{NotationTRIL} contains notation for certain  structured  matrices which will be used in the sequel. We also assume the following notation:

\begin{table}
\renewcommand{\arraystretch}{0.8}
\caption[m1]{Notation for structured matrices }
\label{NotationTRIL} \centering
\begin{tabular}{|c|c|}
\hline
$\quad \quad  \mathcal{D}\big\{ d_1,d_2,\dots,d_n \big\} \quad  \quad$ &
$\quad \quad \quad \quad    \ba{ccccc} d_1  & 0 & 0 &   \dots & 0 \\
                   0 &  d_2   & 0 &\dots & 0 \\
                   0 &  0   & d_3 & \dots & 0 \\
                   \vdots & \vdots & \vdots & \ddots & \vdots\\
                   0 &  0   & 0 & \dots & d_n \\  \ea \quad \quad \quad \quad  $ \\

\hline
$\quad \quad  \mathcal{T}\big\{ t_1,t_2,\dots,t_n \big\}  \quad  \quad$ &
$\quad \quad \quad \quad   \ba{rrrrrr} %b_1  & 0 & 0 &   \dots & 0 & 0 \\
                   t_1 &  0   & 0 &\dots &0 & 0 \\
                   t_2 &  t_1   & 0 & \dots & 0& 0 \\
                      t_3 &  t_2   & t_1 & \dots & 0& 0 \\
                   \vdots & \vdots & \vdots & \ddots &\vdots & \vdots\\
                  t_{n-1} &  t_{n-2}   & t_{n-3}  & \dots & t_1 & 0 \\
                 t_n &  t_{n-1}   & t_{n-2}  & \dots & t_2 & t_1 \\  \ea \quad  \quad \quad \quad $ \\
\hline
$\quad \quad  \mathcal{R}\big\{ r_1,r_2,\dots,r_n \big\}  \quad  \quad$ &
$\quad \quad \quad \quad   \ba{ccccc} r_1  & 0 & 0 &   \dots & 0 \\
                   r_2 &  r_2   & 0 &\dots & 0 \\
                   r_3 &  r_3   & r_3 & \dots & 0 \\
                   \vdots & \vdots & \vdots & \dots & \vdots\\
                   r_n &  r_n   & r_n & \dots & r_n \\  \ea \quad  \quad \quad \quad $ \\

\hline
\end{tabular}
\end{table}

%\marginpar{usoare modificari definitii}

\begin{tabular}{l l} \label{nott}
& \\

$x\overset{def}{=} y$ &  $x$ is by definition equal to $y$ \\
%$T= \mathcal{B}\big( b_1,b_2,\dots,b_n \big)$ & for $b_1=b_2=\dots=b_n=1$. \\
%$T^{-1}= \mathcal{R}\big( r_1,r_2,\dots,r_n \big)$ & for $r_1=r_2=\dots=r_n=-1$\\
 $\mathbb{R}(\la) $ & Set  of all real--rational  transfer functions.  \\
 $\mathbb{R}(\la)^{p \times q} $ & Set of $p \times q$ matrices having all entries in $\mathbb{R} (\la)$ \\
 LTI & Linear and Time Invariant \\
 TFM & Transfer Function Matrix \\ %or equivalently $\bigcup_{p,q} \mathbb{R}(\la)^{p \times q}$\\
%$G(s)$ & The TFM of the LTI (strictly proper) plant \\
%$K(s)$ & The TFM of the LTI controller  \\ %is an element of $\mathbb{R}(\la)^{n \times n}$ \\
%$H(G,K)$ & The  closed--loop TFM from $\displaystyle [\nu^T \; \; w^T\; \: ]^T$ to $\displaystyle [z^T \;\; u^T \;\: ]^T$  in Figure~\ref{2Block} below\\
$Q_{ij}$ & The $i$--th row, $j$--th column entry of $Q\in\mathbb{R}(\la)^{p \times q}$ \\
$P \star u(t)$ &  The time response with zero initial conditions of an  (LTI) system \\
&  with TFM $P$ and input $u(t)$ \\
$T_{z_iw_j}$ & The $i$--th row, $j$--th column entry of the TFM $T_{zw}\in\mathbb{R}(\la)^{p \times q}$, mapping \\
&input vector $w$ to output vector $z$ \\ %equivalently the scalar transfer function from the $j$--th entry of vector \\
%& $w$ to the $i$--th entry of vector  $z$ \\
 \end{tabular}

%\footnotetext[1]{Here the indeterminate $\la$ is either $s$ for continuous--time or $z$ for discrete--time systems, respectively. The $\lambda$ argument of a TFM may be omitted when its presence is clear from the context.}
%\footnotetext[2]{$\Omega$ is either the open left complex half--plane for continuous--time or the open complex unit circle for discrete--time, respectively.}

%\begin{rem}
%When the time argument $\cdot (t)$ or the frequency argument $\cdot (s)$ can be inferred from the context it is omitted.
%\end{rem}

%\newpage
\subsection{The Standard Unity Feedback Loop}

We focus on the standard {\em unity feedback}  configuration of Figure~\ref{2Block},
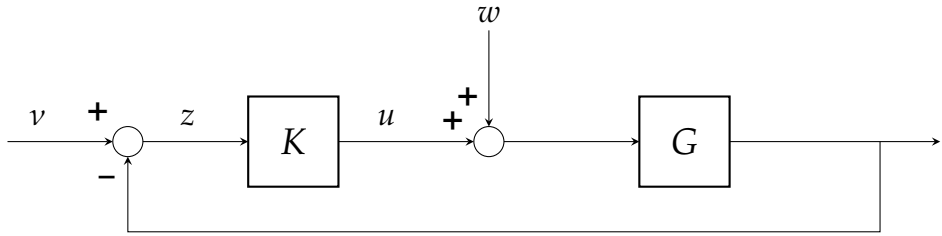
\begin{figure}[h]
\begin{tikzpicture}[scale=0.4]
\draw[xshift=5cm, >=stealth ] [->] (0,0) -- (3.5,0);
\draw[ xshift=5cm ]  (4,0) circle(0.5);
\draw[xshift=5cm] (3,1)   node {\bf{+}} (1,0.8) node {$\nu$};
\draw [xshift=5cm](6,0.8)   node {$z$} ;
\draw[ xshift=5cm,  >=stealth] [->] (4.5,0) -- (8,0);
\draw[ thick, xshift=5cm]  (8,-1.5) rectangle +(3,3);
\draw [xshift=5cm](9.5,0)   node {\large{$K$}} ;
\draw[ xshift=5cm,  >=stealth] [->] (11,0) -- (15.5,0);
\draw[ xshift=5cm ]  (16,0) circle(0.5cm);
\draw [xshift=5cm](12.6,0.8)   node {$u$} ;
\draw [xshift=5cm] (14.8,0.7)   node {\bf{+}};
\draw[  xshift=5cm,  >=stealth] [->] (16,3.7) -- (16,0.5);
\draw [xshift=5cm] (16,3.6)  node[anchor=south] {$w$}  (15.3,1.5)  node {\bf{+}};
\draw[  xshift=5cm,  >=stealth] [->] (16.5,0) -- (21,0);
\draw[ thick, xshift=5cm ]  (21,-1.5) rectangle +(3,3) ;
\draw [xshift=5cm] (22.5,0)   node {\large{$G$}} ;
%\draw[  xshift=11cm,  >=stealth] [->] (19,0) -- (19,4) -- (21,4);
%\draw[ thick,xshift=11cm ]  (21,2.5) rectangle +(3,3) ;
%\draw [xshift=11cm] (22.5,4)   node {\large{$G_s$}} ;
\draw[ xshift=5cm,  >=stealth] [->] (24,0) -- (31,0);
%\draw[ xshift=11cm ]  (27,0) circle(0.5cm);
%\draw[xshift=11cm,  >=stealth] [->] (24,4) -- (27,4) -- (27,0.5);
%\draw[xshift=11cm] (25.7,0.7)   node {\bf{+}}  (26.4,1.7)   node {\bf{+}};
%\draw[ xshift=11cm,  >=stealth] [->] (27.5,0) -- (31,0);
\draw[ xshift=5cm,  >=stealth] [->] (29,0) -- (29,-3) -- (4,-3) -- (4, -0.5);
\draw [xshift=5cm] (3.3,-1.3)   node {\bf{--}};
%\draw [xshift=11cm] (30,1)   node {$y_G$};
\useasboundingbox (0,5);
\end{tikzpicture}
\caption{Standard unity feedback loop of the plant $G$ with the controller $K$}
\label{2Block}
\end{figure}
where $G$ is a multivariable (strictly proper) LTI plant and $K$ is an LTI controller. Here, $w$ and $\nu$  are the input disturbance and sensor noise, respectively and $u$  and $z$ are the controls and measurements vectors, respectively.  %For general information about  Linear Feedback Control Systems we refer to \cite{Zhou,V, Francis}.
Denote by  %\marginpar{scos footnote, nu are rost aici}
% \begin{equation} \label{HGKK}
%\ba{c} z \\ u \ea = \ba{cc} T_{zw} & T_{z\nu} \\ T_{uw} & T_{u\nu} \ea \ba{c} w \\ \nu \ea, 
%\end{equation}
%where %the following abbreviated notation for the relevant sub--block TFMs of (\ref{HGKK})
\begin{equation} \label{crazy}
H(G,K) = \ba{cc} T_{zw} & T_{z\nu} \\ T_{uw} & T_{u\nu} \ea \overset{def}{=}
\ba{cc}  -(I+GK)^{-1}G & (I+GK)^{-1}\\  -K(I+ KG)^{-1}G & K(I+GK)^{-1} \ea
 %\begin{equation} \label{tzw}
%T_{zw}\overset{def}{=}-(I+GK)^{-1}G
%\end{equation}
%\begin{equation} \label{tuw}
%T_{uw}\overset{def}{=}-K(I+ KG)^{-1}G
%\end{equation}
%\begin{equation}
%T_{z\nu}\overset{def}{=}(I+GK)^{-1}
%\end{equation}
%\begin{equation}
%T_{u\nu}\overset{def}{=} K(I+GK)^{-1}
%\end{equation}
\end{equation}
the closed--loop TFM of Figure~\ref{2Block} from the exogenous signals $\displaystyle [w^T \; \; \nu^T\; \: ]^T$ to $\displaystyle [z^T \;\; u^T \;\: ]^T$%\marginpar{modificat def TFM}.
%to denote the TFMs from $w$ to $z$, from $w$ to $u$, from $\nu$ to $z$ and from $\nu$ to $u$, respectively. 
We say a certain TFM is {\em stable} if it has all its poles in the  open left complex half--plane, and {\em unimodular}
if it is square, proper, stable and has a stable inverse.
If $H(G,K)$ is  stable we say that $K$ is a {\em stabilizing controller} of $G$, or equivalently  that $K$ {\em stabilizes} $G$. %If a stabilizing controller of $G$ exists, we say that $G$ is {\em stabilizable}.

%For the complete expressions of $H(G,K)$ in terms of $G$ and $K$, we refer the reader to \cite[Ch.~5.1, (7)]{V}.
%\begin{equation}
%\label{two_block}
%H(G,K)= \ba{cc}  (I+GK)^{-1}   & -G(I+KG)^{-1} \\
%                  K(I+GK)^{-1} &  (I+KG)^{-1}  \\ \ea
%\end{equation}

%****************************************************************************************************************************
%****************************************************************************************************************************
%****************************************************************************************************************************
%****************************************************************************************************************************
%****************************************************************************************************************************
%****************************************************************************************************************************

\subsection{Coprime and Doubly Coprime Factorization for LTI Systems} \label{2}

Given a square plant $G \in \mathbb{R}(\la)^{n \times n}$, a {\em right coprime factorization} of $G$  is a fractional representation of the form $G = NM^{-1}$ with  both factors $N, M \in \mathbb{R}(\la)^{n \times n}$ being stable and for which there exist $X,Y \in \mathbb{R}(\la)^{n \times n} $ also stable, satisfying ${Y}M + {X}N = I_n$  (\cite[Ch.~4, Corollary~17]{V}), with $I_n$ being the identity matrix. Analogously, a {\em left  coprime factorization} of $G$  is defined by $G = \tilde{M}^{-1}\tilde{N}$,  with $\tilde{N}, \tilde M \in \mathbb{R}(\la)^{n \times n} $ both stable and satisfying $\tilde{M}\tilde{Y} + \tilde{N}\tilde{X} =I_n$, for certain stable TFMs $\tilde{X}, \tilde Y \in \mathbb{R}(\la)^{n \times n}$. % Due to the natural interpretation of the coprime factorizations  as fractional representations, the invertible $\tilde M$ and $M$ factors are sometimes called the ``denominator'' TFMs of the coprime factorization.

\begin{defn} \cite[Ch.4, Remark pp. 79]{V} %{\bf(DCF)}
A collection of eight  stable TFMs $\big(M, N$, $\tilde M, \tilde N$, $X, Y$, $\tilde X, \tilde Y\big)$
is called a  {\em doubly coprime factorization}  of $G$  if   $\tilde M$ and $M$ are invertible, yield the
factorizations
$G=\tilde{M}^{-1}\tilde{N}=NM^{-1}$, and satisfy the following equality  (B\'{e}zout's identity):
\bb{dcrel}
\ba{cc}   -\tilde N &  \tilde M \\ Y &  X \ea
\ba{cc}  - \tilde X & M \\   \tilde Y & N \ea = I_{2n}.
\ee
\end{defn}

\begin{theorem}
\label{Youlaaa}
{\bf (Youla)} \cite[Ch.5, Theorem 1]{V} Let  $\big(M, N$, $\tilde M, \tilde N$, $X, Y$, $\tilde X, \tilde Y\big)$ be a doubly coprime factorization of $G$. Any controller $K_Q$ stabilizing the plant $G$, in the feedback interconnection of Figure~\ref{2Block}, can be written as
\begin{equation}
\label{YoulaEq}
K_Q=Y_Q^{-1}X_Q = \tilde X_Q \tilde Y_Q^{-1},
 \end{equation}
where
$X_Q$, $\tilde{X}_Q$, $Y_Q$ and $\tilde{Y}_Q$ are defined as:
\begin{equation} \label{EqYoula4}
X_Q  \overset{def}{=}  X+Q \tilde{M}, \quad 
\tilde{X}_Q  \overset{def}{=}  \tilde{X}+MQ, \quad 
Y_Q  \overset{def}{=}  Y - Q \tilde{N}, \quad 
\tilde{Y}_Q  \overset{def}{=}  \tilde{Y}-NQ  %\label{EqYoula7}
\end{equation}
for some stable $Q$ in $\mathbb{R}(\la)^{n\times n}$. It also holds that $K_Q$ from (\ref{YoulaEq}) stabilizes $G$, for any stable $Q$ in $\mathbb{R}(\la)^{n\times n}$.
\end{theorem}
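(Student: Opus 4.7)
The plan is to prove the two implications separately: every $K_Q$ from (\ref{EqYoula4}) with stable $Q$ stabilizes $G$, and conversely every stabilizing controller arises in this form.

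For the first direction I would first extract from the B\'ezout identity (\ref{dcrel}) the four scalar relations $YM+XN=I$, $\tilde{M}\tilde{Y}+\tilde{N}\tilde{X}=I$, $X\tilde{Y}=Y\tilde{X}$, and $\tilde{M}N=\tilde{N}M$. Substituting (\ref{EqYoula4}) into the first two and using $\tilde{M}N-\tilde{N}M=0$ to cancel all $Q$--dependent cross terms yields the ``perturbed'' B\'ezout identities
\begin{equation*}
Y_Q M + X_Q N = I, \qquad \tilde{M}\tilde{Y}_Q + \tilde{N}\tilde{X}_Q = I.
\end{equation*}
With $G=\tilde{M}^{-1}\tilde{N}$ and $K_Q=\tilde{X}_Q\tilde{Y}_Q^{-1}$ this gives $(I+GK_Q)^{-1}=\tilde{Y}_Q\tilde{M}$, while $G=NM^{-1}$ and $K_Q=Y_Q^{-1}X_Q$ give $(I+K_QG)^{-1}=MY_Q$. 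Plugging both expressions into (\ref{crazy}) and simplifying each of the four blocks, the closed--loop TFM collapses to the compact rank--one factorization
\begin{equation*}
H(G,K_Q) \;=\; \begin{pmatrix} \tilde{Y}_Q \\ \tilde{X}_Q \end{pmatrix}\begin{pmatrix} -\tilde{N} & \tilde{M} \end{pmatrix},
\end{equation*}
whose entries are stable by construction, so $K_Q$ stabilizes $G$.

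For the converse, given a stabilizing $K$, I would fix any right coprime factorization $K=UV^{-1}$ over the ring of stable proper rational matrices. The standard characterization of closed--loop internal stability in terms of coprime factors forces $\tilde{M}V+\tilde{N}U$ to be unimodular; after absorbing its inverse into $U$ and $V$ on the right, we may normalize to $\tilde{M}V+\tilde{N}U=I$. I then \emph{define}
\begin{equation*}
Q \;:=\; YU - XV,
\end{equation*}
which is stable as a polynomial combination of stable matrices. The verification $K=Y_Q^{-1}X_Q$ reduces to $X_Q V = Y_Q U$: expanding, $X_Q V - Y_Q U = XV + Q\tilde{M}V - YU + Q\tilde{N}U = -(YU - XV) + Q(\tilde{M}V+\tilde{N}U) = -Q + Q = 0$, using the normalization and the definition of $Q$. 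The companion equality $Y_Q^{-1}X_Q=\tilde{X}_Q\tilde{Y}_Q^{-1}$ holds \emph{for any} stable $Q$: it is equivalent to $X_Q\tilde{Y}_Q=Y_Q\tilde{X}_Q$, and this expands, via $YM+XN=\tilde{M}\tilde{Y}+\tilde{N}\tilde{X}=I$, $X\tilde{Y}=Y\tilde{X}$, and $\tilde{M}N=\tilde{N}M$, to $0$.

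The step I expect to be the main obstacle is the normalization at the start of the converse: it implicitly invokes the coprime--factor characterization of internal stability, which is classical but is not explicitly laid out in the excerpt and must be established (or cited) as a prerequisite. Once granted, the parameter $Q=YU-XV$ is essentially forced, and all remaining work is algebraic bookkeeping that is routine \emph{provided} one respects the non--commutativity of matrix products throughout.
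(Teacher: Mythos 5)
The paper does not actually prove Theorem~\ref{Youlaaa} --- it quotes it from \cite[Ch.5, Theorem~1]{V} --- and your argument is precisely the standard proof from that reference: the perturbed B\'ezout identities $Y_QM+X_QN=I$ and $\tilde M\tilde Y_Q+\tilde N\tilde X_Q=I$ yield the stable closed-loop factorization $H(G,K_Q)=\big[\,\tilde Y_Q^T\;\;\tilde X_Q^T\,\big]^T\big[-\tilde N\;\;\tilde M\big]$ (which is exactly the paper's Lemma~\ref{HGK}) for one direction, and $Q:=YU-XV$ with the normalized right coprime factorization $K=UV^{-1}$ for the converse; all of your algebra checks out against the sign conventions of (\ref{dcrel}). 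Beyond the internal-stability characterization you already flag as an imported prerequisite, the only detail left implicit is the invertibility (well-posedness) of $Y_Q$ and $\tilde Y_Q$, which follows by evaluating $Y_QM+X_QN=I$ at $s=\infty$ using the strict properness of $G$ (hence of $N$); this is the same housekeeping Vidyasagar's full proof carries out.
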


\begin{rem}
\label{rem:232pm27feb2012}
Starting from any doubly coprime factorization (\ref{dcrel}), the following identity  \begin{equation}\label{dcrelQ}
\ba{cc} -\tilde N &  \tilde M \\  Y_Q&  X_Q  \ea
\ba{cc}  -\tilde X_Q & M \\   \tilde Y_Q & N \ea = I_{2n}.% , \qquad Q \in \mathbb{A}^{n \times n}
\end{equation}
provides an alternative doubly coprime factorization of $G$, for any stable $Q\in \mathbb{R}(\la)^{n\times n}$.% it follows that  $\big(M, N, \tilde M, \tilde N, X_Q, Y_Q, \tilde X_Q, \tilde Y_Q \big)$ from Theorem~\ref{Youlaaa} is also a doubly coprime factorization of $G$:

\end{rem}

\begin{lem} (\cite[(7)/~pp.101]{V}) \label{HGK}
For any stabilizing controller $K_Q$ from (\ref{YoulaEq}), %in Theorem~\ref{Youlaaa},
the expression of the closed--loop $H(G,K_Q)$  \cite[(32)/~pp.107]{V}
takes the form \cite[(32)/~pp.107]{V}

% from $\displaystyle [w^T \; \; \nu^T\; \: ]^T$ to $\displaystyle [z^T \;\; u^T \;\: ]^T$  in  Figure~\ref{2Block}  has the expression \cite[(32)/~pp.107]{V}

%\noindent and furthermore

%\begin{equation} \label{folos2}
 %\ba{cc}  (I+GK_Q)^{-1} & -(I+GK_Q)^{-1}G  \\ K_Q(I+GK_Q)^{-1} & (I+ K_QG)^{-1} \ea =
 %\ba{cc} I-N  X_Q &  - NY_Q \\ M X_Q & MY_Q \ea
%\end{equation}

%\noindent or equivalently %(from the ``reversed'' Bezout, page 15 in my notes)

\begin{equation} \label{folos}
\ba{cc}  T_{zw} & T_{z\nu}\\  T_{uw} & T_{u\nu} \ea = \ba{cc}  - \tilde Y_Q \tilde N  & \;\; \tilde Y_Q \tilde M \\   -\tilde X_Q \tilde N & \; \;\tilde X_Q \tilde M \ea
\end{equation}
where the block--wise partition in the identity (\ref{folos}) is in accordance with  the definitions of (\ref{crazy}).
\end{lem}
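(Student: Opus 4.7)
The plan is a direct verification by substitution, with the Bézout identity (\ref{dcrelQ}) doing all the real work. I would start by reading off the four scalar identities contained in (\ref{dcrelQ}). Multiplying the two block matrices and equating with $I_{2n}$ yields, block by block,
\[
\tilde N \tilde X_Q + \tilde M \tilde Y_Q = I_n, \qquad \tilde M N = \tilde N M, \qquad X_Q \tilde Y_Q = Y_Q \tilde X_Q, \qquad Y_Q M + X_Q N = I_n,
\]
of which only the first will be needed directly (the second is just $G=\tilde M^{-1}\tilde N = NM^{-1}$, which we already know).

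Next I would express the plant and the controller in the left-fractional form best suited to the computation, namely $G=\tilde M^{-1}\tilde N$ and $K_Q=\tilde X_Q \tilde Y_Q^{-1}$ from (\ref{YoulaEq}), and factor the return-difference operator:
\[
I+GK_Q \;=\; I+\tilde M^{-1}\tilde N\,\tilde X_Q\,\tilde Y_Q^{-1} \;=\; \tilde M^{-1}\bigl(\tilde M\tilde Y_Q+\tilde N\tilde X_Q\bigr)\tilde Y_Q^{-1} \;=\; \tilde M^{-1}\tilde Y_Q^{-1},
\]
where the first Bézout identity above collapses the middle factor to $I_n$. Inverting gives
\[
(I+GK_Q)^{-1} \;=\; \tilde Y_Q\,\tilde M,
\]
which is precisely the $T_{z\nu}$ block of (\ref{folos}).

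The remaining three blocks then fall out by one multiplication each. For $T_{zw}$ I would right-multiply the previous expression by $-G=-\tilde M^{-1}\tilde N$ and obtain $-\tilde Y_Q\tilde N$ after the $\tilde M$ and $\tilde M^{-1}$ cancel. For $T_{u\nu}$ I would left-multiply by $K_Q=\tilde X_Q\tilde Y_Q^{-1}$ so that the $\tilde Y_Q^{-1}$ and $\tilde Y_Q$ cancel, leaving $\tilde X_Q\tilde M$. For $T_{uw}$ I would combine both cancellations (using the push-through identity $K(I+GK)^{-1}=(I+KG)^{-1}K$ to reconcile the $(I+KG)^{-1}$ form appearing in (\ref{crazy}) with the $(I+GK)^{-1}$ form used above) and obtain $-\tilde X_Q\tilde N$.

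There is no real obstacle; the only subtlety is to choose the correct coprime form (left for $G$, left also for $K_Q$) so that the Bézout identity appears in the middle of $I+GK_Q$. Had I used $G=NM^{-1}$ and $K_Q=Y_Q^{-1}X_Q$ instead, the same algebra would go through for $I+K_QG$ via the \emph{other} Bézout identity $Y_QM+X_QN=I_n$, and the identical conclusions would follow by the push-through identity; this symmetric derivation could be added as a sanity check.
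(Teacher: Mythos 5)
Your derivation is correct and is exactly the classical computation behind this lemma: the paper itself supplies no proof, merely citing Vidyasagar, and your argument (collapse the return difference $I+GK_Q=\tilde M^{-1}(\tilde M\tilde Y_Q+\tilde N\tilde X_Q)\tilde Y_Q^{-1}$ via the B\'ezout identity from (\ref{dcrelQ}), then peel off the remaining three blocks by one cancellation each) is the standard one found in the cited source. The only point worth flagging is that the entry $-K(I+KG)^{-1}G$ as literally printed in (\ref{crazy}) should be read as $-K(I+GK)^{-1}G=-(I+KG)^{-1}KG$, the true map from $w$ to $u$; this is precisely the reading your push-through step supplies, so your reconciliation is the right one.
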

%\begin{rem} \label{cltfm}
%In the {\em disturbance attenuation} problem, the closed loop map of interest  (for the norm--based cost) is  the TFM from $w$ to $z$ which we  denote with $T_{zw}(s)$ and can be seen from (\ref{HGKK}) to be equal to $-(I+GK_Q)^{-1}G$. In order to cope with actuators saturation issues, sometimes the  TFM from $w$ to $u$ denoted with $T_{uw}(s)$ is included in the analysis. Also from (\ref{HGKK}), it follows that $T_{uw}$ equals $-K_Q(I+ K_QG)^{-1}G$.
%\end{rem}
%****************************************************************************************************************************
%****************************************************************************************************************************
%****************************************************************************************************************************
%****************************************************************************************************************************
%****************************************************************************************************************************
%****************************************************************************************************************************

\section{The Platoon Control Problem} \label{TPMCP}

%\subsection{The Dynamical Model of Vehicles}

We consider a platoon of one leader and $n\in \mathbb{N}$ follower vehicles traveling in a straight line
along a highway, in the same (positive) direction of an axis with origin at the starting point of the leader.
\underline{Henceforth, the ``$0$'' index  will be reserved for the leader.} % \marginpar{mutat de mai jos}
%Provided that each is equipped with an Electrohydraulic Braking and Throttle actuation system (indispensable for automated control) then this  actuating system can incorporate an input--output linearization of the breaking system and one instance of model linearization would lead  to a simple LTI,  {\em continuous time}, point mass model (\cite[(1)/pp. 4270]{Naus}) with first order actuator dynamics
%\begin{equation} \label{masina}
%G_k(s)=\dfrac{1}{m_k\: s^2(\tau_k s+1)} \quad \mathrm{for} \quad 1 \leq k \leq n
%\end{equation}
%\noindent  Such a model would not be valid for certain special situations such as emergency breaking or high speed tight turning, but it  can effectively describe vehicles platooning in highway traffic under regular conditions. These considerations lead to the following assumption
%\begin{assumption} \label{A1} (Vehicle Dynamical Model)
We denote by  $y_0(t)$ the time evolution of the position of the {\em leader vehicle}, which can be regarded as the ``reference'' for the entire platoon.
The dynamical model for the $k$--th  vehicle in the string, ($0 \leq k \leq n$)  is described by its corresponding LTI, continuous--time, finite dimensional transfer function $G_k(\la)$  from its controls $u_k(t)$ to its position $y_k(t)$ on the roadway. While in motion, the $k$--th vehicle is affected  by the disturbance  $w_k(t)$, additive to the control input $u_k(t)$, specifically
\begin{equation} \label{dacia}
\displaystyle y_k(t)=G_k\star\big(w_k(t)+u_k(t)\big).
\end{equation}
%for $0\leq k \leq n$.  
For the leader's vehicle  we make  the distinct specification that the control signal $u_0(t)$ is not assumed to be automatically generated (we do not assume the existence of a controller on board the leader's vehicle). Actually, both $u_0$ and $w_0$ act as {\em reference} signals for the entire platoon.% \marginpar{reformulat}%The leader is also  subject to the additive disturbance $w_o(t)$ to its  input, specifically %$y_o(t)=G_\ell(s)\star \big( u_o(t)+w_o(t)\big)$.
%\begin{equation} \label{zerozero}
%y_o(t)=G_\ell(s)\star \big( u_o(t)+w_o(t)\big)
%\end{equation}

%This assumption is necessary in order to perform the analysis of the effects of the disturbance $w_o(t)$ to the leader  on the interspacing error signals $z$
%We also assume that the control law is  linear and time--invariant.
%\end{assumption}
%\begin{assumption} \label{A2}(Linear and Time Invariant Control) We assume that the control law is  linear and time--invariant.
%\end{assumption}
%In this paper we only look at design procedures for LTI controllers. Arguments are  made in the sequel as to explain why this is a viable option from  the point of view of performance.
%In view of Assumption~\ref{A1}, we denote with $n$ the number of cars in the platoon and with $G_k(s)$, for $1\leq k \leq n$ the transfer function of each vehicle, from the controls $u_k(t)$ to its position $y_k(t)$ on the roadway.
%\begin{assumption}\label{A3} (Additive Disturbances on the Plant's Input) We make the assumption that while in motion, each vehicle is affected  by an additive disturbance  $w_k(t)$ to the control input $u_k(t)$, namely $\displaystyle y_k(t)=G_k(s)\star\big(w_k(t)+u_k(t)\big)$ for $1\leq k \leq n$.
%\end{assumption}

The goal is for every vehicle in the string to follow the leader while maintaining a certain inter--vehicle spacing distance which we denote with $\Delta$. If the inter--vehicle spacing policy is assumed to be constant then $\Delta$ is given as a pre--specified {\em positive} constant. Under the standard assumptions \cite{Hedrick, nasol, rick} that all vehicles start at rest ($\dot{y}_k(0)=$ for $0\leq k \leq n$) and from  the initial desired formation  ($y_k(0)=-k\Delta$ for $0\leq k \leq n$),  the time evolution for the position of each vehicle becomes  \cite[(1)/ pp. 1836]{Hedrick}:
\begin{equation} \label{vehicle}
y_k(t)=G_k\star\big(u_k(t)+w_k(t)\big) - k \Delta, \; \mathrm{for} \quad 0\leq k \leq n.
\end{equation}

We denote with $z_k(t)$ the inter--vehicle spacing {\em errors} defined as %from the desired inter--vehicle distances \cite[(1)/ pp. 1836]{Hedrick}
\begin{equation} \label{spacing}
z_k(t)\overset{def}{=}y_{k-1}(t) -y_k(t) - \Delta, \; \mathrm{for} \quad 1\leq k \leq n.
\end{equation}

The objective of the control mechanism is to attenuate the effect of the disturbances $w_k$, ($0\leq k \leq n$), and of the leader's control signal  $u_0$ at each member of the platoon, such as to maintain the spacing errors (\ref{spacing}) as close to zero as possible.%\marginpar{reformulat} 
This ``small errors'' performance  must be attained asymptotically (in steady state) and for a constant speed of the leader. The error signals relate to the performance metrics associated with the platoon (as an aggregated system) when considering safety margins and traffic throughput.

\begin{rem} \label{delta} There is no loss of generality in assuming that $\Delta=0$ in equation (\ref{vehicle}) %since this does not essentially alter  our subsequent analysis. % the expressions of the error signals (\ref{spacing}).
%We also explicitly mention that there is no loss of generality
or in considering vehicles with different lengths, since these parameters can be ``absorbed'' as needed in the spacing error signals (\ref{spacing}). These assumptions are standard in the literature  \cite{Hedrick, nasol, rick}, they do not alter the subsequent analysis, and are introduced hereafter for illustrative simplicity.
\end{rem}

%Equation (\ref{spacing}) implements a {\em constant} interspacing policy, however
In practice an inter--vehicle spacing policy that is proportional with the  vehicle's speed $\dot{y}_k(t)$ (dubbed {\em time headway}) is preferred to the {\em constant} policy (\ref{spacing}). Time headway policies  \cite{TH1,TH2,TH3} have been known to have beneficial effects  on  certain stability measures of the platoon's behavior. For a {\em constant time headway}  $h>0$,  the expression of the spacing errors becomes
\begin{equation} \label{spacinghead}
z_k(t)\overset{def}{=}y_{k-1}(t) -\big(y_k(t)+h \: \dot{y}_k(t)\big) %- \Delta , \; \mathrm{for} \quad 1\leq k \leq n,
\end{equation} 
where $\dot{y}_k(t)$ is the speed of the $k$--th vehicle.\footnote{Note that for $h=0$ in (\ref{spacinghead}) the  time--headway becomes the constant vehicle inter--spacing policy (\ref{spacing}).}
Under the aforementioned ``zero'' error initial conditions \cite[Section~II]{Hedrick} %  $\displaystyle y_o(0)=0$ and $\displaystyle y_k(0)=-k\delta$, for $1\leq k \leq n$,
we can write the vehicle inter--spacing  errors as:

%\begin{subequations}

%\begin{equation}
%z_1=G_\ell\star(u_o+w_o)-HG_1\star (u_1+w_1),
%\end{equation}
\begin{equation}\label{errors}
z_{k+1}=G_{k}\star (u_{k}+w_{k})- HG_{k+1}\star (u_{k+1}+w_{k+1}) , \quad 0 \leq k \leq (n -1),
\end{equation}
%\end{subequations}
where
\begin{equation} \label{Ha}
H(s)\overset{def}{=}hs+1, \quad h > 0.
\end{equation}

%\begin{rem} \label{costanta} We remark here that for $h=0$ in (\ref{Ha}) the  time--headway becomes the constant vehicle inter--spacing policy (\ref{spacing}).
%\end{rem}
%\subsection{String Instability and Sensitivity to Disturbances}

%\begin{defn} \label{marimi}
%Denote with $z\overset{def}{=} \ba{cccc} z_1& z_2 & \dots & z_n\ea^T$ the vector of the interspacing distances errors, with   $u\overset{def}{=} \ba{cccc} u_1& u_2 & \dots & u_n\ea^T$ the vector of the controls,  and with $w\overset{def}{=} \ba{cccc} w_1& w_2 & \dots & w_n\ea^T$ the vector of additive disturbances to  the vehicles' input.
Next, we will use the following standard notation for the aggregated signals of the platoon
%, as the following $n$--dimensional vectors of the error signals,  disturbances and controls, respectively:
\begin{equation}
z\overset{def}{=} \ba{cccc} z_1& z_2 & \dots & z_n\ea^T, \quad
w\overset{def}{=} \ba{cccc} w_1& w_2 & \dots & w_n\ea^T, \quad
u\overset{def}{=} \ba{cccc} u_1& u_2 & \dots & u_n\ea^T.
\end{equation}
Define  $T\in \mathbb{R}(\la)^{n \times n}$ as
\begin{equation} \label{guma}
T\overset{def}{=}\ba{ccccc}   H& O& O & \dots & O \\
                      -1 & H & O& \dots & O \\
                     O & -1 & H & \dots &O \\
                     \vdots & \vdots & \vdots &  &\vdots \\
                     O & O & O & \dots & H    \ea
\end{equation}
 \noindent while noting  that its inverse is %in $\mathbb{R}(\la)^{n \times n}$ has the expression %$T^{-1}=  or more explicitly
  \begin{equation} \label{gumainv}
T^{-1}\overset{}{=}\mathcal{T}\big\{ H^{-1}, H^{-2}, \dots, H^{-n}\big\}= \ba{ccccc}   H^{-1}& O& O & \dots & O \\
                      H^{-2} & H^{-1} & O& \dots & O \\
                     H^{-3} & H^{-2} & H^{-1} & \dots &O \\
                     \vdots & \vdots & \vdots &\ddots  &\vdots \\
                     H^{-n} & H^{-n+1} & H^{-n+2} & \dots & H^{-1}    \ea.
\end{equation}

\subsection{Platoon Motion Control as a Disturbance Attenuation Problem}

%\begin{equation} \label{V1}
%V_1^n\overset{def}{=} \ba{cccc} 1& 0 & \dots & 0\ea^T
%\end{equation}
%\end{defn}

%****************************************************************************************************************************
%****************************************************************************************************************************
%****************************************************************************************************************************
%****************************************************************************************************************************
%****************************************************************************************************************************
%****************************************************************************************************************************

%\begin{rem} \label{adas}

 %\subsection{The Closed Loop Maps of Interest}
Rewriting (\ref{errors}) for all $0 \leq k \leq (n -1)$ in a matrix form, we obtain

\begin{equation} \label{4BlockEquiv}
\ba{c} z_1 \\  z_2 \\ z_3 \\ \vdots \\ z_n \ea = \ba{c}  1 \\ O \\ O \\ \vdots \\ O \ea G_0 \star (u_0+w_0)
- \ba{ccccc}   HG_1& O& O & \dots & O \\
                      -G_1 & HG_2 & O& \dots & O \\
                     O & -G_2 & HG_3 & \dots &O \\
                     \vdots & \vdots & \vdots &  &\vdots \\
                     O & O & O & \dots & HG_n    \ea \star
                    \ba{c}  (u_1+ w_1) \\ (u_2 + w_2) \\ (u_3+w_3) \\ \vdots \\ (u_n + w_n) \ea.
                    \end{equation}

\begin{defn} \label{Plant}
%Denote  with $G_\varg\overset{def}{=}\mathcal{D}\{G_1,G_2, \dots, G_n\}$ the aggregated, multivariable model of the platoon.
In view of (\ref{4BlockEquiv}), we will denote with $G\overset{def}{=}T\: \mathcal{D}\{G_1,G_2, \dots, G_n\}$ the aggregated TFM of the platoon, from the controls vector $u$ to the error signals vector $z$. Henceforward, we will refer to $G$  as \underline{{\em the platoon's plant}}.
\end{defn}
With this notation equation (\ref{4BlockEquiv}) can be expressed as

\begin{equation} \label{4BlockEquivbis}
z=V_1 G_0 \star(u_0+w_0) - G \star(u+ w),
\end{equation}
where $V_1\overset{def}{=} \ba{cccc} 1& 0 & \dots & 0\ea^T$ is the first column vector of the Euclidian basis in $ \mathbb{R}^{n}$.

In our platooning framework the measurements of  the platoon's plant are the errors signals $z$, representing  %, from (\ref{errors}).
 the input signals of the controller $K_Q\in \mathbb{R}(\la)^{n\times n}$,  therefore the equation for the controls vector reads
\begin{equation} \label{Kz}
 u(t)=K_Q\star z(t).
 \end{equation}

%\end{rem}

To bridge the gap between our platooning control problem and the generic unity feedback scheme from Figure~\ref{2Block},  we simply plug (\ref{Kz}) into (\ref{4BlockEquivbis}) in order to obtain the  closed--loop $H(G,K_Q)$ of the platoon (as an aggregated system). %A precise formulation is given in the next result.

% \cite[(10)/pp.1838]{Hedrick}:

%\begin{subequations}
%\begin{equation} \label{Notation}
%z=V_1^n \star y_o - G \star w -G \star u
%\end{equation}
%\begin{equation} \label{K}
%u=K_Q\star  z
%\end{equation}
%\end{subequations}

%\subsection{The Closed Loop Maps from the Disturbances $w(t)$ to the Errors $z(t)$}

%\begin{equation} \label{impo}
%z=\big( I +GK_Q \big)^{-1}V_1^n \: \star y_o- \big( I +GK_Q \big)^{-1}G\: \star w
%\end{equation}

%\newpage

%\begin{figure*}
%\centering
%\framebox{
%\begin{tabular}{c}
%\cr
%$z=\big( I -GK \big)^{-1}V_1^n \:y_o+ \big( I -GK \big)^{-1}G\: w$
% \hfill{12}
%\\\cr
%\end{tabular}}
%\end{figure*}
%\noindent the transfer function from the leader's coordinate $y_o$ and the disturbances $w$ respectively, to the interspacing errors $z$ (\cite[(13)/ pp. 1839]{Hedrick}).

%\subsection{The Closed Loop Maps of Interest}

%Since the main performance goal of the control scheme is to attenuate the effect of the additive disturbances (Assumption~\ref{A3}) on the string of vehicles, we will study the TFM  from the reference trajectory $y_o$ and the disturbances $w$ respectively, to the error signals $z$.  Furthermore, in order to avoid actuator saturation it is desirable to  minimize the control effort, hence we will also be interested in the TFM from the reference trajectory $y_o$ and the disturbances $w$ respectively, to the controls $u$.

\begin{prop} \label{cheia1}
Given a doubly coprime factorization (\ref{dcrel}) of the platoon's plant $G$ and  a controller $K_Q$ (\ref{YoulaEq})  then

\begin{subequations}
\begin{equation} \label{tempo1}
z=T_{zw_0}\star(u_0+w_0)+ T_{zw}\star w,
\end{equation}
\begin{equation} \label{AS2}
u= T_{uw_0} \star(u_0+w_0) +  T_{uw}\star w,
\end{equation}
\end{subequations}
where $T_{zw_0}\overset{def}{=}\big( I +GK_Q \big)^{-1}V_1 G_0$ and $T_{uw_0}\overset{def}{=}K_Q (I+GK_Q)^{-1}V_1 G_0$ are the TFMs  
from the leader's controls and disturbances  $(u_0+w_0)$  to the interspacing errors $z$ and control signals $u$, respectively,  
while $T_{zw}$ and $T_{uw}$ are as defined in (\ref{crazy}), for $K=K_Q$. In particular, it holds that\footnote{ For clarity of the exposition, the analysis done in this paper employs a slightly different interpretation of the controls signal $u$ than the standard one from \cite{V}. Specifically, in this paper $u$ is the output of the controller without the additive disturbance $w$, such that the input signal of the plant in Figure~\ref{2Block} is $(u+w)$. Therefore the closed--loop TFM $T_{uw}$ has a different expression than the one in \cite[(7)/~pp.101]{V}). The difference is not conceptual but  merely conventional and is needed here for additional simplicity.}
\begin{equation} \label{tempo2}
\ba{cc} T_{zw_0} & T_{zw} \\ T_{uw_0} & T_{uw} \ea= 
\ba{cc} \tilde Y_Q \tilde MV_1 G_0 & -\tilde Y_Q \tilde N \\  
\tilde X_Q \tilde M V_1G_0 & -\tilde X_Q \tilde N \ea.
\end{equation}
\end{prop}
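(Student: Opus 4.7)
The plan is to derive equations (\ref{tempo1})--(\ref{AS2}) by closing the loop in (\ref{4BlockEquivbis}) with the controller equation (\ref{Kz}), then invoke Lemma~\ref{HGK} to re-express the resulting closed-loop maps through the Youla doubly coprime data.

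First I would substitute $u = K_Q \star z$ into $z = V_1 G_0 \star(u_0+w_0) - G\star(u+w)$ to obtain
\begin{equation*}
(I + GK_Q)\, z \;=\; V_1 G_0 \star (u_0+w_0) \;-\; G\star w,
\end{equation*}
so that, since $K_Q$ stabilizes $G$ and hence $I + GK_Q$ is invertible over the ring of proper real-rational TFMs,
\begin{equation*}
z \;=\; (I+GK_Q)^{-1} V_1 G_0 \star (u_0+w_0) \;-\; (I+GK_Q)^{-1} G \star w.
\end{equation*}
Reading off the coefficients gives $T_{zw_0} = (I+GK_Q)^{-1}V_1 G_0$, while the second coefficient $-(I+GK_Q)^{-1}G$ is exactly $T_{zw}$ as defined in (\ref{crazy}), establishing (\ref{tempo1}). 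Applying $K_Q$ on the left of the above identity and using (\ref{Kz}) yields $T_{uw_0} = K_Q(I+GK_Q)^{-1}V_1 G_0$ together with the standard $T_{uw}$ of (\ref{crazy}) (consistent with the paper's convention flagged in the footnote, which differs only by a sign absorbed into the input channel), which proves (\ref{AS2}).

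Next I would convert these expressions into the coprime-factor form (\ref{tempo2}). Lemma~\ref{HGK} applied to the standard unity feedback loop gives the identifications
\begin{equation*}
(I+GK_Q)^{-1} \;=\; \tilde Y_Q \tilde M, \qquad K_Q(I+GK_Q)^{-1} \;=\; \tilde X_Q \tilde M,
\end{equation*}
obtained by reading the $T_{z\nu}$ and $T_{u\nu}$ entries of (\ref{folos}); likewise $T_{zw} = -\tilde Y_Q \tilde N$ and $T_{uw} = -\tilde X_Q \tilde N$ come directly from the first column of (\ref{folos}). Substituting these four identities into the formulas just derived for $T_{zw_0}, T_{zw}, T_{uw_0}, T_{uw}$ produces exactly the $2\times 2$ block identity (\ref{tempo2}).

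No step here is genuinely delicate: the proof is essentially a substitution combined with a reading of Lemma~\ref{HGK}. The only point requiring care is the sign/convention bookkeeping highlighted by the paper's footnote --- that $u$ denotes the controller's output, so that the plant input is $u+w$, and that in (\ref{4BlockEquivbis}) the $G$-channel and the exogenous $V_1 G_0$-channel carry opposite signs. Once those signs are tracked consistently through the inversion of $I+GK_Q$, the Youla identities from Lemma~\ref{HGK} plug in directly and no further manipulation is needed.
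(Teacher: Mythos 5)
Your proposal is correct and follows essentially the same route as the paper's own proof: substitute the controller equation (\ref{Kz}) into (\ref{4BlockEquivbis}), solve for $z$ (respectively $u$) to identify $T_{zw_0}=(I+GK_Q)^{-1}V_1G_0=T_{z\nu}V_1G_0$ and $T_{uw_0}=K_Q(I+GK_Q)^{-1}V_1G_0=T_{u\nu}V_1G_0$, and then read the coprime--factor expressions off Lemma~\ref{HGK}. The sign and convention bookkeeping you flag is handled consistently, so nothing further is needed.
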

\begin{proof} Plug (\ref{Kz}) into (\ref{4BlockEquivbis}) in order to get  (\ref{tempo1}).
(The expression in (\ref{tempo1}) can also  be retrieved from  \cite[(13)/ pp. 1839]{Hedrick} for the case of identical vehicles.) Next, note that because of (\ref{crazy}) it holds that $T_{zw_0}=T_{z\nu}V_1 G_0$ and substitute accordingly the expression from (\ref{folos}) of Lemma~\ref{HGK} into (\ref{tempo1}) in order to obtain $T_{zw_0}$ in (\ref{tempo2}). By plugging (\ref{Kz}) into (\ref{4BlockEquivbis}) we get that $\displaystyle u=K_Q\big(V_1 G_0 \star (u_0+w_0) - G \star w-G \star u\big)$ which yields (\ref{AS2}). Note that also because of (\ref{crazy}) it holds that $T_{uw_0}=T_{u\nu}V_1 G_0$  and substitute accordingly the expression from (\ref{folos})  of Lemma~\ref{HGK} into (\ref{AS2}) in order to get $T_{uw_0}$ in (\ref{tempo2}). The remaining expressions in (\ref{tempo2}) follow directly from Lemma~\ref{HGK}.
%from (\ref{YoulaEq}) in Theorem~\ref{Youlaaa}
\end{proof}

\begin{rem} \label{namol} Clearly, from (\ref{tempo2}) it appears that the stability of $T_{zw_0}$ or $T_{uw_0}$ cannot be guaranteed by an internally stabilizing controller $K_Q$ for any leader dynamics $G_0$. However, this issue can be solved under lenient assumptions, as explained later in the sequel.
\end{rem}

\subsection{Predecessor Follower Control} \label{PFC}
Proposition~\ref{cheia1} provides the Youla parameterization (convex in the parameter $Q\in \mathbb{R}(\la)^{n \times n}$) of {\em all} closed--loop maps, achievable with stabilizing controllers.  One of the problems specific to the platooning setup is that  the corresponding Youla parameterization yields {\em centralized} controllers $K_Q\in \mathbb{R}(\la)^{n \times n}$ whose TFMs have no particular sparsity pattern whatsoever. In view of equation (\ref{Kz}), this  means that  in order to generate the control signal $u_k$ for any fixed $k$--th vehicle in the platoon ($1\leq k \leq n$), all other measurements $z_j$, with $1\leq j \leq n$, must be available on board the $k$--th vehicle.  Even with today's communications technology this scenario is simply not feasible from multiple engineering standpoints. That is why in the control literature has been extensively studied the more practical scenario in which the controller $K_Q$ from (\ref{Kz}) is constrained to be {\em diagonal}. This  translates into a scheme in which each one of the vehicles in the platoon only needs access to  the spacing error with respect to the vehicle in front of it (measurable with on board ranging sensors). The scheme has been dubbed {\em predecessor following} control and is depicted in Figure~\ref{figprecedede}. The {\em predecessor follower}  scheme  has certain  fundamental drawbacks such as the fact that any diagonal LTI controller $K_Q$ leads to the undesired phenomenon of {\em string instability} \cite{TAC2010}. For an extensive analysis on the subject we refer to \cite{TAC2010} and the references within.

\begin{figure}[h]
%\centering
\includegraphics[scale=.95]{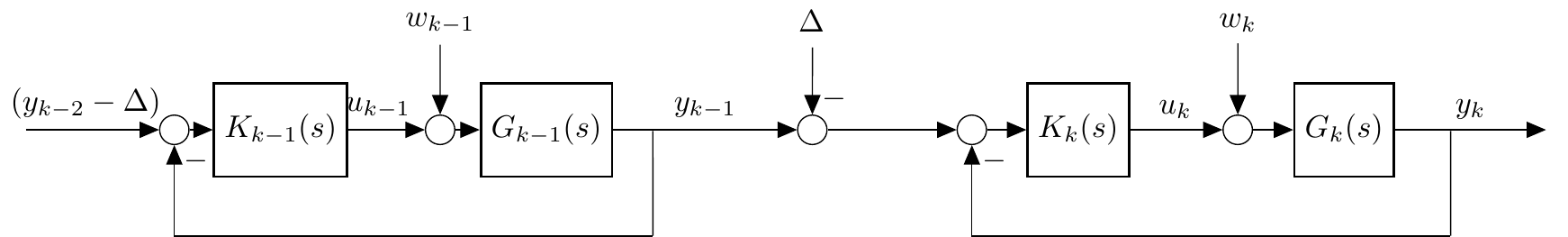}
\caption{Predecessor Follower Control Scheme with Constant Interspacing Policy $\Delta$}
\label{figprecedede}
\end{figure}

\subsection{The Leader Information Control Scheme } \label{LeaderExtension}

%In this section we look at the  so called {\em leader information} scheme. Specifically, we look at the scenario in which  for the controller of each vehicle the only measurements available   are the vehicle's relative distance to the leader vehicle. The rationale behind considering such a policy is the fact that the leader motion is essentially the reference for the entire string of vehicles.

In \cite[(11)/ pp. 1838]{Hedrick}  %homogeneous strings of vehicles are studied (assuming
the case of  platoons with identical vehicles is studied and particular attention is paid to control   laws of the form $u_k=K_k\:(y_0-y_k)$, for $1\leq k \leq n$, where $(y_0-y_k)$ is the relative distance from the $k$--th vehicle to the leader. The intuition behind this control scheme is the fact that the leader's vehicle trajectory $y_0(t)$ is basically the reference for the entire platoon, hence all vehicles in the platoon should ``mimic'' the leader's behavior in order to maintain zero spacing errors.  For the constant inter--vehicle spacing policy (\ref{spacing}) it holds that $\displaystyle (y_0-y_k)=z_1+z_2+\dots +z_k +k\Delta$, therefore under the standard assumptions of Remark~\ref{delta}, writing such control policies in a compact  form yields
\begin{equation} \label{Kell}
\ba{c} u_1 \\  u_2 \\ u_3 \\ \vdots \\ u_n \ea= \ba{ccccc}
                   K_{1} & O & O &\dots &  O \\
                   O&   K_{2}& O & \dots & O \\
                        O &  O &   K_{3} & \dots & O\\
                        \vdots & \vdots &\vdots & \vdots & \vdots \\
                       O &  O &  O&   \dots  & K_n \\ \ea\ba{c} z_1 \\ (z_1+ z_2) \\ (z_1+z_2 + z_3) \\ \vdots \\ (z_1+z_2+z_3+\dots +z_{n-1} +z_n) \ea
\end{equation}
We rewrite equation (\ref{Kell}) such that the input vector is the vector of measurements $z$, in accordance with our Definition~\ref{Plant} of the platoon's plant, obtaining 
%The expression from (\ref{Kell}) therefore  becomes  $\displaystyle K=\mathcal{R}\{K_1, K_2, \dots, K_n\}$ or more explicitly
\begin{equation} \label{Kell2}
\ba{c} u_1 \\  u_2 \\ u_3 \\ \vdots \\ u_n \ea= \ba{ccccc}
                   K_{1} & O & O & \dots & O \\
                      K_{2}&   K_{2}& O& \dots &O \\
                        K_{3} &  K_{3} &   K_{3} & \dots & O\\
                       \vdots &   \vdots &   \vdots &    \vdots  & \vdots \\
                       K_n & K_n &   K_n & \dots & K_n \ea\ba{c} z_1 \\  z_2 \\ z_3 \\ \vdots \\ z_n \ea.
 \end{equation}

The stabilizing controllers featuring the particular  structure in (\ref{Kell2}) were dubbed {\em leader--follower} controllers or {\em leader information} controllers. An excellent analysis of such control policies can be found in  \cite[(11)/ pp. 1838]{Hedrick}  for the situation where  all vehicles are considered identical,  all controllers are also taken to be identical and a constant inter--spacing policy is implemented. The TFM of the type (\ref{Kell2}) controllers can be retrieved from \cite[(12)/ pp. 1838]{Hedrick} by taking  $K_p=0$ (control {\em without} predecessor information).

The key feature of {\em leader information} control policies is the fact that they can achieve {\em string stability} along with excellent sensitivity to disturbances \cite{Hedrick}. %This is caused by the fact that the closed--loop TFM from $w$ to $z$ is lower bi--diagonal, which essentially means that any disturbance to the $k$--th car in the string, will only be ``felt'' by the car behind it (the $(k+1)$--th car), eliminating/reducing the so--called {\em accordion effect}. Another nice feature is the particular manner in which the disturbances to the leader ($w_o(t)$) propagate through the platoon.
In exchange for this, the practical implementation drawbacks  stem from the fact that each one of the $n$ vehicles in the platoon must have  at all times access to a highly accurate measurement of its instantaneous relative position with  respect to the leader, namely $(y_0(t)- y_k(t))$.\footnote{For a platoon comprising of three hundred vehicles traveling at 60 MPH (100 km/h) while maintaining the lawful interspacing distance, the measurement $(y_0(t)- y_k(t))$  for the last vehicles in the platoon is of the order of ten miles (sixteen kilometers). This  renders  very large errors unavoidable when measuring $(y_0(t)- y_k(t))$ also due to the fact that (along the same line of the highway) different vehicles have slightly different trajectories  and therefore they traverse slightly different distances. These errors have major detrimental effects on the control performance.} This aspect is further complicated by the fact that the leader must continuously broadcast its instantaneous coordinates to each vehicle in the platoon and the physical limitations of the (wireless) communications entail delays at the receivers's end.  The presence of communications delays severely deteriorates the control performance \cite{rick}.

\section{Main Result} \label{MR}

%In this paper we will provide a control scheme based on the  leader information controller that can overcome both of these problems.

%\subsection{A Generalization of Leader Information Control for Platooning}

We introduce in the next definition a variation of the control law in (\ref{Kell2}) -- called also {\em leader information} --  which 
%, which we also call {\em leader information control}, since it which 
inherits the performance features characteristic to these controllers for homogeneous strings of vehicles \cite{Hedrick}.%\marginpar{modificat} %However, we indicate to the reader that additional clarifications are needed in order to point out all  underlying connections with type (\ref{Kell2}) controllers.  For clarity of the exposition, the details of this discussion have been postponed for Subsection~\ref{description}. %Next follows the definition containing the technical statement.

%Our extension allows us to handle the case of heterogeneous strings of vehicles (in which the dynamical models of the vehicles differ up to a unimodular factor), and also allows for dynamical time headways interspacing policies $H(s)$. The situation of homogeneous strings of vehicles with identical controllers falls into place as a particular case for which we retrieve the controllers of type (\ref{Kell2}), as will be explained later in Subsection~\ref{description}. The next definition contains the precise statement.

\begin{defn} \label{leader} A controller $K_Q\in \mathbb{R}^{n \times n}(s)$  is said to be a {\em leader information controller}, if $K_Q$ stabilizes the platoon's plant $G$ in the feedback configuration of Figure~\ref{2Block} and the TFM $T_{zw_0}=(I_n +GK_Q )^{-1}$ from the disturbances at the leader to the errors is {\em diagonal}.
\end{defn}

\begin{rem} \label{legatio}
It turns out that imposing sparsity constraints on the closed--loop TFM $(I_n +GK_Q )^{-1}$ (from the disturbances to the  leader $w_0(t)$ to the errors vector $z(t)$) arises as a natural performance condition in multi--agent platooning  systems,  as we argue in detail in  Section~\ref{tuneup}. This is due to the fact that the sparsities of  these closed--loop TFMs are intimately related to the manner in which the disturbances propagate through the string formation.
\end{rem}

%We make the following assumption on the dynamics of

The vehicle's linearized  dynamics are commonly modeled in the literature as a second order system including damping \cite{Peppard, Bamieh}, or as a double integrator with first order actuator dynamics \cite{Hedrick,Brogliato}. In this work we do not need to be directly concerned with the  transfer function of the vehicle's dynamical model, however,  we will henceforth operate under the following assumption that allows to model the distinct masses and the distinct actuating  time constants corresponding to the different types of vehicles in the platoon ({\em e.g.} heavy vehicles versus automobiles).

\begin{assumption} \label{A4}
The dynamical model $G_k$ for each vehicle $k$, $0\leq k \leq n$, equals a given {\em strictly proper} transfer function $G_\wp(\la) \in \mathbb{R}(\la)$ weighted by a {\em unimodular} factor $\Phi_k\in \mathbb{R}(\la)$, specifically $G_k\overset{def}{=}\Phi_ k G_\wp.$
We will henceforth denote the following $n \times n$ diagonal unimodular TFM with $\Phi\overset{def}{=}\mathcal{D} \{\Phi_1, \dots, \Phi_n \}$.  The expression of the platoon's plant therefore becomes $G=T\Phi G_\wp$.
\end{assumption}

 %Assumption \ref{A4} allows us to model the different masses and the different actuating  time constants corresponding to different  types vehicles in the platoon ({\em e.g.} heavy vehicles versus automobiles).
%and can also be employed to implement rational Pad\u{e} approximations of $e^{-\tau s}$ time--delays.
In particular, for a {\em point--mass} model comprising of the double integrator with a first order actuator ($\tau_k > 0$), 
\begin{equation}
G_k(s)=\dfrac{s+\sigma_k}{m_k\: s^2(\tau_k s+1)}, \quad \mathrm{for} \quad 1 \leq k \leq n,
\end{equation}
\noindent $G_\wp$ would be the double integrator $1/s^2$ and $\Phi_k$ would be equal to $\dfrac{s+\sigma_k}{m_k(\tau_ks+1)}$ with $\sigma_k>0$, where $m_k$ and $\tau_k$ are the mass and actuator time constant respectively, specific to the $k$-th vehicle.
\begin{rem} 
We remark here that since zero at $-\sigma_k$ in the expression of $\Phi_k$ is stable, it actually does not introduce important 
restrictions \cite{Seron}, and can be cancelled in closed--loop.\end{rem}%\marginpar{am pus ceva nespecific, discutia de detaliu e ampla}
\begin{rem} \label{Pade}
We can also allow for the transfer function   $G_\wp$ from Assumption~\ref{A4} to include a  conveniently designed Pade  rational approximation of $e^{-\tau s}$, taken to be the same for all $n$ vehicles in the platoon and for the leader. This assumption is made as to take into account an actuation time delay $e^{-\tau s}$ (of the Electrohydraulic Braking and Throttle actuation system -- see  for example \cite{Naus}), with $\tau$ assumed to be the same for all vehicles. The delay $\tau$ is known in practice from the vehicle's technical specifications and can be further verified through model validation methods.  It is also known that the Pade approximation will introduce non--minimal phase zeros (depending on $\tau$) in $G_\wp$  and therefore some loss of performance.
\end{rem}

%\begin{rem}
%As a consequence of Assumption~\ref{A5} $G_o(s)$ is always a strictly proper TFM. We further remark here that the Pade approximation will introduce non--minimal phase zeros in $G_o(s)$.
%\end{rem}

%\begin{defn}\label{Gzero} We compute  a DCF (\ref{dcrel}) of the scalar rational function $G_o(s)$  and denote it with $\big(M_o, N_o$, $\tilde M_o, \tilde N_o$, $X_o, Y_o$, $\tilde X_o, \tilde Y_o\big)$. Such a DCF can be computed using standard methods as in \cite{SIMAX}. Note that all eight factors of the DCF are {\em scalar} rational functions, with $\tilde N_o$ and $N_o$ being strictly proper (since $G_o$ is strictly proper).
%\end{defn}

\subsection{The Youla Parameterization of All Leader Information Controllers}

In this subsection we provide the Youla parameterization of {\em all} leader information controllers associated with a given platoon of vehicles. Our result is formulated in terms of a particular doubly coprime factorization of the platoon's plant, whose factors feature certain sparsity patterns.  As it turns out, parameterizing all leader information controllers  translates into restricting the set of the Youla parameters only to those having a {\em diagonal} TFM.  This feature is remarkably convenient for the optimal  leader information controller synthesis, because it entails a complete ``decoupling'' of the design problem, as later explained in Subsection~\ref{tuneup}. First, we will need the following preparatory result. 
\begin{prop} \label{chelia} The $n\times n$ Transfer Function Matrix $\big(H^{-1}T\big)$ with the constant time--headway $H(s)=1+hs$, (with $h>0$) and $T$ as defined in (\ref{guma}) is a unimodular TFM. %having all its poles and all its  zeroes at $-1/h$.
\end{prop}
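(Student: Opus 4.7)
The plan is to verify unimodularity of $H^{-1}T$ by checking the four defining properties (square, proper, stable, stable inverse) directly, exploiting the fact that the structure of $T$ makes $H^{-1}T$ a lower bidiagonal TFM whose inverse is already essentially tabulated in equation (\ref{gumainv}).

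First, I would write out $H^{-1}T$ explicitly: by (\ref{guma}) it is the $n\times n$ lower bidiagonal matrix with all diagonal entries equal to $1$ and all subdiagonal entries equal to $-H^{-1}$. Squareness is immediate. Since $h>0$, the scalar transfer function $H^{-1}(s)=1/(1+hs)$ is strictly proper, so every entry of $H^{-1}T$ is proper. Its single pole lies at $s=-1/h<0$, hence $H^{-1}T\in\mathbb{R}(s)^{n\times n}$ is stable.

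Next I would exhibit the inverse. Since $(H^{-1}T)^{-1}=T^{-1}H$, multiplying the closed form of $T^{-1}$ from (\ref{gumainv}) by the scalar $H$ gives
\[
(H^{-1}T)^{-1}=\mathcal{T}\bigl\{1,\,H^{-1},\,H^{-2},\dots,H^{-(n-1)}\bigr\},
\]
whose entries are either $1$ or nonnegative powers of $H^{-1}$. Each power $H^{-k}$ for $k\ge 0$ is proper (strictly proper for $k\ge 1$) and has its only pole at $s=-1/h<0$, so every entry of $(H^{-1}T)^{-1}$ lies in $RH^\infty$. Therefore $H^{-1}T$ is proper, stable, and its inverse is proper and stable; by the definition given in Section \ref{PGF} it is unimodular.

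No real obstacle is expected here: the only care required is to verify that the formula for $T^{-1}$ in (\ref{gumainv}) is being applied correctly (a direct multiplication $T\cdot T^{-1}=I_n$ could be spelled out if desired) and to confirm that the strict positivity of $h$ is what places the pole of $H^{-1}$ strictly in the open left half plane, which is precisely what delivers stability of both $H^{-1}T$ and its inverse.
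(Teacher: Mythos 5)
Your proof is correct, and it reaches the conclusion by a more explicit route than the paper does. The paper's own argument is a one--line appeal to the pole/zero structure: it observes that $H^{-1}T$ has all of its poles \emph{and} all of its Smith zeros at $-1/h$, so that both the matrix and its inverse have their poles confined to the open left half--plane; this is compact but implicitly leans on the Smith--McMillan machinery (the poles of the inverse are the Smith zeros of the original). You instead verify the definition of unimodularity head--on by exhibiting the inverse in closed form, $(H^{-1}T)^{-1}=T^{-1}H=\mathcal{T}\{1,H^{-1},\dots,H^{-(n-1)}\}$, read off from (\ref{gumainv}), and checking properness and stability entrywise using $h>0$. What your approach buys is self--containedness and transparency --- no structural theory of rational matrices is needed, and the reader sees exactly why every entry of both $H^{-1}T$ and its inverse lies in $RH^\infty$; what the paper's approach buys is brevity and a statement (the location of the Smith zeros) that is reused implicitly elsewhere in the factorization arguments. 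Both proofs hinge on the same two facts: the only pole of $H^{-1}$ is at $-1/h<0$, and the lower--bidiagonal structure of $T$ makes $\det(H^{-1}T)=1$ so that the inverse stays lower triangular in nonnegative powers of $H^{-1}$. No gap.
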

\begin{proof} It follows from the fact that $ H^{-1}T$ has all its poles and all its Smith zeros at $-\dfrac{1}{h}$, where $h>0$ as specified in (\ref{Ha}).
\end{proof}

\begin{theorem} \label{leaderinftare}
 Let $\big(M_\wp, N_\wp$, $\tilde M_\wp, \tilde N_\wp$, $X_\wp, Y_\wp$, $\tilde X_\wp, \tilde Y_\wp\big)$ be a doubly coprime factorization of $G_\wp$, where all eight factors  are {\em scalar} rational functions, with $\tilde N_\wp$ and $N_\wp$ strictly proper\footnote{Because $G_\wp$ is assumed strictly proper in Assumption~\ref{A4}}. Then:

{\bf (A)} There exists a doubly coprime factorization (\ref{dcrel}) of $G$, denoted $\big(M, N$, $\tilde M, \tilde N$, $X, Y$, $\tilde X, \tilde Y \big)$, and having the following expression
%$G=\tilde{M}^{-1}\tilde{N}=N M^{-1}$  and also:
%\begin{equation}\label{sDCF2}
%\ba{cc}   -\tilde N &  \tilde M \\ Y &  X \ea
%\ba{cc}  - \tilde X & M \\   \tilde Y & N \ea =I_{2n}
%\end{equation}
%\end{prop}

%\noindent where the doubly coprime factors are given by $\big(M, N$, $\tilde M, \tilde N$, $X, Y$, $\tilde X, \tilde Y \big)$

\begin{subequations} \label{hurray}
\begin{equation} \label{hurraya}
\ba{cc}   -\tilde N &  \tilde M \\ Y &  X \ea \overset{def}{=} \ba{rc}   -\tilde N_\wp T \Phi  &  \tilde M_\wp I_n \\ Y_\wp H^{-1} T \Phi   &  X_\wp H^{-1} I_n \ea,
\end{equation}
\begin{equation} \label{hurrayb}
\ba{cc}  - \tilde X & M \\   \tilde Y & N \ea \overset{def}{=} \ba{cc}  - \Phi^{-1}T^{-1} \tilde X_\wp & \Phi^{-1} T^{-1}HM_\wp  \\   \tilde Y_\wp I_n & HN_\wp I_n \ea;  
\end{equation}
\end{subequations}

%\noindent where $H(s)\in \mathbb{R}(\la)$ is the constant time--headway from (\ref{Ha}) and $T(s)\in \mathbb{R}(\la)^{n \times n}$ is as in (\ref{guma}).

%\begin{proof} The proof  follows by Proposition~\ref{primabis} and the inspection of (\ref{hurray}).
%\end{proof}

{\bf (B)} The Youla parameterization (\ref{YoulaEq}) of {\em all} leader information stabilizing controllers (from Definition~\ref{leader}) is obtained from the doubly coprime factorization (\ref{hurray}) by constraining the Youla parameters $Q\in \mathbb{R}(\la)^{n \times n}$ to be diagonal, specifically $\displaystyle Q\overset{def}{=}\mathcal{D}\big \{Q_{11},Q_{22}, \dots, Q_{nn}\big \}$, with $Q_{kk}\in \mathbb{R}(\la)$ stable, for any $1\leq k \leq n$.   Moreover, any {\em leader information controller} $K_Q$ is given by a left coprime factorization of the form
\begin{equation} \label{ff}
\ba{cc} Y_Q & X_Q \ea \overset{def}{=}\ba{cc} \big( H^{-1}Y_\wp I_n -  \tilde N_\wp Q \big) T \Phi  & \;  \big( H^{-1}X_\wp I_n + \tilde M_\wp Q \big) \\ \ea. 
\end{equation}
%\noindent  for a certain diagonal Youla parameter $\displaystyle Q\overset{}{=}\mathcal{D}\big \{Q_{11},Q_{22}, \dots, Q_{nn}\big \}$.
The detailed expressions of the factors $Y_Q$ and $X_Q$ are given by
\begin{subequations} \label{exceptzeonal}
\begin{equation} \label{March8th}
Y_Q=\ba{ccccc}
                   (Y_\wp- Q_{11} H \tilde  N_\wp) \Phi_1& O & O & \dots & O \\
                      (-H^{-1}Y_\wp+ Q_{22}\tilde N_\wp)\Phi_1&   (Y_\wp- Q_{22} H\tilde N_\wp)\Phi_2& O& \dots &O \\
                        O &  (-H^{-1}Y_\wp+Q_{33} \tilde  N_\wp)\Phi_2 &   (Y_\wp- Q_{33}H\tilde N_\wp)\Phi_3 & \dots & O\\
                       \vdots &   \vdots &   \vdots &    \vdots  & \vdots \\
                       O & O &   O & \dots & (Y_\wp-Q_{nn} H \tilde N_\wp)\Phi_n \ea,
\end{equation}
\begin{equation}
X_Q=\ba{ccccc}
                   (H^{-1}X_\wp+Q_{11} \tilde M_\wp) & O & O & \dots & O \\
                    O&   (H^{-1}X_\wp+Q_{22} \tilde M_\wp)& O& \dots &O \\
                        O &  O &   (H^{-1}X_\wp+Q_{33} \tilde M_\wp) & \dots & O\\
                       \vdots &   \vdots &   \vdots &    \vdots  & \vdots \\
                       O & O &   O & \dots & (H^{-1}X_\wp+Q_{nn}\tilde M_\wp) \ea.
\end{equation}
\end{subequations}
\end{theorem}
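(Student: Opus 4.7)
\textbf{Proof plan for Theorem~\ref{leaderinftare}.}

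For Part (A), I will verify directly that the octuple in (\ref{hurray}) is a doubly coprime factorization of $G=T\Phi G_\wp$. The fractional identities
\[
\tilde{M}^{-1}\tilde{N}=\tilde{M}_\wp^{-1}\tilde{N}_\wp\,T\Phi=G_\wp T\Phi=T\Phi G_\wp=G,
\quad NM^{-1}=N_\wp M_\wp^{-1}T\Phi=G,
\]
follow because $G_\wp$, $H$, and the scalar factors all commute. Stability of each of the eight factors is where Assumption~\ref{A4} and Proposition~\ref{chelia} enter: $\Phi$ and $\Phi^{-1}$ are unimodular, $H^{-1}T$ is unimodular (Proposition~\ref{chelia}), the entries of $T^{-1}$ are $H^{-k}$ for $1\le k\le n$, and the products $HN_\wp$, $H\tilde{N}_\wp$ are proper since $N_\wp,\tilde{N}_\wp$ are strictly proper. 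To check B\'{e}zout's identity (\ref{dcrel}), I will expand the four block products and collapse each of them back onto the scalar identity
\[
\begin{pmatrix} -\tilde{N}_\wp & \tilde{M}_\wp \\ Y_\wp & X_\wp \end{pmatrix}
\begin{pmatrix} -\tilde{X}_\wp & M_\wp \\ \tilde{Y}_\wp & N_\wp \end{pmatrix}=I_2;
\]
the cancellations $T\Phi\cdot\Phi^{-1}T^{-1}=I_n$ and $H\cdot H^{-1}=1$ are exactly what makes every scalar Bézout equation lift to the $n\times n$ setting.

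For Part (B), the starting point is Lemma~\ref{HGK}, which yields $(I_n+GK_Q)^{-1}=\tilde{Y}_Q\tilde{M}$. Substituting $\tilde{Y}=\tilde{Y}_\wp I_n$ and $N=HN_\wp I_n$ into the Youla formula $\tilde{Y}_Q=\tilde{Y}-NQ$ gives
\[
\tilde{Y}_Q\tilde{M}=\bigl(\tilde{Y}_\wp I_n-HN_\wp Q\bigr)\tilde{M}_\wp I_n=\tilde{Y}_\wp\tilde{M}_\wp I_n - HN_\wp\tilde{M}_\wp\,Q.
\]
Since $HN_\wp\tilde{M}_\wp$ is a nonzero scalar rational function, $\tilde{Y}_Q\tilde{M}$ is diagonal if and only if $Q$ is diagonal. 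Combined with Theorem~\ref{Youlaaa}, which asserts that every stabilizing controller is of the form (\ref{YoulaEq}) for some stable $Q\in\mathbb{R}(s)^{n\times n}$, this gives the ``if and only if'' parameterization of all leader information controllers, with $Q=\mathcal{D}\{Q_{11},\dots,Q_{nn}\}$ and each $Q_{kk}\in\mathbb{R}(s)$ stable.

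The remaining formulas (\ref{ff}) and (\ref{exceptzeonal}) are obtained by direct substitution into the Youla identities $Y_Q=Y-Q\tilde{N}$ and $X_Q=X+Q\tilde{M}$. Using $Y=Y_\wp H^{-1}T\Phi$ and $\tilde{N}=\tilde{N}_\wp T\Phi$, I factor $T\Phi$ to the right to obtain $Y_Q=(H^{-1}Y_\wp I_n-\tilde{N}_\wp Q)T\Phi$, which is (\ref{ff}). To get (\ref{March8th}), I observe that for any diagonal $A=\mathcal{D}\{a_1,\dots,a_n\}$ the product $AT\Phi$ is lower bidiagonal with $(k,k)$-entry $a_k H\Phi_k$ and $(k,k-1)$-entry $-a_k\Phi_{k-1}$; applying this with $a_k=H^{-1}Y_\wp-\tilde{N}_\wp Q_{kk}$ yields exactly the matrix in (\ref{March8th}). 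The expression for $X_Q$ is immediate since $X$, $\tilde{M}$ and $Q$ are all diagonal.

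I do not anticipate any real obstacle, because $G_\wp$ being a scalar plant means every noncommutative issue disappears. The only mildly delicate point is the stability bookkeeping for Part (A)---specifically ensuring $H\tilde{N}_\wp$, $HN_\wp$, and $H^{-1}T$ are all stable and proper---but Proposition~\ref{chelia} and strict properness of $N_\wp,\tilde{N}_\wp$ cover every case.
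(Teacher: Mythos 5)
Your proposal is correct and follows essentially the same route as the paper: part (A) is verified by reducing the block B\'{e}zout identity to the scalar one via the cancellations $T\Phi\cdot\Phi^{-1}T^{-1}=I_n$ and $H H^{-1}=1$ (the paper simply says this ``follows by inspection''), and part (B) rests on the same observation that, by Lemma~\ref{HGK}, $(I_n+GK_Q)^{-1}=\tilde Y_Q\tilde M_\wp$ is diagonal iff $\tilde Y_Q=\tilde Y_\wp I_n-HN_\wp Q$ is diagonal iff $Q$ is diagonal, with the remaining formulas obtained by direct substitution. Your write-up is in fact somewhat more explicit than the paper's on the stability bookkeeping and on how the lower-bidiagonal structure of $Y_Q$ arises from $AT\Phi$, but there is no substantive difference in approach.
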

\begin{proof} {\bf (A)} The fact that both $T^{-1}$ from (\ref{gumainv}) and $HT^{-1}$  from Proposition~\ref{chelia} are stable, implies that all eight factors  from (\ref{hurray}) are stable. The rest of the proof  follows by the inspection of (\ref{hurray}) which complies with the definition from (\ref{dcrel}).

{\bf (B)} The TFM of interest $(I_n+GK_Q)^{-1}$ is diagonal if and only if $(I_n+GK_Q)^{-1} (\tilde M_\wp I_n)^{-1} = \tilde Y_Q$ is diagonal (since $\tilde M_\wp$ is not identically zero and by Lemma~\ref{HGK} applied to the factorization in (\ref{hurraya})). % and if and only if $\tilde Y_Q$ is diagonal  ( we have $\tilde Y_Q = (I_n+GK_Q)^{-1} (\tilde M_\wp I_n)^{-1}$).
The latter holds  if and only if and only if  $(\tilde Y_Q - \tilde Y_\wp I_n)$ is diagonal. But from  the expression for $Y_Q$ from (\ref{EqYoula4}) applied to the factorization in (\ref{hurray}) it follows that $\tilde Y_Q=\tilde Y_\wp I_n +  (HN_\wp )Q$ and since neither $N_\wp$, nor $H$ are identically zero, clearly  $\tilde Y_Q$ is diagonal if and only if the Youla parameter $Q$ is a diagonal TFM. The formulas from (\ref{exceptzeonal}) follow by directly employing Theorem~\ref{Youlaaa} to the particular doubly coprime factorization in (\ref{hurray}).
\end{proof}

\subsection{A Distributed Implementation of Leader Information Controllers}

In this subsection we introduce a {\em distributed implementation} for leader information controllers which we will prove to be of great practical interest. Our proposed scheme is based on a natural adaptation of the controller's left coprime factorization from (\ref{exceptzeonal}). First, we note that since the inverse $Y_Q^{-1}$ of the factor from (\ref{March8th}) is lower triangular, it follows that the TFM $K_Q=Y_Q^{-1}X_Q$ of any of the  leader information controllers  parameterized in Theorem~\ref{leaderinftare}  is also lower triangular. This suggests that in order to compute $u_k$ on board the $k$--th vehicle,  we would need access to the interspacing errors $z_j$, with $1\leq j \leq k$, of all vehicles preceding the $k$--th vehicle. As it turns out, our distributed implementation completely circumvents this requirement. The following key result is an immediate consequence of Theorem~\ref{leaderinftare}.

%_______________________________________________________________
%_______________________________________________________________
%_______________________________________________________________
%_______________________________________________________________

\tikzset{%
block/.style    = {draw, thick, rectangle, minimum height = 1.7em,
    minimum width = 1.7em},
  block1/.style    = {draw, thick, rectangle, minimum height = 1.7em,
    minimum width = 1.7em,fill=gray!70},
      block2/.style    = {draw, thick, rectangle, minimum height = 1.7em,
    minimum width = 1.7em,fill=gray!30},
  sum/.style      = {draw, circle, node distance = 1.4cm}, % Adder
  input/.style    = {coordinate}, % Input
  output/.style   = {coordinate} % Output
}
% Defining string as labels of certain blocks.
\newcommand{\lp}{$+$}

%_______________________________________________________________
%_______________________________________________________________
\begin{figure*}[h]
\hspace{-3mm}
\includegraphics[scale=.4]{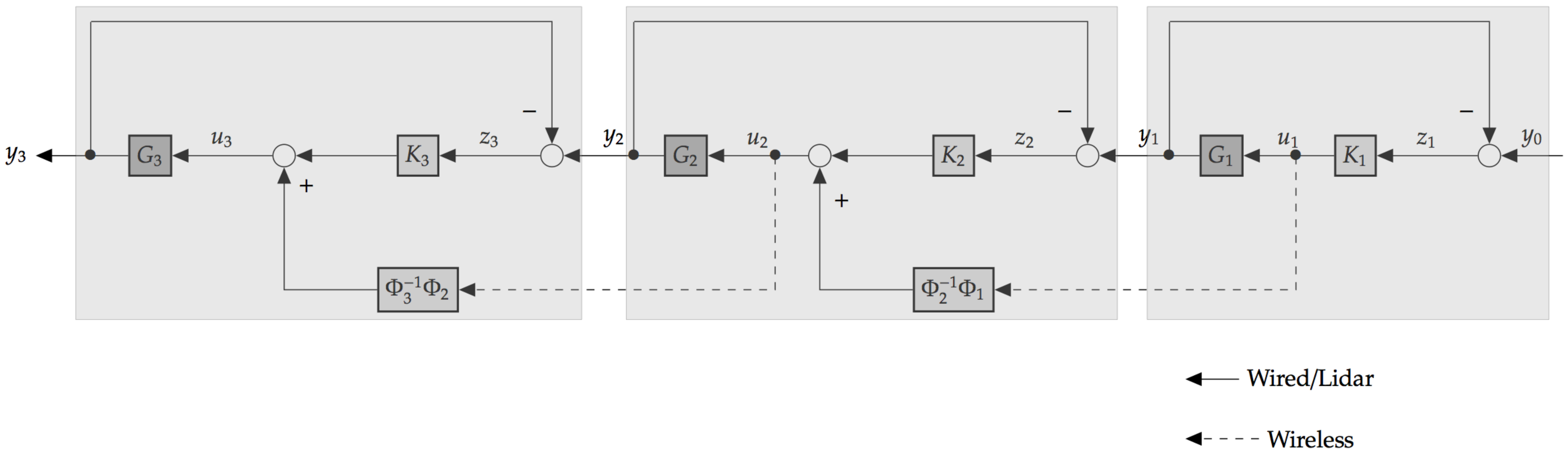}
\caption{Distributed Implementation of the Leader Information Controller}
\label{f2}
\end{figure*}

%\begin{rem}
%We remark here that Theorem~~\ref{leaderinftare} provides the parameterization of {\em all} leader information controllers (as per Definition~\ref{leader}) that internally stabilize the  feedback loop. Internal stabilization is guaranteed by the Youla parameterization. One of the nice features of the leader information controller is the fact that it can always be factored as in Corollary~\ref{invent} pertaining to the decentralized implementation described next. To obtain the desired  performance of the closed loop, controller design can  be performed by choosing the adequate  diagonal Youla parameter $Q\overset{def}{=}\mathcal{D}(Q_1,Q_2, \dots, Q_n)$, with $Q_k\in \mathbb{A}(\la)$ such as to minimize the desired  norm--based cost function. A brief exemplification of leader information controller design  will be presented next.
%\end{rem}

%\begin{rem} We remark here that for the case of homogeneous strings of vehicles, {\em i.e.} $\Phi_k=1$, for any $1 \leq k \leq n$  the controller from Corollary~\ref{invent} takes the form of the controller from (\ref{Kell2}).
%\end{rem}

\begin{figure*}[h]
\centering
%\framebox{
{\footnotesize
\begin{tabular}{c}
\cr
 $\ba{c} u_1 \\ u_2 \\ u_3 \\ \vdots \\ u_{n-1} \\ u_n  \ea = \ba{ccccccc}  O & O & O &  \dots & O & O\\ \Phi_2^{-1}\Phi_1 & O & O &  \dots & O & O\\ O & \Phi_3^{-1}\Phi_2 & O &  \dots & O & O \\ O & O & \Phi_4^{-1}\Phi_3  &  \dots & O & O \\ \vdots & \vdots & \vdots & \ddots & \vdots & \vdots \\  O & O & O &  \dots &\Phi_n^{-1}\Phi_{n-1}   & O  \ea \star \ba{c} u_1 \\ u_2 \\ u_3 \\ \vdots \\ u_{n-1} \\ u_n\ea +
 \ba{cccccc}  K_{1} & O & O & \dots  & O & O\\ O & K_{2}  & O & \dots  & O & O \\ O & O & K_{3} & \dots  & O & O \\ O & O & O & \ddots & O & O  \\ \vdots & \vdots & \vdots & \ddots & \vdots & O \\ O & O & O &  \dots & O & K_n \\  \ea \star \ba{c} z_1 \\ z_2 \\ z_3 \\ \vdots \\ z_{n-1} \\ z_n \ea $
% \hfill (34)
 \\\cr
\end{tabular}}
\caption{The Equation for Leader Information Controller from Figure~\ref{f2}}
\label{ecfig}
\end{figure*}

\begin{coro} \label{invent}
Any of the {\em leader information controllers} $K_Q$, $u=K_Qz$, parameterized in Theorem~\ref{leaderinftare} can be rewritten as
\begin{equation} \label{fff}
u=H^{-1} \Phi_{ldiag}\star u+H^{-1} K_{}\star z
 \end{equation}
with
\begin{equation}
\Phi_{ldiag}\overset{def}{=}\ba{ccccccc}  O & O & O &  \dots & O & O\\\Phi_2^{-1}\Phi_1 & O & O &  \dots & O & O\\ O & \Phi_3^{-1}\Phi_2 & O &  \dots & O & O \\ O & O & \Phi_4^{-1}\Phi_3  &  \dots & O & O \\ \vdots & \vdots & \vdots & \ddots & \vdots & \vdots \\  O & O & O &  \dots &\Phi_n^{-1}\Phi_{n-1}   & O  \ea
\end{equation}
\begin{equation} \label{kakaka}
K\overset{def}{=}\mathcal{D} \big \{K_1, K_2, \dots, K_n \big \}, \quad K_k\overset{def}{=}\Phi_k^{-1} \big(Y_\wp-Q_{kk} H \tilde  N_\wp\big)^{-1} (X_\wp+Q_{kk} H \tilde M_\wp), \quad \mathrm{with} \quad K_k\in \mathbb{R}(\la),
\end{equation}

%{\footnotesize
%\begin{equation} \label{fff}
 %\ba{c} u_1 \\ u_2 \\ u_3 \\ \vdots \\ u_{n-1} \\ u_n  \ea = H^{-1}\ba{ccccccc}  O & O & O &  \dots & O & O\\\Phi_2^{-1}\Phi_1 & O & O &  \dots & O & O\\ O & \Phi_3^{-1}\Phi_2 & O &  \dots & O & O \\ O & O & \Phi_4^{-1}\Phi_3  &  \dots & O & O \\ \vdots & \vdots & \vdots & \ddots & \vdots & \vdots \\  O & O & O &  \dots &\Phi_n^{-1}\Phi_{n-1}   & O  \ea \star \ba{c} u_1 \\ u_2 \\ u_3 \\ \vdots \\ u_{n-1} \\ u_n\ea +
%H^{-1} \ba{cccccc}  K_{1} & O & O & \dots  & O & O\\ O & K_{2}  & O & \dots  & O & O \\ O & O & K_{3} & \dots  & O & O \\ O & O & O & \ddots & O & O  \\ \vdots & \vdots & \vdots & \ddots & \vdots & O \\ O & O & O &  \dots & O & K_n \\  \ea \star  \ba{c} z_1 \\ z_2 \\ z_3 \\ \vdots \\ z_{n-1} \\ z_n \ea
% \end{equation}}
\begin{figure}[h]
\hspace{-3mm}
\includegraphics[scale=.81]{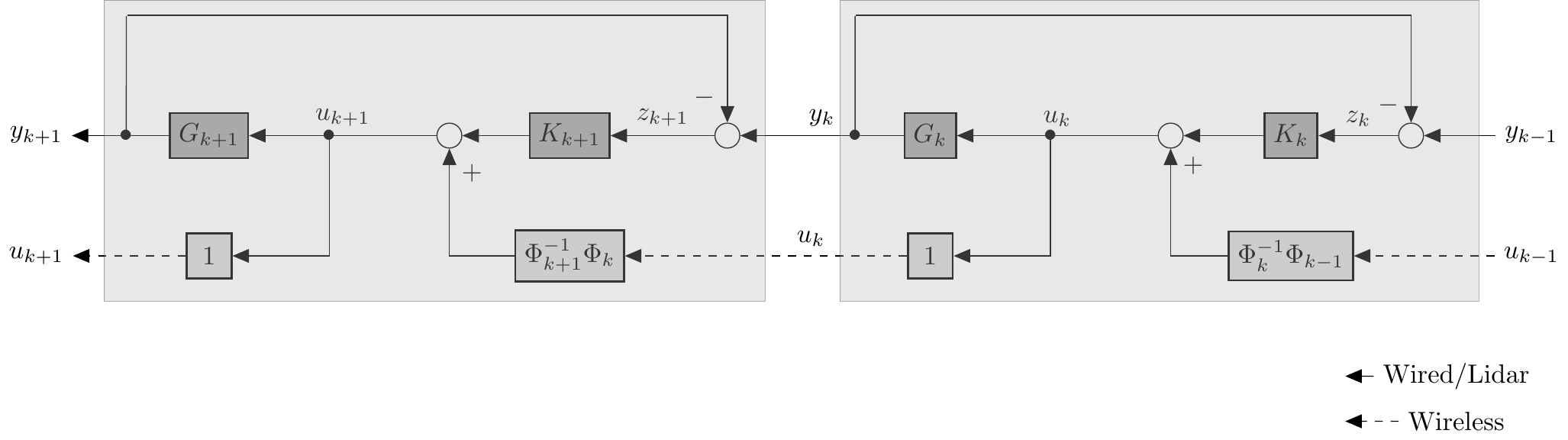}
\caption{Distributed Implementation of Leader Information Control}
\label{f2bis}
\end{figure} 

\noindent for any $1 \leq k \leq n$.
% \begin{equation}
%H^{-1} K_k\overset{def}{=}  H^{-1}\Phi_k^{-1} \big(H^{-1}Y_o-Q_k \tilde  N_o\big)^{-1} (H^{-1}X_o+Q_k \tilde M_o)
 %\end{equation}
 %\noindent or equivalently

% \noindent Henceforth we will denote
 %  \begin{equation} \label{kakaka}
 %K\overset{def}{=}\mathcal{D} \big \{K_1, K_2, \dots, K_n \big \} \quad \mathrm{with} \quad K\in \mathbb{R}(\la)^{n \times n},
 %\end{equation}
  %\noindent where $K_k$ (for $1 \leq k \leq n$) are as defined in (\ref{kaka}).
\end{coro}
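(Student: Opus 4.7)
The plan is to read off the equation $u = K_Q z$ row-by-row from the left coprime factorization $Y_Q u = X_Q z$ provided in (\ref{exceptzeonal}) of Theorem~\ref{leaderinftare}, and then exhibit a common scalar factor on each row that reduces the identity to the desired bidiagonal form.

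First I would write out row $k$ of $Y_Q u = X_Q z$ for $k\ge 2$, which by (\ref{March8th}) and the diagonal expression of $X_Q$ gives
\begin{equation*}
\bigl(-H^{-1}Y_\wp + Q_{kk}\tilde N_\wp\bigr)\Phi_{k-1}\, u_{k-1} + \bigl(Y_\wp - Q_{kk} H \tilde N_\wp\bigr)\Phi_k\, u_k = \bigl(H^{-1}X_\wp + Q_{kk}\tilde M_\wp\bigr) z_k .
\end{equation*}
The key algebraic observation is the identity $\bigl(-H^{-1}Y_\wp + Q_{kk}\tilde N_\wp\bigr) = -H^{-1}\bigl(Y_\wp - Q_{kk} H \tilde N_\wp\bigr)$, so the same scalar factor $\bigl(Y_\wp - Q_{kk} H \tilde N_\wp\bigr)$ appears in both left-hand terms of row $k$.

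Pulling out this common factor and multiplying the whole row through by $H\,\Phi_k^{-1}\bigl(Y_\wp - Q_{kk}H\tilde N_\wp\bigr)^{-1}$, the left-hand side collapses to $H u_k - \Phi_k^{-1}\Phi_{k-1}\, u_{k-1}$, while the right-hand side becomes
\begin{equation*}
\Phi_k^{-1}\bigl(Y_\wp - Q_{kk}H\tilde N_\wp\bigr)^{-1}\bigl(X_\wp + Q_{kk} H \tilde M_\wp\bigr) z_k = K_k z_k ,
\end{equation*}
which matches the definition of $K_k$ in (\ref{kakaka}). Dividing by $H$ gives $u_k = H^{-1}\Phi_k^{-1}\Phi_{k-1}\, u_{k-1} + H^{-1}K_k z_k$. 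The case $k=1$ is handled identically but without the predecessor term, yielding $u_1 = H^{-1}K_1 z_1$. Assembling these $n$ scalar equations into vector form produces exactly (\ref{fff}) with $\Phi_{ldiag}$ and $K$ as stated.

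Conceptually, the step corresponds to multiplying both $Y_Q$ and $X_Q$ on the left by the diagonal unimodular TFM $L \overset{def}{=}\mathcal{D}\bigl\{ H\,\Phi_k^{-1}(Y_\wp - Q_{kk}H\tilde N_\wp)^{-1}\bigr\}_k$; this preserves the relation $u = Y_Q^{-1}X_Q z$ because $LY_Q u = LX_Q z$ is equivalent to the original. There is no real obstacle here — the statement is essentially a reading of (\ref{exceptzeonal}) — the only point to be careful about is verifying the scalar factorization $(-H^{-1}Y_\wp + Q_{kk}\tilde N_\wp) = -H^{-1}(Y_\wp - Q_{kk}H\tilde N_\wp)$ that renders each row of $Y_Q$ proportional (after multiplication by $L_{kk}$) to the corresponding row of $HI_n - \Phi_{ldiag}$, and noting that $L$ is well defined since $Y_\wp - Q_{kk}H\tilde N_\wp$ is invertible as a rational function (being the scalar denominator of the stabilizing SISO controller $K_k$).
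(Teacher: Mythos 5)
Your proof is correct and is essentially the paper's own argument: the paper establishes the corollary in one line by left-multiplying the coprime factorization (\ref{ff}) by the diagonal TFM $\mathcal{D}\big\{\Phi_k^{-1}H^{-1}\big(H^{-1}Y_\wp-Q_{kk}\tilde N_\wp\big)^{-1}\big\}$, which coincides (up to the overall factor of $H$ that you divide out at the end) with your multiplier $L$, and your row-by-row computation simply makes that multiplication explicit via the scalar identity $-H^{-1}Y_\wp+Q_{kk}\tilde N_\wp=-H^{-1}(Y_\wp-Q_{kk}H\tilde N_\wp)$. The only nitpick is that $L$ need not be unimodular in the paper's sense (it is generally improper and its inverse need not be stable), but, as you yourself observe, invertibility over $\mathbb{R}(\la)$ is all the argument requires.
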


\begin{proof} It follows directly by multiplying to the left coprime factorization (\ref{ff}) of $K_Q$ with the diagonal TFM
\[
\mathcal{D}\Big\{ \Phi_1^{-1} H^{-1}\big(H^{-1}Y_\wp-Q_{11} \tilde  N_\wp\big)^{-1},  \Phi_2^{-1} H^{-1} \big(H^{-1}Y_\wp-Q_{22} \tilde  N_\wp\big)^{-1}, \dots, \Phi_n^{-1}H^{-1} \big(H^{-1}Y_\wp-Q_{nn} \tilde  N_\wp\big)^{-1}\Big\}
\] \end{proof}

%\subsection{Description of the Control Scheme from Figure~\ref{f2} and Figure~\ref{f2bis}}

%_______________________________________________________________
%_______________________________________________________________

A distributed  implementation of the leader information controller according to  Corollary~\ref{invent}  is presented in Figure~\ref{f2} for the first three vehicles of the platoon, followed by the equation of the leader information controller in Figure~\ref{ecfig}. The scheme for any two consecutive vehicles in the platoon  ($k \geq 2$) is depicted in Figure~\ref{f2bis},\footnote{To make the graphics more readable we  have illustrated in both Figure~\ref{f2} and Figure~\ref{f2bis}, the case in which the constant time--headway policy  has been removed, meaning that we considered $H(s) =1$. However, the implementation for  $H(s) =hs+1$ with $h>0$ should become straightforward from equation (\ref{fff}): simply add a cascaded $H^{-1}$ filter on each  of the $u_k$ and $z_{k+1}$ branches and  add a $H^{}$ filter on each feedback branch from $y_k$ to $z_k$ (in accordance to the definition of the error signals $z_k$ from (\ref{errors})).} where the $\Psi$ blocks  are considered to be equal to $1$. The blocks $G_k$ (with $1\leq k \leq n$) represent the dynamical models of the vehicles and each control signal $u_k$ is fed in the Electrohydraulic Braking and Throttle actuation system on board the $k$--th vehicle. The $z_k$ measurement, represents the distance to the preceding vehicle and it is measured using ranging sensors on board the $k$--th vehicle. The control signal $u_{k-1}$ produced on board the $(k-1)$--th vehicle is broadcasted ({\em e.g.}, using wireless communications) to the $k$--th vehicle.
%\footnote{The blocks $H^{-1}\Phi_k^{-1}\Phi_{k-1}$ and $H^{-1}K_k$ are the linear filters  introduced in Corollary~\ref{invent}, and they can be implemented using digital computers (digital controllers) on board the $k$--th vehicle.  The controllers which implement the filters $H^{-1}\Phi_k^{-1}\Phi_{k-1}$ and $H^{-1}K_k$ use the signals  $u_{k-1}$ and $z_k$, respectively (available on board the $k$--th vehicle) in order to produce the  control signal $u_k$, which is fed into the Electrohydraulic Braking and Throttle actuation system on board the $k$--the vehicle.} 
The blocks $\Psi$ are taken to be $1$ in Figure~\ref{f2bis} because we assume there are no (wireless) communications induced delays.

%%%%%%%%%%%%%%%%%%%%%%%%%%%%%%%%%%%%%%%%%%%%%%%%%%%%%%%%%%%%%%%%%%
%%%%%%%%%%%%%%%%%%%%%%%%%%%%%%%%%%%%%%%%%%%%%%%%%%%%%%%%%%%%%%%%%%
%%%%%%%%%%%%%%%%%%%%%%%%%%%%%%%%%%%%%%%%%%%%%%%%%%%%%%%%%%%%%%%%%%
%%%%%%%%%%%%%%%%%%%%%%%%%%%%%%%%%%%%%%%%%%%%%%%%%%%%%%%%%%%%%%%%%%
%%%%%%%%%%%%%%%%%%%%%%%%%%%%%%%%%%%%%%%%%%%%%%%%%%%%%%%%%%%%%%%%%%
%%%%%%%%%%%%%%%%%%%%%%%%%%%%%%%%%%%%%%%%%%%%%%%%%%%%%%%%%%%%%%%%%%
%%%%%%%%%%%%%%%%%%%%%%%%%%%%%%%%%%%%%%%%%%%%%%%%%%%%%%%%%%%%%%%%%%
%%%%%%%%%%%%%%%%%%%%%%%%%%%%%%%%%%%%%%%%%%%%%%%%%%%%%%%%%%%%%%%%%%
%%%%%%%%%%%%%%%%%%%%%%%%%%%%%%%%%%%%%%%%%%%%%%%%%%%%%%%%%%%%%%%%%%
%%%%%%%%%%%%%%%%%%%%%%%%%%%%%%%%%%%%%%%%%%%%%%%%%%%%%%%%%%%%%%%%%%
%%%%%%%%%%%%%%%%%%%%%%%%%%%%%%%%%%%%%%%%%%%%%%%%%%%%%%%%%%%%%%%%%%
%%%%%%%%%%%%%%%%%%%%%%%%%%%%%%%%%%%%%%%%%%%%%%%%%%%%%%%%%%%%%%%%%%

\subsection{Supplemental Remarks on Definition~\ref{leader}} \label{description}

In order to get some intuitive insight on the content of Definition~\ref{leader}, we must look at the expression of the weighted controls vector $\Phi u$  instead of the controls vector $u$ (with $\Phi$ as defined in Assumption~\ref{A4}). %With the notations established in Definition~\ref{Haha}, equation (\ref{guma}), Proposition~\ref{primabis} and equation (\ref{kakaka})
We therefore premultiply (\ref{fff}) to the left with $\Phi$ and bring the   $\Phi\star  u$ factors on the left hand side in order to obtain $\displaystyle (H^{-1}T)\: \Phi \star u = (H^{-1}\Phi K) \star z$ or, equivalently, 
\begin{equation} \label{compensa}
\Phi \star u= T^{-1} \Phi K \star z,
\end{equation}

\noindent with $K$ as defined in (\ref{kakaka}) of Corollary~\ref{invent}. To make our point and for this current Subsection~\ref{description} only, let us assume constant inter--spacing policies (\ref{spacing}) (by taking the constant time headway\footnote{See also footnote before equation (\ref{spacinghead}).}  $h=0$ in (\ref{Ha})) and observe that  under this assumption $T$ from (\ref{guma})  satisfies $ T^{-1} =\mathcal{R}\big\{ 1, 1,\dots, 1 \big\}$ such that (\ref{compensa}) becomes
\begin{equation} \label{compensabis}
\Phi \star u=  \mathcal{R}\big\{ 1, 1,\dots, 1 \big\}  \Phi K\:  \star z.
\end{equation}
Since $\Phi$ and $K$ are both diagonal, (\ref{compensabis}) implies that the TFM from the weighted measurements  $\Phi_k K_k \star z_k$ to the weighted controls $\Phi_k \star u_k$ is of the form  (\ref{Kell2}). If, furthermore, in the definition of $K_k$ from (\ref{kakaka}) all the entries of the (diagonal) Youla parameter from Theorem~\ref{leaderinftare} are taken to be identical, that is  $Q_k=Q^*$  for all $1\leq k \leq n$ (with $Q^*$  some fixed, stable transfer function) then

%modified March10th by Serban
\begin{equation} \label{nicenice}
u=  \mathcal{R} \Big\{ K_1 , K_2, \dots, K_n  \Big \} \star z, \quad K_k\overset{def}{=} \Phi_k^{-1}  \big(Y_\wp-Q^* \tilde  N_\wp \big)^{-1} (X_\wp +Q^* \tilde M_\wp ) \quad \text{for all} \; \; 1\leq k \leq n.
\end{equation}
This shows that our leader information controller from Definition \ref{leader} is indeed a controller of type (\ref{Kell2}), thus  validating  
 its given name. 
%\begin{equation} \label{nicenice}
%u=  \mathcal{R} \Big\{ \Phi_1^{-1}  \big(Y_o-Q^* \tilde  N_o\big)^{-1} (X_o+Q^* \tilde M_o) , \Phi_2^{-1}  \big(Y_o-Q^*  \tilde  N_o\big)^{-1} (X_o+Q^*  \tilde M_o), \dots, \Phi_n^{-1}  \big(Y_o-Q^* \tilde  N_o\big)^{-1} (X_o+Q^* \tilde M_o)  \Big \} \star z
%\end{equation}

%\noindent where
%\[

%\]

%_______________________________________________________________
%_______________________________________________________________
%_______________________________________________________________
%_______________________________________________________________
%_______________________________________________________________
%_______________________________________________________________
%_______________________________________________________________
%_______________________________________________________________
%_______________________________________________________________
%_______________________________________________________________

%_______________________________________________________________
%_______________________________________________________________

\section{Performance of Leader Information Controllers} \label{tuneup}

In this section we deal with the performance characteristics of leader information controllers. The discussion is twofold:
\begin{itemize}
\item[$\bullet$] First, we bring forward a {\em structural} feature of any {\em leader information} controller which determines the  non--propagation of disturbances downstream the platoon.   These results are presented in Subsection~\ref{strruct} next; 

\item[$\bullet$] Second, as the main exploit of the Youla parameterization from Theorem~\ref{invent}, we look at how {\em leader information} controllers perform in achieving  disturbances attenuation (via norm--based costs). This discussion is performed in Subsections~\ref{consid} and ~\ref{Praktika}.
\end{itemize}
%As it turns out, parameterizing the leader information controllers  translates into restricting the set of the Youla parameters only to those having a {\em diagonal} TFM.  This feature is remarkably convenient for the optimal controller synthesis because it entails a complete ``decoupling'' of the design problem, as later explained in Subsection~\ref{tuneup}.

%Coupled with the  $\mathcal{H}^2$ or $\mathcal{H}^\infty$  optimal design methods from the standard case (\cite{Zhou})

%The following result is a Corollary of Theorem~\ref{invent}

% \begin{coro} \label{Ess2} Given the DCF of $G$ from Proposition~\ref{sDCFp2}, for any of the {\em leader information} stabilizing controllers $K_Q$ parameterized in Theorem~\ref{invent},  under Assumption~\ref{A4} it holds that

 %{\bf (A)} the transfer function from the disturbances $w$ to the  errors $z$ is given by

%\begin{equation} \label{numauna}
%z=(\tilde Y-NQ) \tilde MG_o \Phi_o V_1^n \: \star w_o-(\tilde Y-NQ)  \tilde N\: \star w
%\end{equation}

 %{\bf (B)} the transfer function from the disturbances $w$ to the controls $u$ is given by

%\begin{equation} \label{numadoua}
%u= (\tilde X+MQ) \tilde M G_o \Phi_o V_1^n\star  w_o -    (\tilde X+MQ) \tilde N \star w
%\end{equation}

%\noindent where $\displaystyle Q\overset{def}{=}\mathcal{D}\big\{Q_1,Q_2, \dots, Q_n\big\}$, with $Q_k\in \mathbb{A}(\la)$, for any $1\leq k \leq n$ is a diagonal Youla parameter.
%\end{coro}

%\begin{proof} It follows from Theorem~\ref{leaderinftare} and Corollary~\ref{useit}.
%\end{proof}

%\begin{rem} \label{struct}
\subsection{Structural Properties of Leader Information Controllers} \label{strruct}
As a structural property of {\em any} leader information controller, the resulted closed--loop TFM $T_{zw}$ from Proposition~\ref{cheia1} is lower bidiagonal.  This implies that any disturbance $w_j$ (at the $j$--th vehicle in the platoon) will only impact the $z_j$ and $z_{j+1}$ error signals.  Consequently, any disturbance at the $j$--th vehicle is completely attenuated before even propagating to the $(j+2)$--th vehicle in the string. This phenomenon is in accordance with the analysis done in \cite{Hedrick} on the excellent performance  of  leader--follower control policies  with respect to {\em sensitivity to disturbances} (see also the discussion from Subsection~\ref{description}).

Furthermore, since  according to Definition~\ref{leader} the TFM $T_{zw_0}$ is diagonal, the disturbances $w_0$ at the leader's vehicle influence only the $z_1$ error signal and none of the subsequent errors $z_k$, with $k\geq 2$.\footnote{Similarly, the  leader's control signal $u_0$ impacts only the $z_1$ error signal, and not at all the subsequent errors $z_k$, with $k\geq 2$.  This is relevant to the current discussion, since (as specified in Section~\ref{TPMCP}) $u_0$ is not automatically generated and so it constitutes a reference signal for the entire platoon.}  This feat of leader information controllers practically eliminates the so-called {\em accordion effect} from the behavior of the platoon. In contrast, for any of the predecessor--follower type schemes mentioned in Subsection~\ref{PFC}  (including bi--directional \cite{Peppard, Vinnicombe} or multi look--ahead schemes \cite{Cook, bidirect}), since  $T_{zw}$ is lower triangular, disturbances at the $j$--th vehicle  -- even if attenuated -- affect the inter--spacing errors of all its successors in the platoon, therefore exhibiting the  {\em accordion effect}. The following result provides the exact expressions of the closed--loop TFMs achievable with leader information controllers.
%\end{rem}

\begin{lem}\label{stringstab} Given a doubly coprime factorization (\ref{hurray})  of the platoon's plant $G$ and $\displaystyle Q\overset{def}{=}\mathcal{D}\big\{Q_{11},Q_{22}, \dots, Q_{nn}\big\}$ a diagonal Youla parameter, %from Theorem~\ref{leaderinftare}
it holds that:

 {\bf (A)} The closed loop transfer function from the  disturbance $w_0$ %(at the leader vehicle)

 \ \ \ \ \ \  {\bf i)} to the interspacing error signals $z_k$ %(specifically the $k$--th row in (\ref{eins})), %and denoted here with $T_{z_kw_o}$)
 is given by

 \begin{equation} \label{une}
T_{z_kw_0}=\left \{ \begin{array}{cc} (\tilde Y_\wp-HN_\wp Q_{11})\tilde N_\wp\Phi_0, \:  & \quad \mathrm{for} \; k=1, \\ 0, & \quad  \mathrm{for} \; k \geq 2; \end{array} \right.
 \end{equation}

%\noindent where $\Phi_o$ is as defined in Assumption~\ref{A4}. %{\em String stability} if and only $\| (\tilde Y_o-HN_oQ_1)\tilde N_o\Phi_o\|_\infty \leq 1$

 \ \ \ \ \ \  {\bf ii)} to the control signals $u_k$ %(specifically  the $k$--th row in (\ref{drei}) %and denoted here with $T_{u_kw_o}$)
 is given by

   \begin{equation} \label{deux}
T_{u_kw_0}=   (\tilde X_\wp+HM_\wp Q_{11})\tilde N_\wp \Phi_0\Phi_k^{-1}H^{-k}.
 \end{equation}

 {\bf (B)} The closed loop transfer function from the disturbance $w_j$ %(at the $j$--th vehicle in the platoon) %($j \geq 1$)

  \ \ \ \ \ \  {\bf i)} to the error signals $z_k$ %(the $k$--th row, $j$--th column entry of the TFM defined in (\ref{zwei}) and denoted here  by $T_{z_kw_j}$)
  is given by
 % \tilde Y_Q nu contine \Phi, in schimb \tilde N contine Phi
   \begin{equation} \label{troix}
T_{z_kw_j}=\left \{ \begin{array}{cl}  0, & \quad  \mathrm{for} \; k < j,  \\ -(\tilde Y_\wp-HN_\wp Q_{jj})\tilde N_\wp H \Phi_j, \:  & \quad \mathrm{for} \; k=j, \\ (\tilde Y_\wp -HN_\wp Q_{(j+1)(j+1)})\tilde N_\wp \Phi_{j}, \:  & \quad \mathrm{for} \; k=j+1, \\ 0, & \quad  \mathrm{for} \; k > j+1; \end{array} \right.
 \end{equation}

 \ \ \ \ \ \  {\bf ii)} to the control signals $u_k$ %(the $k$--th row, $j$--th column entry of the TFM defined in (\ref{vier}) and denoted here  with $T_{u_kw_j}$)
 is given by

 %Added by Serban on Wednesday, January the 21st.

%  \begin{equation}
%T_{u_kw_1}=\left \{ \begin{array}{cl}  - (\tilde X_o+HM_oQ_1)\tilde N_o  \:  & \quad \mathrm{for} \; k=1 \\ - M_o(Q_{1}-Q_2)\tilde N_o\Phi_{1}\Phi_2^{-1} \:  & \quad \mathrm{for} \; k=2 \\  - M_o(Q_{1}-Q_2)\tilde N_o\Phi_{1}\Phi_2^{-1}H^{-k+2} \:  & \quad  \mathrm{for} \; k > 2 \end{array} \right.
 %\end{equation}

 % \begin{equation}
%T_{u_kw_2}=\left \{ \begin{array}{cl}  0 & \quad  \mathrm{for} \; k =1  \\  -(\tilde X_o+HM_oQ_2)\tilde N_o  \:  & \quad \mathrm{for} \; k=2 \\  - M_o(Q_{2}-Q_3)\tilde N_o\Phi_{2}\Phi_3^{-1} \:  & \quad \mathrm{for} \; k=3 \\  -M_o(Q_{2}-%Q_3)\tilde N_o\Phi_{2}\Phi_3^{-1}H^{3-k} \:  & \quad  \mathrm{for} \; k > 3 \end{array} \right.
% \end{equation}

%\noindent and in general

%  \begin{equation}
%T_{u_kw_j}=\left \{ \begin{array}{cl}  0 & \quad  \mathrm{for} \;  k<j  \\ - (\tilde X_o+HM_oQ_j)\tilde N_o  \:  & \quad \mathrm{for} \; k=j \\ - M_o(Q_{j}-Q_{j+1})\tilde N_o\Phi_{j}\Phi_{j+1}^{-1} \:  & \quad \mathrm{for} \; k=j+1 \\ - M_o(Q_{j}-Q_{j+1})\tilde N_o\Phi_{j}\Phi_{j+1}^{-1}H^{j+1-k} \:  & \quad  \mathrm{for} \; k > j \end{array} \right.
 %\end{equation}

 \begin{equation} \label{quatre}
T_{u_kw_j}=\left \{ \begin{array}{cl}  0, & \quad  \mathrm{for} \;  k<j,  \\ - (\tilde X_\wp +HM_\wp Q_{jj})\tilde N_\wp,  \:  & \quad \mathrm{for} \; k=j, \\ - M_\wp (Q_{jj}-Q_{(j+1)(j+1)})\tilde N_\wp \Phi_{j}\Phi_{k}^{-1}H^{j+1-k}, \:  & \quad  \mathrm{for} \; k > j. \end{array} \right.
 \end{equation}
\end{lem}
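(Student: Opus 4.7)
The plan is to evaluate the four closed-loop maps in Proposition~\ref{cheia1} by specializing to the doubly coprime factorization (\ref{hurray}) with $Q=\mathcal{D}\{Q_{11},\ldots,Q_{nn}\}$ diagonal. Substituting (\ref{hurray}) into (\ref{EqYoula4}) gives
\[
\tilde Y_Q \;=\; \tilde Y_\wp I_n - (HN_\wp)\,Q, \qquad \tilde X_Q \;=\; \Phi^{-1}T^{-1}\big(\tilde X_\wp I_n + (HM_\wp)\,Q\big),
\]
so $\tilde Y_Q$ is itself diagonal with $k$-th entry $\tilde Y_\wp-HN_\wp Q_{kk}$, while $\tilde X_Q$ factors through the Toeplitz inverse $T^{-1}$ given explicitly by (\ref{gumainv}). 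The other preparatory identity I would use throughout is the scalar relation $\tilde M_\wp G_\wp=\tilde N_\wp$ (from $G_\wp=\tilde M_\wp^{-1}\tilde N_\wp$), which together with Assumption~\ref{A4} delivers the reduction $\tilde M V_1 G_0=\tilde M_\wp V_1\Phi_0 G_\wp=V_1\tilde N_\wp\Phi_0$.

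For part \textbf{(A)}, inserting this reduction into $T_{zw_0}=\tilde Y_Q\tilde M V_1 G_0$ and using the diagonality of $\tilde Y_Q$ leaves only the top entry nonzero and equal to $(\tilde Y_\wp-HN_\wp Q_{11})\tilde N_\wp\Phi_0$, which is (\ref{une}). For $T_{uw_0}=\Phi^{-1}T^{-1}(\tilde X_\wp I_n+HM_\wp Q)V_1\tilde N_\wp\Phi_0$, the diagonal middle factor acts on $V_1$ as the scalar $\tilde X_\wp+HM_\wp Q_{11}$, and the remaining ingredient is that the first column of $T^{-1}$ in (\ref{gumainv}) equals $[H^{-1},\ldots,H^{-n}]^T$. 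Combined with the diagonal $\Phi^{-1}$ this gives $T_{u_k w_0}=(\tilde X_\wp+HM_\wp Q_{11})\tilde N_\wp\Phi_0\Phi_k^{-1}H^{-k}$, i.e.\ (\ref{deux}).

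For part \textbf{(B)}, I would extract the $j$-th column of $T_{zw}=-\tilde Y_Q\tilde N_\wp T\Phi$. Directly from the lower bidiagonal form (\ref{guma}), $T\Phi e_j=\Phi_j(He_j-e_{j+1})$ for $j<n$ (with the $e_{j+1}$ term dropped when $j=n$), and the diagonal action of $\tilde Y_Q$ then delivers the three nonzero cases of (\ref{troix}) and the vanishing of the others. For $T_{uw}=-\Phi^{-1}T^{-1}D\tilde N_\wp T\Phi$, where $D:=\mathcal{D}\{d_1,\ldots,d_n\}$ with $d_k=\tilde X_\wp+HM_\wp Q_{kk}$, the same observation yields $DT\Phi e_j=\Phi_j(Hd_j e_j-d_{j+1}e_{j+1})$. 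The one non-routine step is applying $T^{-1}$ to this combination: reading off the $j$-th and $(j+1)$-st columns of $T^{-1}$ from (\ref{gumainv}), for $k>j$ the $k$-th row telescopes as
\[
Hd_j\,H^{-(k-j+1)}-d_{j+1}\,H^{-(k-j)}\;=\;(d_j-d_{j+1})H^{j-k}\;=\;HM_\wp\,(Q_{jj}-Q_{(j+1)(j+1)})\,H^{j-k},
\]
it equals $d_j$ for $k=j$, and it vanishes for $k<j$; multiplying by $\tilde N_\wp\Phi_j$ on the right and by $\Phi_k^{-1}$ on the left reproduces (\ref{quatre}). The only real obstacle is this telescoping, and the boundary case $j=n$ needs no separate treatment since the $k>j$ range is then empty.
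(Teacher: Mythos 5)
Your proposal is correct and follows essentially the same route as the paper: substitute the particular doubly coprime factorization (\ref{hurray}) into the closed--loop expressions of Proposition~\ref{cheia1} and read off the entries column by column. The paper leaves the column extraction and the $T^{-1}$ telescoping to ``inspection,'' whereas you write them out explicitly (including the key identity $Hd_jH^{-(k-j+1)}-d_{j+1}H^{-(k-j)}=(d_j-d_{j+1})H^{j-k}$), and all of these computations check out against (\ref{une})--(\ref{quatre}).
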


\begin{proof} The proof follows by the inspection of  the closed--loop TFMs from the disturbances $w_0$ and $w$ to the  errors $z$ and to the control signals $u$, respectively. The TFM from the disturbances to the errors expressed in terms of the particular doubly coprime factors from (\ref{hurray}) reads (according to Proposition~\ref{cheia1})

\begin{equation} \label{numauna}
z=(\tilde Y_\wp I_n-HN_\wp Q) \tilde M_\wp G_\wp \Phi_0 V_1^n \: \star w_0-(\tilde Y_\wp -HN_\wp Q)  \tilde N_\wp T\Phi \: \star w
\end{equation}

\noindent which implies (\ref{une}) and (\ref{troix}), respectively. %Here, $\displaystyle Q\overset{def}{=}\mathcal{D}\big\{Q_{11},Q_{22}, \dots, Q_{nn}\big\}$ with $Q_{kk}\in \mathbb{R}(\la)$, $Q_{kk}$ stable for any $1\leq k \leq n$, is the diagonal Youla parameter from Theorem~\ref{leaderinftare}.
Furthermore, the TFM from the disturbances $w_0$ and $w$ respectively, to the controls $u$ expressed in terms of the  doubly coprime factors from (\ref{hurray}) reads (according to Proposition~\ref{cheia1})
\begin{equation} \label{numadoua}
u= \Phi^{-1} T^{-1}(\tilde X_\wp +HM_\wp Q) \tilde M_\wp G_\wp \Phi_0 V_1\star  w_0 -   \Phi^{-1} T^{-1}(\tilde X_\wp +HM_\wp Q) \tilde N_\wp T\Phi \star w
\end{equation}

\noindent which in turn yields (\ref{deux}) and (\ref{quatre}), respectively. \end{proof}

\begin{rem} \label{namolbis} As a direct consequence of Lemma~\ref{stringstab}, it follows that under Assumption~\ref{A4} any  leader information controller $K_Q$ also stabilizes $T_{zw_0}$ and $T_{uw_0}$, clarifying the issues raised in Remark~\ref{namol}.
\end{rem}

\begin{rem}
Note that according to (\ref{deux}) the disturbances $w_0$ affecting the leader vehicle,  influence the control signals $u_k$ %, (with $k \geq 1$)
of all other vehicles in the platoon\footnote{The same statement holds true for the leader's controls $u_0$, as well. The leader's controls $u_0$  influence all other control signals $u_k$, with $k \geq 1$.}, since the controls of all followers act to compensate the effect of  $w_0$ on the inter--spacing errors. Interestingly enough, it turns out that this is not necessarily the case for disturbances at the following vehicles. Note that if we take the diagonal Youla parameters in Lemma~\ref{stringstab}  to have identical diagonal entries then the closed--loop TFM $T_{uw}(\la)$ becomes diagonal and consequently the disturbances $w_j$ at the $j$--th vehicle are only ``felt'' on the controls of the $j$--th vehicle $u_j$ and not at all for its successors. %The intuition behind this phenomenon is  the fact that $w_j$ is  perfectly directly fed forward into the actuator of the $(j+1)$ vehicle.
\end{rem}

We switch now to the second goal of the current section. 
\subsection{Considerations on Local and Global Optimality} \label{consid}
One of the canonical  problems in classical control (dubbed disturbances attenuation) is to design the controller which minimizes some specified norm of the closed--loop TFM   from the disturbances $w$ to the error signals $z$, namely $T_{zw}(\la)$. In the platooning setting, in view of Lemma~\ref{stringstab}, an elementary question one should ask is: what level of disturbances attenuation can be attained by leader information controllers with respect to the \underline{\em local} performance metric $\|T_{z_jw_j}\|$ from (\ref{troix}) at each vehicle ($1\leq j \leq n$ in the platoon). The following result shows that constraining the stabilizing controller to be a leader information controller, does not cause any loss in local  performance, irrespective of the chosen norm (relative to the performance achievable by the centralized optimal controller).

\begin{theorem} \label{AA}
For any $1\leq j \leq n$, the minimum in
\begin{equation} \label{AS7}
\min_{\begin{array}{c} K_Q \; \text{stabilizes} \; G \end{array}}  \; \big \|  T_{z_j w_j}  \big \|%_\infty
\end{equation}
% with $T_{z_j w_j}$ from (\ref{troix}),
is attained by a leader information controller. The norm in (\ref{AS7}) can be taken to be either the $\mathcal{H}_2$ or the $\mathcal{H}_\infty$ norm.
\end{theorem}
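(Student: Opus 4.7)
My plan is to parameterize the achievable $(j,j)$-entry $T_{z_jw_j}$ first over all stabilizing controllers and then over leader-information ones, and to check that the two achievable sets coincide. This will be enough since the minimum of any functional (in particular the $\mathcal{H}_2$ or $\mathcal{H}_\infty$ norm) over a set equals the minimum over any set with the same range.

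\textbf{Step 1: Closed-loop expression.} I would start from Lemma~\ref{HGK}, which gives $T_{zw}=-\tilde Y_Q\tilde N=-(\tilde Y-NQ)\tilde N$, and plug in the concrete doubly coprime factors of Theorem~\ref{leaderinftare}(A), namely $\tilde Y=\tilde Y_\wp I_n$, $N=HN_\wp I_n$, $\tilde N=\tilde N_\wp T\Phi$. This yields
\begin{equation*}
T_{zw} \;=\; -\tilde N_\wp\bigl(\tilde Y_\wp I_n - HN_\wp Q\bigr)T\Phi,
\end{equation*}
valid for \emph{any} stable $Q\in\mathbb{R}(s)^{n\times n}$, not necessarily diagonal.

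\textbf{Step 2: Extracting the $(j,j)$ entry.} Using that $T\Phi$ is lower bidiagonal with $(T\Phi)_{jj}=H\Phi_j$ and $(T\Phi)_{j+1,j}=-\Phi_j$ (for $j<n$), a direct computation of $[(\tilde Y_\wp I_n-HN_\wp Q)T\Phi]_{jj}$ gives
\begin{equation*}
T_{z_jw_j} \;=\; -\tilde N_\wp H\Phi_j\bigl(\tilde Y_\wp - N_\wp(H\,Q_{jj}-Q_{j,j+1})\bigr), \qquad 1\le j\le n-1,
\end{equation*}
with the boundary case $T_{z_nw_n}=-\tilde N_\wp H\Phi_n(\tilde Y_\wp-HN_\wp Q_{nn})$. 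Note the $(j,j)$ entry depends on $Q$ only through the two scalar stable entries $Q_{jj}$ and $Q_{j,j+1}$.

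\textbf{Step 3: Absorbing the off-diagonal freedom (the main step).} For any stable $Q_{jj},Q_{j,j+1}\in RH^\infty$ I would introduce
\begin{equation*}
\widetilde Q_{jj} \;\overset{\mathrm{def}}{=}\; Q_{jj} - H^{-1}Q_{j,j+1}.
\end{equation*}
The crucial observation is that $H(s)=1+hs$ has its only zero at $-1/h\in\mathbb{C}_-$, so $H^{-1}\in RH^\infty$; hence $\widetilde Q_{jj}\in RH^\infty$. A one-line check gives $HN_\wp\widetilde Q_{jj}=N_\wp(HQ_{jj}-Q_{j,j+1})$, so the diagonal Youla parameter
$\widetilde Q\overset{\mathrm{def}}{=}\mathcal{D}\{\widetilde Q_{11},\ldots,\widetilde Q_{nn}\}$ (with each diagonal entry constructed by the same absorption for the corresponding pair) produces, via Theorem~\ref{leaderinftare}(B), a \emph{leader information} controller whose closed-loop satisfies
\begin{equation*}
T_{z_jw_j}\bigl|_{\widetilde Q}\;=\; -\tilde N_\wp H\Phi_j\bigl(\tilde Y_\wp-HN_\wp\widetilde Q_{jj}\bigr) \;=\;T_{z_jw_j}\bigl|_{Q}.
\end{equation*}
The boundary case $j=n$ is automatic. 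Thus any value of $T_{z_jw_j}$ attainable by a stabilizing controller is also attainable by some leader information controller.

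\textbf{Step 4: Conclusion.} Since the achievable set $\{T_{z_jw_j}:K_Q\text{ stabilizes }G\}$ coincides with $\{T_{z_jw_j}:K_Q\text{ leader information}\}$, the infima of $\|T_{z_jw_j}\|$ over the two sets are equal, and any minimizer (for either $\mathcal{H}_2$ or $\mathcal{H}_\infty$, or indeed any functional) over the larger set can be realized by a leader information controller.

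The one delicate point, and the step I expect to need the most care, is Step 3: verifying that $\widetilde Q_{jj}\in RH^\infty$ really hinges on $H$ being outer (no $\overline{\mathbb{C}_+}$ zeros), which is precisely what makes the bidiagonal coupling in $T$ harmless from an optimization standpoint. Had $H$ possessed an unstable zero, the off-diagonal entries of $Q$ would provide genuinely extra degrees of freedom and the equality of the two achievable sets would fail.
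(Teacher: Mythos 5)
Your proposal is correct and follows essentially the same route as the paper: both express $T_{zw}=-(\tilde Y_\wp I_n-HN_\wp Q)\tilde N_\wp T\Phi$ for a general (non-diagonal) Youla parameter, observe that $T_{z_jw_j}$ depends only on $Q_{jj}$ and $Q_{j(j+1)}$, and then absorb the off-diagonal entry into the diagonal one via $\widetilde Q_{jj}=Q_{jj}-H^{-1}Q_{j(j+1)}$, which is stable because $H^{-1}\in RH^\infty$. The only cosmetic difference is that you argue the two achievable sets of $T_{z_jw_j}$ coincide (so the conclusion holds for any functional), whereas the paper applies the same substitution only at an optimizer of the resulting two-variable model-matching problem.
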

\begin{proof}  In order to account for any stabilizing controller $K_Q$ in (\ref{AS7})  (possibly centralized controllers), we remove the diagonal constraints on the Youla parameter  from Theorem~\ref{leaderinftare} and  consider generic Youla parameters $Q\in\mathbb{R}(s)^{n \times n}$. Expressing $T_{zw}$ from (\ref{tempo2}) of Proposition~\ref{cheia1} in terms of the doubly coprime factorization (\ref{hurray}) of Theorem~\ref{leaderinftare} yields
\begin{equation} \label{acum}
T_{zw}=-(\tilde Y_\wp -HN_\wp Q)  \tilde N_\wp T\Phi. 
\end{equation}
Note that since $Q$ is no longer assumed to be diagonal, $T_{zw}$ in (\ref{acum}) is no longer lower bidiagonal. Taking (\ref{acum}) into account for the expression of the cost function in (\ref{AS7})
%being  the norm of the $(j,j)$--th (diagonal) entry of $T_{zw}$,  (\ref{AS7})
%\begin{equation} \label{synthesisAS7}
%\min_{\begin{array}{c} Q \in \mathbb{R}(s)^{n \times n} \\ Q \: \text{stable} \end{array}}  \; \;  \big \| \ba{c}-(\tilde Y_\wp-HN_\wp Q)  \tilde N_\wp T\Phi \ea_{jj} \big\| %_\infty
%\end{equation}
%\noindent  where $\ba{c} \cdot \\  \ea_{jj}$ denotes the $j$--th row, $j$--th column entry of the bracketed TFM,
it can be observed that $T_{z_jw_j}$ depends only on the $Q_{jj} , Q_{j(j+1)}$ entries of the Youla parameter, in particular \begin{subequations}\label{tre2}
\begin{equation}
T_{z_jw_j}=-\tilde Y_\wp \tilde N_\wp + N_\wp \big(Q_{jj}-Q_{j(j+1)}\big)\tilde N_\wp, \quad \text{for any} \, 1 \leq j\leq n,
\end{equation}
\begin{equation}
T_{z_jw_{j-1}}=\tilde Y_\wp \tilde N_\wp + N_\wp \big(Q_{j(j-1)}-Q_{jj}\big)\tilde N_\wp, \quad \text{for any} \, 2 \leq j\leq n.
\end{equation}
\end{subequations}
Rewriting (\ref{AS7}) in accordance with (\ref{tre2}), we get
\begin{equation} \label{synthesisAS7bis}
\min_{\begin{array}{c} Q_{jj} , Q_{j(j+1)}  \in \mathbb{R}(s) \\ Q_{jj} , Q_{j(j+1)}  \: \text{stable} \end{array}}  \; \;  \big \| -\tilde Y_\wp \tilde N_\wp H \Phi_j+ HN_\wp \big(Q_{jj}H-Q_{j(j+1)}\big)\tilde N_\wp \Phi_j  \big\|.%_\infty
\end{equation}
%\noindent where $Q_{ij}$ is the $i$--th row, $j$--th column entry of the Youla parameter $Q(s)$.
It can be observed that %the  $(j,j)$--th (diagonal) entry of
if we denote $T_1\overset{def}{=}-\tilde Y_\wp \tilde N_\wp H \Phi_j$ and  $\displaystyle T_2\overset{def}{=}\ba{cc}HN_\wp H\tilde N_\wp \Phi_j   & -HN_\wp \tilde N_\wp \Phi_j  \ea$, with $T_1 \in \mathbb{R}(s)^{}$ and $T_2 \in \mathbb{R}(s)^{1\times 2}$, then (\ref{synthesisAS7bis}) is further  equivalent to
\begin{equation} \label{synthesisAS7bisbis}
\min_{\begin{array}{c} Q_{jj} , Q_{j(j+1)}  \in \mathbb{R}(s) \\ Q_{jj} , Q_{j(j+1)}  \: \text{stable} \end{array}}  \; \;  \big \| T_1+T_2 \ba{c}  Q_{jj} \\Q_{j(j+1)} \ea  \big\|,%_\infty
\end{equation}
\noindent which  is a standard\footnote{After taking all products, the factors involved in the  model--matching problem  %(\ref{synthesisAS7bisbis})
end up being proper transfer functions. The cause of this is the expression (\ref{Ha}) of the improper $H$ combined with the fact that both $N_\wp$ and $\tilde N_\wp$ are strictly proper (Assumption~\ref{A4}).} model--matching problem which can be solved efficiently for the optimal $Q_{jj} , Q_{j(j+1)}$ \cite{Boyd,Francis}.  Furthermore, it can be observed that if $Q_{jj}^* , Q_{j(j+1)}^*$ is a solution to  (\ref{synthesisAS7bis}) then $\tilde Q_{jj}=Q_{jj}^* - Q_{j(j+1)}^*H^{-1}$, $\tilde Q_{j(j+1)}=0$ is also a solution to (\ref{synthesisAS7bis}). Therefore the minimum can be attained for each one of the $n$ local cost--functions from (\ref{AS7}),  via the diagonal Youla parameter $Q^*\overset{def}{=} \mathcal{D} \{\tilde Q_{11}, \tilde Q_{22}, \dots, \tilde Q_{nn} \}$, which plugged into (\ref{exceptzeonal})  yields the optimal  leader information controller.
\end{proof}

 %The computation of the  distributed/decentralized controller that achieves optimal disturbances attenuation is a longstanding challenge in control theory  and it has been known to be an intractable problem in the general case. For those particular configurations where the optimal distributed/decentralized controller can be found, there still exists a loss in performance of the optimal distributed/decentralized scheme relative to the optimal centralized controller.
 Interestingly   enough, the following theorem shows that for homogeneous strings of vehicles and constant inter--spacing policies, % (\ref{spacing}),
the leader information controller achieves global optimality
%for disturbances attenuation
(in the $\mathcal{H}_2$ norm), {\em i.e.}, the same performance as the fully centralized controller.
\begin{theorem}
If we assume all vehicles are identical (by taking $\Phi_k=1$, for all $1 \leq k \leq n)$ %, in the statement of Theorem~\ref{leaderinftare}) 
and if we impose constant inter--spacing policies (\ref{spacing}) (by taking the constant time headway\footnote{See also footnote before equation (\ref{spacinghead}).}  $h=0$ in (\ref{Ha}) or equivalently $H(s)=1$), then the optimal leader information controller achieves global $\mathcal{H}_2$ optimality, {\em i.e.}, the minimum in
\begin{equation} \label{AS8}
\min_{\begin{array}{c} K_Q \; \text{\em{stabilizes}} \; G \end{array}}  \; \big \|  T_{zw}  \big \|_2^2%_\infty
\end{equation}
\noindent is attained by a {\em leader information} controller.
\end{theorem}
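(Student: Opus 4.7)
The plan is to reduce the unconstrained $\mathcal{H}_2$ problem in (\ref{AS8}) to a family of independent scalar model--matching problems, one per entry of $T_{zw}$, and then verify that the joint optimum is realized by a \emph{diagonal} Youla parameter. Under the homogeneity and constant--spacing hypotheses, expression (\ref{acum}) specializes to $T_{zw} = ZT$, with $Z \overset{def}{=} -(\tilde Y_\wp I_n - N_\wp Q)\tilde N_\wp$ and $T$ lower bidiagonal with $1$'s on the diagonal and $-1$'s just below it. A direct computation of $(T_{zw})_{ik} = Z_{ik} - Z_{i,k+1}$ (and $(T_{zw})_{in} = Z_{in}$) gives
\begin{equation*}
(T_{zw})_{ik} = \tilde N_\wp\big[N_\wp\, b_k^{(i)} - \tilde Y_\wp\, c_k^{(i)}\big], \quad b_k^{(i)} \overset{def}{=} Q_{ik}-Q_{i,k+1},\; c_k^{(i)} \overset{def}{=} \delta_{ik}-\delta_{i,k+1},
\end{equation*}
with the convention $Q_{i,n+1}=\delta_{i,n+1}=0$ at $k=n$. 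Using $\|T_{zw}\|_2^2 = \sum_{i,k}\|(T_{zw})_{ik}\|_2^2$, the cost decouples across rows $i$ (because the $i$-th row of $T_{zw}$ depends only on the $i$-th row of $Q$) and across columns $k$ after the change of variable $Q_{ik}\leftrightarrow b_k^{(i)}$.

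Next, I would observe that the map $(Q_{i1},\ldots,Q_{in})\mapsto (b_1^{(i)},\ldots,b_n^{(i)})$ is an upper--triangular bijection on stable $n$-tuples, since it is inverted by $Q_{ik}=\sum_{j\geq k}b_j^{(i)}$; hence the entry-wise lower bound is actually attainable in the larger space of all stable $Q$. Only two of the targets $c_k^{(i)}$ are non--zero, namely $c_{i-1}^{(i)}=-1$ and $c_i^{(i)}=+1$ (with the obvious truncations at $i=1$ and $i=n$). For every $k$ with $c_k^{(i)}=0$ the pointwise optimum is $b_k^{(i)}=0$; for each of the two nonzero entries one solves the \emph{same} scalar model--matching problem $\min_{b\text{ stable}}\|\tilde N_\wp(N_\wp b-\tilde Y_\wp)\|_2^2$, whose minimizer I denote $b^*$. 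Because the two active targets differ only in sign, the row-$i$ optima are $b_i^{(i)\,*}=+b^*$ and $b_{i-1}^{(i)\,*}=-b^*$, while all other $b_k^{(i)\,*}=0$.

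Finally, I would verify that a single diagonal Youla parameter reproduces this row--by--row optimum. Taking $Q^*\overset{def}{=}\mathcal{D}\{b^*,b^*,\dots,b^*\}$ gives $Q_{ik}^*=b^*\delta_{ik}$, so the induced $b$-coordinates in row $i$ are exactly $b_{i-1}^{(i)}=-b^*$, $b_i^{(i)}=+b^*$, and $b_k^{(i)}=0$ otherwise --- matching the unconstrained minimizer identified in the previous paragraph. Plugging $Q^*$ into (\ref{exceptzeonal}) via Theorem~\ref{leaderinftare} therefore produces a bona fide leader information controller attaining the minimum in (\ref{AS8}), proving global $\mathcal{H}_2$ optimality. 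The main obstacle I anticipate is not computational but bookkeeping: one must carefully show that the substitution $Q\leftrightarrow b$ is simultaneously decoupling in the cost and surjective onto stable parameterizations, and handle the boundary rows $i=1,n$; the structural fact that makes the diagonal $Q^*$ globally optimal is precisely the sign alternation $c_{i-1}^{(i)}=-c_i^{(i)}$, which lets a single scalar $Q_{ii}^*=b^*$ solve both row-$i$ sub-problems at once.
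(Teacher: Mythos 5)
Your proof is correct and reaches the same destination as the paper's --- both reduce the global $\mathcal{H}_2$ problem to identical scalar model--matching problems via the entrywise decomposition $\|T_{zw}\|_2^2=\sum_{i,k}\|(T_{zw})_{ik}\|_2^2$ and realize the optimum with the constant diagonal parameter $Q^*=b^*I_n$ --- but your route through the middle is genuinely cleaner. The paper discards the off--bidiagonal entries to get a lower bound, then interchanges $\min$ with the summation (a second inequality), and finally argues attainability to close the sandwich; the logic is sound but the two inequalities must be shown tight a posteriori. You instead introduce the row--wise change of variables $b_k^{(i)}=Q_{ik}-Q_{i,k+1}$, which is a stable--preserving bijection (inverted by the finite sum $Q_{ik}=\sum_{j\ge k}b_j^{(i)}$), so the full cost decouples \emph{exactly} into $n^2$ independent scalar problems with no inequality to tighten; the sign alternation $c_{i-1}^{(i)}=-c_i^{(i)}$ then immediately shows a single scalar $b^*$ serves every active entry, and the inverse map lands precisely on the diagonal $Q^*$. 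What the paper's version buys is brevity and reuse of the already--displayed formulas (\ref{tre2}); what yours buys is that global optimality falls out as an identity rather than a squeezed bound, and the bookkeeping at the boundary rows $i=1,n$ is handled uniformly by the convention $Q_{i,n+1}=\delta_{i,n+1}=0$. One cosmetic remark: your exact count of active scalar problems is $2n-1$ ($n$ diagonal plus $n-1$ subdiagonal), which incidentally corrects the coefficient $n(n-1)$ appearing in the paper's step (\ref{st3}).
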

\begin{proof}
We will use the following property of the $\mathcal{H}_2$ norm
\begin{equation} \label{tre1}
 \big \|  T_{zw} \big \|_2^2 = \sum_{i=1}^{n} \sum_{j=1}^{n}  \big \| T_{z_iw_j} \big \|_2 ^2
\end{equation}
%\noindent where $\ba{c} T_{zw} \\  \ea_{ij}$ denotes the $i$--th row, $j$--th column entry of the $T_{zw}(s)$ TFM (or equivalently$T_{z_iw_j}$ with the notation from Lemma~\ref{stringstab}).
By taking the lower bi--diagonal terms only, it follows that  (\ref{tre1}) further implies
\begin{equation} \label{tre0}
 \big \|  T_{zw} \big \|_2^2 \geq  \sum_{j=1}^{n} \big \|  T_{z_jw_j} \big \|_2 ^2 + \sum_{j=2}^{n} \big \|  T_{z_jw_{j-1}} \big \|_2 ^2
\end{equation}
%Similarly with the proof of Theorem~\ref{AA},
In order to account for all (possibly centralized) stabilizing controllers $K_Q$ in (\ref{AS8}), we consider generic (not necessarily diagonal) Youla parameters $Q\in\mathbb{R}(s)^{n \times n}$ in the parameterization of Theorem~\ref{leaderinftare}. It follows that%Expressing $T_{zw}(\la)$ from (\ref{zwei}) of Proposition~\ref{cheia1} in terms of the doubly coprime factorization (\ref{hurray}) of Theorem~\ref{leaderinftare} yields
%\begin{equation} \label{tre00}
%T_{zw}=-(\tilde Y_o-N_oQ)  \tilde N_oT
%\end{equation}
%and in particular %$\ba{c}T_{zw}(s)\ea_{jj}$ from (\ref{synthesisAS7}) will have the expression from (\ref{synthesisAS7}), while

%where $Q_{ij}$ is the $i$--th row, $j$--th column entry of the Youla parameter $Q(s)$. It follows that

\begin{equation*} %\label{synthesisAS8}
\min_{\begin{array}{c} K_Q \; \text{{stabilizes}} \; G \end{array}}  \; \big \|  T_{zw}  \big \|_2^2
\end{equation*}
\begin{equation*}
\overset{(\ref{acum})}{=} \min_{\begin{array}{c} Q \in \mathbb{R}(s)^{n \times n} \\ Q \: \text{stable} \end{array}}  \; \;  \big \| -(\tilde Y_\wp-N_\wp Q)  \tilde N_\wp T  \big\|_2^2 %_\infty
\end{equation*}
\begin{equation*}
\overset{(\ref{tre0}),(\ref{tre2})}{\geq} \min_{\begin{array}{c} Q \in \mathbb{R}(s)^{n \times n} \\ Q \: \text{stable} \end{array}}  \; \;  \sum_{j=1}^{n} \big \|  -\tilde Y_\wp \tilde N_\wp + N_\wp \big(Q_{jj}-Q_{j(j+1)}\big)\tilde N_\wp \big \|_2 ^2 + \sum_{j=2}^{n} \big \| \tilde Y_\wp \tilde N_\wp + N_\wp \big(Q_{j(j-1)}-Q_{jj}\big)\tilde N_\wp \big \|_2 ^2
\end{equation*}
\begin{equation}
\label{st1}
\overset{}{\geq}  \sum_{j=1}^{n} \min_{\begin{array}{c} Q \in \mathbb{R}(s)^{n \times n} \\ Q \: \text{stable} \end{array}}  \; \;  \big \|  -\tilde Y_\wp \tilde N_\wp + N_\wp \big(Q_{jj}-Q_{j(j+1)}\big)\tilde N_\wp \big \|_2 ^2 + \sum_{j=2}^{n} \min_{\begin{array}{c} Q \in \mathbb{R}(s)^{n \times n} \\ Q \: \text{stable} \end{array}}  \; \;  \big \| \tilde Y_\wp \tilde N_\wp  + N_\wp \big(Q_{j(j-1)}-Q_{jj}\big)\tilde N_\wp  \big \|_2 ^2
\end{equation}
\begin{equation}
\label{st2}
\overset{}{=}  \sum_{j=1}^{n} \min_{\begin{array}{c} Q_{jj} \in \mathbb{R}(s)^{} \\ Q_{jj} \: \text{stable} \end{array}}  \; \;  \big \|  -\tilde Y_\wp \tilde N_\wp + N_\wp Q_{jj}\tilde N_\wp \big \|_2 ^2 + \sum_{j=2}^{n} \min_{\begin{array}{c} Q_{jj} \in \mathbb{R}(s)^{} \\ Q_{jj} \: \text{stable} \end{array}}  \; \;  \big \| \tilde Y_\wp \tilde N_\wp + N_\wp (-Q_{jj})\tilde N_\wp \big \|_2 ^2
\end{equation}
\begin{equation}
\label{st3}
\overset{}{=} n(n-1) \min_{\begin{array}{c} Q_{o} \in \mathbb{R}(s)^{} \\ Q_{o} \: \text{stable} \end{array}}  \; \;  \big \| \tilde Y_\wp \tilde N_\wp  - N_\wp Q_{o}\tilde N_\wp \big \|_2 ^2
\end{equation}
The inequality in (\ref{st1}) is caused by the inter--change of the min with the summation, the equality in (\ref{st2}) follows from the fact that the minimum cost can be achieved by diagonal Youla parameters, while the equality (\ref{st3}) follows from the fact the the  resulted minimization problems are identical.

We solve the last $\mathcal{H}_2$ model--matching problem for $Q_o^*$ (see for example \cite{Boyd}) and it follows that  the minimum in (\ref{AS8}) can be attained via the diagonal Youla parameter $Q^*\overset{def}{=} \mathcal{D} \{ Q_{o}^*,  Q_{o}^*, \dots,  Q_{o}^* \}$, with $Q^*\in \mathbb{R}(s)^{n \times n}$. Finally, when $Q^*$ is plugged into (\ref{exceptzeonal}), it  yields the $\mathcal{H}_2$ optimal {\em leader information controller}.
\end{proof}

We remark that the optimal $\mathcal{H}_2$ leader information controller  \begin{equation} \label{ARS}
\min_{\begin{array}{c} K_Q \; \text{stabilizes} \; G \\  K_Q \; \text{leader information  controller} \end{array}}  \; \Big \|  T_{zw}   \Big \|_2
\end{equation}
can also be computed, since according to (\ref{acum}) and to Theorem~\ref{Youlaaa}, the problem in (\ref{ARS}) is equivalent to the following tractable $\mathcal{H}_2$ model--matching problem \cite{Boyd}

\begin{equation} \label{ARSS}
\min_{\begin{array}{c} Q \in \mathbb{R}(s)^{n \times n} \\  Q \text{diagonal} \end{array}}  \; \Big \| -(\tilde Y_\wp -HN_\wp Q)  \tilde N_\wp T\Phi \Big \|_2
\end{equation}

\subsection{A Practical  $\mathcal{H}_\infty$ Criterion for Controller Design}\label{Praktika}

\underline{The $j$--th Local Problem.} In practice, the local performance objective at the $j$--th vehicle in the platoon ($1 \leq j \leq n$) is formulated as to minimize the effect of the disturbances $w_j$ (at the $j$--th vehicle) both on the interspacing error $z_j$ and on the control effort $u_j$, namely

\begin{equation} \label{AS}
\min_{\begin{array}{c} K_Q \; \text{stabilizes} \; G \\  K_Q \; \text{leader information  controller} \end{array}}  \; \Big \|  \ba{c} T_{z_j w_j} \\ T_{u_j w_j}  \ea  \Big \|_\infty
\end{equation}

\noindent The closed--loop TFM from the disturbances to the controls  is included in the cost in order  to avoid actuator saturation, to regulate the control effort but also to set ``the road attitude'' of the $j$--th vehicle. The $\mathcal{H}_\infty$ norm is used in order to guarantee attenuation  in ``the worst case scenario''. We have dubbed the problem in (\ref{AS}) as {\em the $j$--th local problem}. A convenient feature of the leader information controllers is that both closed--loop terms $T_{z_j w_j}$ and $T_{u_j w_j}$ involved in the cost functional of (\ref{AS}) depend only on the  $Q_{jj}$ entry of the diagonal Youla parameter from Theorem~\ref{leaderinftare}. Therefore, in accordance with Lemma~\ref{stringstab}, when we perform the minimization in (\ref{AS}) after all stabilizing {\em leader information controllers}, %for  each vehicle $j$ in the platoon ($1 \leq j \leq n$),
the $j$--th local  problem (\ref{AS}) can be recast as the following standard\footnote{Due to similar arguments as in footnote (8).}  %$\mathcal{H}^\infty$
$\mathcal{H}_\infty$ model--matching problem \cite[Chapter~8]{Francis}:

\begin{equation} \label{synthesis}
\min_{\begin{array}{c} Q_{jj} \in \mathbb{R}(s) \\ Q_{jj} \: \text{stable} \end{array}}  \; \;  \Big \| \ba{c}- \tilde Y _\wp H \Phi_j \\ - \tilde X_\wp \ea \tilde N_\wp + H\tilde N_\wp \ba{c} N_\wp H\: \Phi_j\\ -M_\wp \ea Q_{jj}  \: \Big\|_\infty. %_\infty
\end{equation}

%In view of Lemma~\ref{stringstab}, in order to achieve disturbances attenuation via a {\em leader information} controller, we need only deal with the norms of the transfer functions $T_{z_jw_j}$ and  $T_{z_{j+1}w_j}$ in (\ref{troix}), for each vehicle $1\leq j \leq n $.

\noindent Note that (\ref{synthesis}) can be efficiently solved for $Q_{jj}$ using existing $\mathcal{H}_\infty$ synthesis numerical routines.  Furthermore, we can always design a {\em leader information controller}  that  simultaneously solves the {\em local problems} for each one of the $n$ vehicles in the string. This is done by solving independently (in parallel, if needed) each $j$--th local problem, for the $j$--th diagonal entry $Q_{jj}$ of the Youla parameter. When plugged into the {\em leader controller} parameterization  of Corollary~\ref{invent}, the resulted diagonal Youla parameter  yields the expression for the local controllers $K_k$ to be placed on board the $k$--th vehicle, ($1\leq k \leq n$).

The local performance objectives imposed in (\ref{AS}) are not sufficient to guarantee the overall behavior of the platoon. The standard  analysis for platooning systems must take into account the effects of the disturbance $w_j$ (at any $j$--th vehicle in the platoon or at the leader) on the errors $z_k$ and controls $u_k$, for all successors in the string ($k>j$).   We will prove next that, as a bonus feature of leader information controllers, the effect of the disturbances $w_j$ on any of its successors $k$ in the string does \underline{not} formally depend on the number $(k-j)$ of  in--between vehicles but only on the following factors: (i) the attenuations obtained at the {\em $j$--th}  and {\em $(j+1)$--th  local problems} respectively (which are optimized by design in (\ref{AS}));  (ii) the stable, minimum phase dynamics $\Phi_j$ and $\Phi_k$ particular to the $j$--th and the $k$--th vehicle, respectively; and (iii) the constant time--headway $H$. In particular, the effect of the disturbances $w_0$ at the leader  on any successor $k\geq 1$ in the string depends on the following: (i) the attenuation obtained at the {\em  $1$--st  local problem}; (ii) the stable, minimum phase dynamics $\Phi_0$ and $\Phi_k$ particular to the leader and the $k$--th vehicle, respectively; and (iii) the constant time--headway $H$.  The precise statement follows:

\begin{coro} \label{stringstabcoro} For any leader information controller, the propagation effect of the disturbances  towards the back of the platoon ({\em sensitivity to disturbances}) is bounded as follows:

  {\bf (A)} The amplification of the disturbance $w_0$ (to the leader's vehicle)  on

   \ \ \ \ \ \  {\bf i)}  the first vehicle in the platoon is given by
\begin{equation}
\Big \|  \ba{c} T_{z_1 w_0} \\ T_{u_1 w_0}  \ea  \Big \|_\infty = \Big \|  \: - \Phi_1^{-1} \Phi_0 H^{-1} \:  \ba{c} T_{z_1 w_1} \\ T_{u_1 w_1}  \ea  \Big \|_\infty  \leq  \Big \|  \: \Phi_1^{-1} \Phi_0 H^{-1} \:  \Big\|_\infty  \Big\| \ba{c} T_{z_1 w_1} \\ T_{u_1 w_1}  \ea  \Big \|_\infty; 
\end{equation}

   \ \ \ \ \ \  {\bf ii)}  the $k$--th vehicle in the platoon, with $k \geq 2$, is given by
\begin{equation}
\Big \|  \ba{c} T_{z_k w_0} \\ T_{u_k w_0}  \ea  \Big \|_\infty = \Big \|  \ba{cc} O & O  \\ O & - \Phi_k^{-1}\Phi_0 H^{-k} \ea  \ba{c} T_{z_1 w_1} \\  T_{u_1 w_1}  \ea  \Big \|_\infty  \leq  \Big \|  \Phi_k^{-1}\Phi_0 H^{-k}  \Big\|_\infty \Big\| \ba{c} T_{z_1 w_1} \\ T_{u_1 w_1}  \ea  \Big \|_\infty. 
\end{equation}

  {\bf (B)} The  amplification of disturbances $w_j$ %(at the $j$--th vehicle)

     \ \ \ \ \ \  {\bf i)}  on the $(j+1)$--th vehicle is given by

\begin{equation*}
\Big \|  \ba{c} T_{z_{j+1} w_j} \\ T_{u_{j+1} w_j}  \ea  \Big \|_\infty =  \Big \|  \ba{cc} -H^{-1} & O  \\ O &  \Phi_{j}^{}\Phi_{j+1}^{-1} \ea  \ba{c} T_{z_j w_j} \\  T_{u_j w_j}  \ea  -   \ba{cc} O & O  \\ O &  \Phi_{j}^{}\Phi_{j+1}^{-1} \ea  \ba{c} T_{z_{j+1} w_{j+1}} \\  T_{u_{j+1} w_{j+1}}  \ea \Big \|_\infty  \leq
\end{equation*}

\begin{equation}
\leq  \Big \|  \ba{cc} -H^{-1} & O  \\ O &  \Phi_{j}^{}\Phi_{j+1}^{-1} \ea \Big \|_\infty   \Big \|   \ba{c} T_{z_j w_j} \\  T_{u_j w_j}  \ea \Big \|_\infty   + \Big \|    \ba{cc} O & O  \\ O &  \Phi_{j}^{}\Phi_{j+1}^{-1} \ea \Big \|_\infty   \Big \|   \ba{c} T_{z_{j+1} w_{j+1}} \\  T_{u_{j+1} w_{j+1}}  \ea \Big \|_\infty;
\end{equation}

   \ \ \ \ \ \  {\bf ii)}  on the $k$--th vehicle, with $k \geq j+2$, is given by

   \begin{equation*}
\Big \|  \ba{c} T_{z_{k} w_j} \\ T_{u_{k} w_j}  \ea  \Big \|_\infty =  \Big \|  \ba{cc} O & O  \\ O &  \Phi_{j}^{}\Phi_{k}^{-1} H^{j+1-k} \ea  \ba{c} T_{z_j w_j} \\  T_{u_j w_j}  \ea  -   \ba{cc} O & O  \\ O &  \Phi_{j}^{}\Phi_{k}^{-1}H^{j+1-k} \ea  \ba{c} T_{z_{j+1} w_{j+1}} \\  T_{u_{j+1} w_{j+1}}  \ea \Big \|_\infty  \leq
\end{equation*}

\begin{equation}
\leq  \Big \|   \Phi_{j}^{}\Phi_{k}^{-1} H^{j+1-k} \Big \|_\infty   \Big \|   \ba{c} T_{z_j w_j} \\  T_{u_j w_j}  \ea \Big \|_\infty   + \Big \|   \Phi_{j}^{}\Phi_{k}^{-1} H^{j+1-k} \Big \|_\infty  \Big \|   \ba{c} T_{z_{j+1} w_{j+1}} \\  T_{u_{j+1} w_{j+1}}  \ea \Big \|_\infty. 
\end{equation}

\end{coro}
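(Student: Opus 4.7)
The plan is to derive each of the four bounds by direct algebraic manipulation of the closed-form expressions (\ref{une})--(\ref{quatre}) supplied by Lemma~\ref{stringstab}, expressing every ``long-range'' closed-loop TFM (those with $k>j$, or from $w_0$ to $k\geq 1$) as an explicit scalar combination of the ``local'' TFMs $T_{z_jw_j}$, $T_{u_jw_j}$ that appear in the $j$-th local problem (\ref{AS}). Once these identities are in place, the stated inequalities follow at once from the submultiplicativity of the $\mathcal{H}_\infty$ norm and the triangle inequality. A pleasant feature is that all the factors involved ($\tilde Y_\wp, \tilde X_\wp, \tilde N_\wp, M_\wp, N_\wp, H, \Phi_k, Q_{kk}$) are scalar rational functions, so their order inside every product is immaterial and the algebra reduces to pattern matching.

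For Part~(A), I would compare $T_{z_1w_0}$ and $T_{u_1w_0}$ from (\ref{une})--(\ref{deux}) against $T_{z_1w_1}$, $T_{u_1w_1}$ obtained by specialising (\ref{troix})--(\ref{quatre}) to $j=k=1$. Both coordinates are seen to share the common multiplier $-\Phi_1^{-1}\Phi_0 H^{-1}$, yielding (A)(i) as an exact identity; the inequality is then a one-line application of submultiplicativity. For (A)(ii), (\ref{une}) immediately gives $T_{z_kw_0}=0$ when $k\geq 2$, so only the lower component of the stacked vector survives, and (\ref{deux}) displays it as $-\Phi_k^{-1}\Phi_0 H^{-k}$ times the local control TFM $T_{u_1w_1}$ read off from (\ref{quatre}) at $j=k=1$.

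Part~(B) is more delicate because both $Q_{jj}$ and $Q_{(j+1)(j+1)}$ enter simultaneously. I would evaluate (\ref{troix})--(\ref{quatre}) at $(k=j)$ and $(k=j+1)$, and also at the ``shifted'' local indices $(j',k')=(j+1,j+1)$, to obtain the local error/control TFMs at positions $j$ and $j+1$. The crucial observation is a telescoping identity: the factor $(Q_{jj}-Q_{(j+1)(j+1)})$ appearing in the ``$k=j+1$'' row of (\ref{quatre}) can be rewritten (up to a scalar weight involving $\Phi_j\Phi_{j+1}^{-1}$ and $H$) as the difference $T_{u_jw_j}-T_{u_{j+1}w_{j+1}}$ of two consecutive local control TFMs. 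An analogous identity for $T_{z_{j+1}w_j}$ follows from (\ref{troix}) by matching it against the local error TFMs $T_{z_jw_j}$ and $T_{z_{j+1}w_{j+1}}$. Assembling the two coordinates produces the claimed $2\times 2$ matrix identity in (B)(i); the triangle inequality and submultiplicativity then yield the bound. For (B)(ii), (\ref{troix}) forces $T_{z_kw_j}=0$ whenever $k\geq j+2$, so only the control entry is nonzero, and the exact same telescoping identity from (\ref{quatre}), now carrying the weight $\Phi_j\Phi_k^{-1}H^{j+1-k}$ dictated by that equation, completes the argument.

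The main obstacle is the bookkeeping of this telescoping identity in Part~(B): recognising that the non-local term $T_{u_{j+1}w_j}$ (which involves the \emph{difference} $Q_{jj}-Q_{(j+1)(j+1)}$) is algebraically a weighted difference of two consecutive local TFMs, each of which depends on only one of the two diagonal Youla entries. Once this identity is secured for $k=j+1$ and then extended to $k\geq j+2$ by transporting the extra scalar factor $\Phi_j\Phi_k^{-1}H^{j+1-k}$ that already appears in (\ref{quatre}), everything else reduces to the standard inequalities $\|AB\|_\infty\leq\|A\|_\infty\|B\|_\infty$ and $\|A+B\|_\infty\leq\|A\|_\infty+\|B\|_\infty$.
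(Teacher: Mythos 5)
Your proposal is correct and follows exactly the route the paper takes: its proof consists precisely of reading off from Lemma~\ref{stringstab} that each long--range closed--loop map is a scalar multiple (or, in case (B), a weighted difference exploiting the $(Q_{jj}-Q_{(j+1)(j+1)})$ telescoping you identify) of the local maps $T_{z_jw_j},T_{u_jw_j}$, and then applying sub--multiplicativity and the triangle inequality for the $\mathcal{H}_\infty$ norm. No further comment is needed.
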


\begin{proof}
The proof follows by straightforward algebraic manipulations of the expressions of the closed--loop TFMs provided by Lemma~\ref{stringstab}  and by the sub--multiplicative property of the $\mathcal{H}_\infty$ norm.
\end{proof}

\begin{rem} It is important to remark here that if we are to consider constant inter-spacing policies (or, equivalently, if we take the expression of  the constant time--headway $H(s)=1$),  then the attenuation bounds provided  by Theorem~\ref{stringstab} do not depend on the number $(k-j)$  of  in-between vehicles.  This is consistent with the definition  introduced in \cite{rick} for {\em string stability} of platoon formations. Furthermore, if we do consider constant time--headway policies then the negative powers of the constant time--headway $H(s)$ having subunitary norm, will introduce additional attenuation, especially at high frequency via the strong effect of the roll-off.
\end{rem}

\begin{rem} \label{meri} Vehicles desiring to enter the formation  should indicate their intention to the vehicles in the string. The vehicles in the string where the merging maneuver is to be performed may increase their interspacing distance  ({\em e.g.} the distance based headway component of the interspacing policy) such as to allow for the merging vehicle to enter the formation safely.  A remarkable feature of the leader information controllers introduced here is the fact that when dealing with merging traffic the only needed reconfiguration of the global scheme is at the follower of the merging vehicle, which must acknowledge the ``new'' unimodular factor $\Phi_k$ of the vehicle appearing in front of it. Such a maneuver can be looked at as a disturbance to the merging vehicle, to be quickly attenuated by the control scheme. Equally important, if the broadcasting of any vehicle in the platoon gets disrupted, then the global scheme can easily reconfigure, such that the non--broadcasting vehicle becomes the leader of a new platoon.% (as in Figure~\ref{f2}).
\end{rem}

\section{Dealing with Communications Induced Time--Delays}
\label{delaydelay}

In this section we look at the factual scenario when there exists a time delay on each of  the feedforward links $u_k$, with $1 \leq k \leq (n-1)$. In practice, these delays are caused by the  physical limitations of the wireless communications system used for the implementation of the feedforward link, entailing a time delay $e^{-\theta s}$ (with $\theta$ typically around $20$ ms\footnote{For wireless communications systems based on high frequency digital radio, such as WiFi, ZigBee or Bluetooth. In practice, the time delays will be time--varying, but they can be well--approximated by a constant of their corresponding {\em nominal value}.}) at the receiver of the broadcasted  $u_k$ signal (with $1 \leq k \leq  (n-1)$). We consider that the delay is the same for all vehicles, since all members of the platoon use similar wireless communications systems and we assume that the delay is known from technological specifications. This type of situation is represented  in Figure~\ref{f2bis}, if we consider the  $\Psi$ blocks to be equal to $e^{-\theta s}$, with $\theta \neq 0$.  We will show next  that in the presence of such time delays in the implementation of the  leader information controllers of Theorem \ref{leaderinftare} (and Corollary~\ref{invent}), the diagonal sparsity pattern of the resulted closed--loop TFM $T_{zw_0}$ is compromised as it becomes lower triangular and it no  longer satisfies Definition~\ref{leader}. This means that the resulted (wireless communications based) physical implementation  of any %{\em leader information controller}
controller from Corollary~\ref{invent} will in fact not be a {\em bona fide}  leader information controller.  Furthermore, it can be shown that the effects of the communications delays drastically alter the closed loop performance \cite{rick} as they necessarily lead to string instability.

%\marginpar{am scos ecuatia, nu avea oricum rost aici}
%\begin{figure}
%\centering
%\hspace{-5mm}
%\framebox{
%{\scriptsize
%\begin{tabular}{c}
%\cr
% $\ba{c} u_1 \\ u_2 \\ u_3 \\ \vdots \\ u_{n-1} \\ u_n  \ea = \ba{ccccccc}  O & O & O &  \dots & O & O\\ \Phi_2^{-1}\Phi_1e^{-\theta s} & O & O &  \dots & O & O\\ O & \Phi_3^{-1}\Phi_2e^{-\theta s} & O &  \dots & O & O \\ O & O & \Phi_4^{-1}\Phi_3 e^{-\theta s} &  \dots & O & O \\ \vdots & \vdots & \vdots & \ddots & \vdots & \vdots \\  O & O & O &  \dots &\Phi_n^{-1}\Phi_{n-1} e^{-\theta_{} s}  & O  \ea \star \ba{c} u_1 \\ u_2 \\ u_3 \\ \vdots \\ u_{n-1} \\ u_n\ea +
% \ba{cccccc}  K_{1} & O & O & \dots  & O & O\\ O & K_{2}  & O & \dots  & O & O \\ O & O & K_{3} & \dots  & O & O \\ O & O & O & \ddots & O & O  \\ \vdots & \vdots & \vdots & \ddots & \vdots & O \\ O & O & O &  \dots & O & K_n \\  \ea \star \ba{c} z_1 \\ z_2 \\ z_3 \\ \vdots \\ z_{n-1} \\ z_n \ea $
%\end{tabular}}
%\end{figure}

\subsection{The Effect of Communications Time Delays on the Control Performance}
In order to make our point with illustrative simplicity let us consider (for this subsection only) the case of platoons with identical vehicles  ({\em i.e.}, $\Phi_k=1$, for all $1\leq k \leq n$) and constant interspacing policies, ({\em i.e.},  $H(s)=1$). Under these assumptions, the equation of the controller from Figure~\ref{f2bis}, with $\Psi=e^{-\theta s}$, reads:
\begin{equation}
 \ba{c} u_1 \\ u_2 \\ u_3 \\ \vdots \\ u_{n-1} \\ u_n  \ea = \ba{ccccccc}  O & O & O &  \dots & O & O\\ e^{-\theta s} & O & O &  \dots & O & O\\ O & e^{-\theta s} & O &  \dots & O & O \\ O & O & e^{-\theta s} &  \dots & O & O \\ \vdots & \vdots & \vdots & \ddots & \vdots & \vdots \\  O & O & O &  \dots & e^{-\theta_{} s}  & O  \ea \star \ba{c} u_1 \\ u_2 \\ u_3 \\ \vdots \\ u_{n-1} \\ u_n\ea +
 \ba{cccccc}  K_{1} & O & O & \dots  & O & O\\ O & K_{2}  & O & \dots  & O & O \\ O & O & K_{3} & \dots  & O & O \\ O & O & O & \ddots & O & O  \\ \vdots & \vdots & \vdots & \ddots & \vdots & O \\ O & O & O &  \dots & O & K_n \\  \ea \star \ba{c} z_1 \\ z_2 \\ z_3 \\ \vdots \\ z_{n-1} \\ z_n \ea
 \end{equation}

\noindent  or equivalently

\begin{equation}  \label{residual}
\ba{c} u_1 \\  u_2 \\ u_3 \\ \vdots \\  u_{n-1} \\ u_n  \ea = \ba{ccccccc}  1 & O & O &  \dots & O & O\\ e^{-\theta s} & 1 & O &  \dots & O & O\\ e^{-2\theta s} & e^{-\theta s} & 1 &  \dots & O & O \\ e^{-3 \theta s} & e^{-2 \theta s} &  e^{-\theta s}  &  \ddots & O & O \\ \vdots & \vdots & \vdots & \ddots & \vdots & \vdots \\  e^{(-n+1) \: \theta s} & e^{(-n+2) \: \theta s}  & e^{(-n+3) \: \theta s}&  \dots & e^{- \theta s}  & 1  \ea \ba{cccccc}  K_{1} & O & O & \dots  & O & O\\ O & K_{2} & O & \dots  & O & O \\ O & O & K_{3}  & \dots  & O & O \\ O & O & O & \ddots & O & O  \\ \vdots & \vdots & \vdots & \ddots & \vdots & O \\  O & O&  O &  \dots & O & K_n \\  \ea \star \ba{c} z_1 \\  z_2 \\  z_3 \\ \vdots \\  z_{n-1} \\  z_n \ea
\end{equation}

%we use an approximation $R_k(s)$ for $e^{-\theta_k s}$ ({\em e.g.} the Pade approximation) such that $\epsilon_k(s)\overset{def}{=}e^{-\theta_ks}/R_k(s)$ is the residual.
%\marginpar{am scos referire la Prop. VII. 4}
%We look now at the result of Proposition~\ref{Subspace} noting that (for identical vehicles and constant interspacing policies)  $\Phi=I_n$ and $T^{-1}=\mathcal{R} \{1,1\dots 1 \} \in \mathbb{R}^{n \times n}$. 
By employing Theorem~\ref{leaderinftare}, it can be checked that {\em any} leader information controller,  belongs to the following set $\mathcal{S}$, defined as
\begin{equation} \label{SSSS}
\mathcal{S}\overset{def}{=} \big\{ K\in \mathbb{R}(s)^{n \times n} \:  \big | K=T^{-1}\mathcal{D} \{D_{11}, D_{22} \dots D_{nn}\} \: \text{with} \: D_{jj}\in \mathbb{R}(s), \: 1\leq j\leq n\big\}.
\end{equation}
The argument follows by straight forward algebraic manipulations starting from the right coprime factorization $K_Q=\tilde X_Q \tilde Y_Q^{-1}$ of any leader information controller.  Clearly, the controller from (\ref{residual}) belongs to the set $\mathcal{S}$ in (\ref{SSSS}) (and is therefore a  leader information controller) if and only if $\theta=0$ or, equivalently, in the absence of any communications delay. We also remark from (\ref{residual}) that the time--delays  propagate  ``through the controller'' downstream the platoon  and the delays accumulate toward the end of the platoon, in a manner depending on the number of vehicles  in the string (specifically $n$). An in depth analysis of the propagation effect of feedforward communications delays %$e^{- \theta s}$
through a  platoon of vehicles can be found in \cite{rick}.
\begin{figure}
\hspace{-3mm}
\includegraphics[scale=.2]{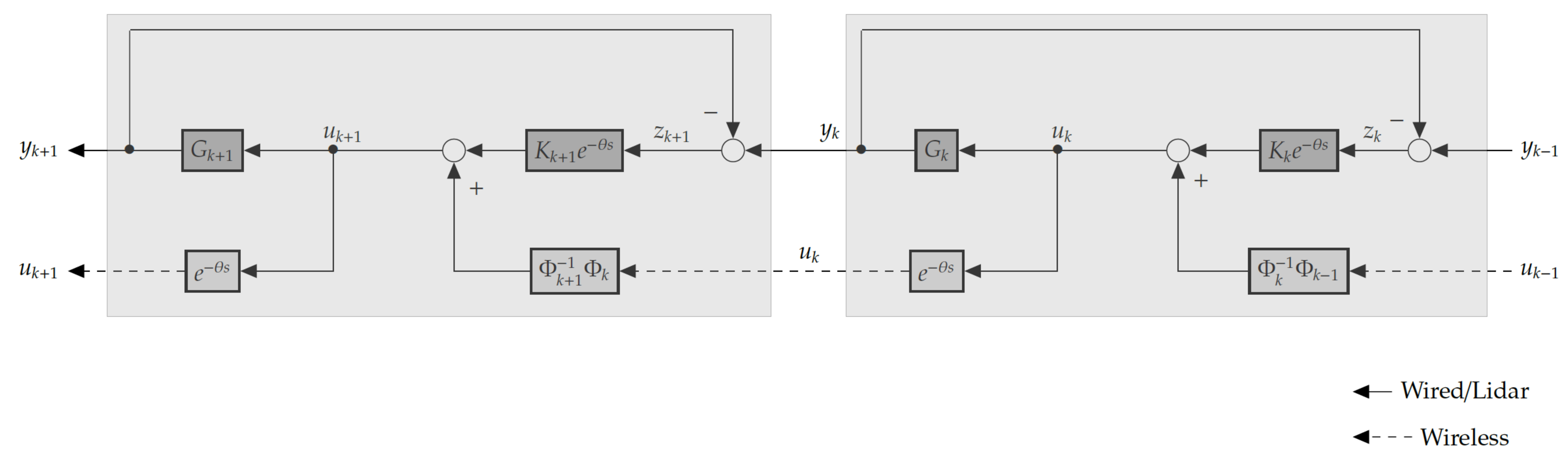}
\caption{Leader Information Control with Compensation of Communications Delay}
\label{f3}
\end{figure} 

\begin{rem}\label{haihai}
We remind here the basic fact known in control theory that the delay $e^{-\theta s}$ (on any of the feedforward channels $u_k$,  with $1\leq k \leq n$) cannot be efficiently compensated by a series connection with a linear filter on the feedforward path, such as a rational function approximation of the anticipative element $e^{\theta s}$, that would ``cancel out'' the effect of the delay.
\end{rem}

\subsection{A Delay Compensation Mechanism Using  Synchronization}

 In this subsection we will show how the communications induced delays can be compensated at the expense of a negligible loss in performance.  We place a delay of  exactly $\theta$ seconds on each of the sensor measurements $z_k$. This delay appears in Figure~\ref{f3}\footnote{To make the graphics more readable we  have illustrated  the case in which the constant time--headway policy  has been removed, meaning that we considered $H(s) =1$. See also the footnote related to Figure~\ref{f2bis}.} 
 as an $e^{-\theta s}$ factor in the transfer function  $K_{k+1}$, for any $1\leq k \leq n-1$.

Having a delay $e^{-\theta s}$ on both $u_k$   and $z_{k+1}$ is equivalent with having an $e^{-\theta s}$  delay in the model of the $(k+1)$--th vehicle  $G_{k+1}$, for any $1\leq k \leq n-1$.   The argument for this fact is the following controller equation for the equivalent scheme of Figure~\ref{f4}:
 %in (\ref{nice}).

\begin{equation*}
\ba{c} u_1 \\ e^{-\theta s}\star u_2 \\ e^{-\theta s}\star u_3 \\ \vdots \\ e^{-\theta s}\star u_{n-1} \\ e^{-\theta s}\star u_n  \ea = H^{-1}\ba{ccccccc}  O & O & O &  \dots & O & O\\ \Phi_2^{-1}\Phi_1e^{-\theta s} & O & O &  \dots & O & O\\ O & \Phi_3^{-1}\Phi_2e^{-\theta s} & O &  \dots & O & O \\ O & O & \Phi_4^{-1}\Phi_3 e^{-\theta s} &  \dots & O & O \\ \vdots & \vdots & \vdots & \ddots & \vdots & \vdots \\  O & O & O &  \dots &\Phi_n^{-1}\Phi_{n-1} e^{-\theta s}  & O  \ea \star \ba{c} u_1 \\ u_2 \\ u_3 \\ \vdots \\ u_{n-1} \\ u_n\ea+
\end{equation*}
\begin{equation} \label{nice}
+ H^{-1}\ba{cccccc}  K_{1} e^{-\theta s} & O & O & \dots  & O & O\\ O & K_{2} e^{-\theta s}\: & O & \dots  & O & O \\ O & O & K_{3} e^{-\theta s}\: & \dots  & O & O \\ O & O & O & \ddots & O & O  \\ \vdots & \vdots & \vdots & \ddots & \vdots & O \\ O & O & O &  \dots & O & K_n e^{-\theta s}\: \\  \ea \star \ba{c} z_1 \\  z_2 \\  z_3 \\ \vdots \\  z_{n-1} \\  z_n \ea
\end{equation}

Note that (\ref{nice}) results directly from the equation for the leader information controller of Corollary~\ref{invent} by multiplying both sides  to the left with the $n \times n$ diagonal TFM $\mathcal{D} \{1, e^{-\theta s}\:, e^{-\theta s}\:, \dots, e^{-\theta_{} s}\:\}$. It is important to  observe that  the controller given in (\ref{nice}) acts on the $e^{-\theta s}$ delayed version of the platoon's plant %(\ref{Plant})
that the controller given in (\ref{fff}) acts on -- in the statement of Corollary~\ref{invent}. For the purposes of designing the sub--controller $K_{k+1}$, the $e^{-\theta s}$  time delay will be considered to be part of the $G_{k+1}$ plant model  (when employing for instance the methods introduced in Subsection~\ref{Praktika}). 
\begin{rem}
In practice the time delay $\theta$ may be chosen to be the maximum of the latencies of all vehicles in the string, where a vehicle's latency is defined to be the sum of the nominal (or worst case scenario) time delay of the electro--hydraulic actuators  with the nominal (or worst case scenario) time delay of the wireless communications. The homogeneity of the latencies of all vehicles in the string can be simulated and implemented using high accuracy GPS time base synchronization mechanisms. Such synchronization mechanisms will therefore produce fixed, commensurate and point--wise delays, thus avoiding the inherent difficulties caused by time--varying or stochastic  or distributed delays. The LTI controller synthesis can then be performed by taking a conveniently  chosen Pade rational approximation of $e^{-\theta s}$ to be included in the expression of $G_\wp$ from Assumption~\ref{A4}. It is a well known fact that such an approximation will introduce additional non--minimum phase zeros in $G_\wp$ and consequently some loss in performance. However, and this is important, the resulted controllers of Figure~\ref{f4} are leader information controllers and will therefore feature all the structural properties discussed in Sections~\ref{MR} and \ref{tuneup}.
\end{rem}

\begin{figure}
\centering
\includegraphics[scale=.2]{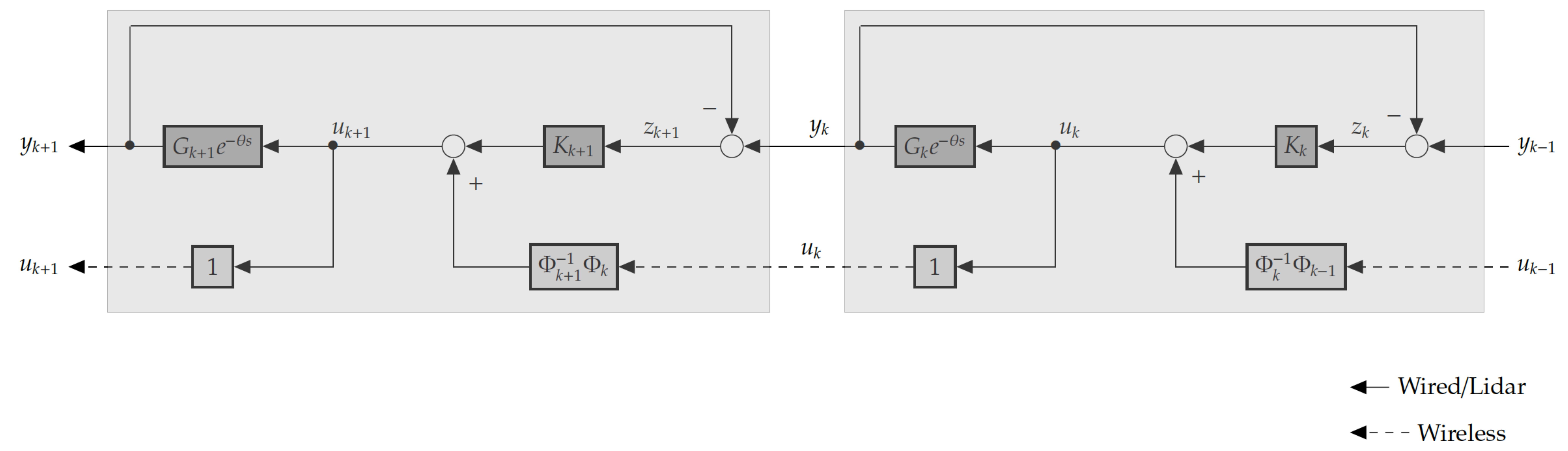}
\caption{Leader Information Control Scheme Equivalent with Figure~\ref{f3}}
\label{f4}
\end{figure}

\section{Comparison with State--of--the--Art}
\label{soa}
In this Section we provide a comparison of our method with existing results. In Subsection~\ref{nnasol} we look at the recent CACC design from \cite{nasol}, which has important conceptual similarities with  leader information controllers. In Subsection~\ref{vrajeala} we look mainly at  the {\em indirect leader brodacast} architectures analyzed in \cite{rick}. Finally, in Subsection~\ref{QQII} we discuss connections with {\em quadratic invariant} feedback configurations.

\subsection{Recent CACC Design Methods} \label{nnasol}

Recently, the authors of \cite{nasol, nasol2} introduced  a control scheme in which each vehicle in the platoon broadcasts its control signal to its successor in the string, in a  similar manner with our leader information controller.  The control law in \cite {nasol} is designed such as to account for  a Pade approximation of the feedforward time--delay induced by the wireless broadcast of the control signal. For this reason, but also due to the manner in which the $\mathcal{H}_\infty$ controller synthesis problem is posed in \cite{nasol, nasol2}, the resulting controller from \cite[Section~V]{nasol}  will never be   a  leader information controller, as we show next.

 {\bf Fact.} The controller having the expression from \cite[(27)/pp.858]{nasol}, which minimizes the $\mathcal{H}_\infty$  mixed sensitivity criterion in \cite[(28)/pp.858]{nasol}  \underline{is not} a {\em leader information controller}.
 \begin{proof} In \cite{nasol} all vehicles are assumed to be dynamically identical, therefore we will take the unimodular  factors $\Phi_k$ from Assumption~\ref{A4} to be the same for all vehicles. Consequently, for any fixed $k \geq 2$, the leader information control law (\ref{fff}) produced on board the $k$--th vehicle, takes the following form (according to Corollary~\ref{invent}):
 \begin{equation} \label{cenasol}
 u_k=u_{k-1} + K_k z_k ,
 \end{equation}
 \noindent with $K_k$ as in (\ref{kakaka}). %We remark that for strings of homogeneous vehicles %(such as the ones studied in \cite{nasol}),
 %in the expression (\ref{fff}) of
We remark that for {\em any} leader information control law  (\ref{cenasol}), the feedforward filter associated with $u_{k-1}$ in (\ref{fff}) is such that $\Phi_k^{-1}\Phi_{k-1}=1$, which is never the case for the controllers from \cite[(27)/pp.858]{nasol}. The reason for this is  the ``asymmetry'' from \cite{nasol}  between the feedforward branch of $u_{k-1}$ which is time delayed and the feedback branch $z_k$ which is not.
 \end{proof}

The qualitative differences between the two schemes are further illustrated by the wave forms shown in the numerical example provided in the next Section. The numerical example features the structural properties  emphasized in Subsection~\ref{strruct}: it achieves string stability and it renders evident the elimination of the accordion effect {\em in the presence of communications delays}.  This is due to the fact that the  approach in \cite{nasol} only looks at the ``local'' closed--loops associated with a single vehicle in the string, while our analysis examines the closed--loop TFMs  of the entire platoon. Our discussion also concludes that for platooning control  the only  ``local'' measurements needed at each agent in the string are the inter--spacing distance with respect to its predecessor and the predecessor's control signal. This is an important point, since it clarifies previous conjectures \cite[Section~V--B]{nasol},\cite[pp.~5]{TRB}, \cite{nasol2} that additional  information from multiple predecessors (``beyond the direct line of sight'') might lead to superior performance, since they provide a ``preview of disturbances''.

With  respect to the first experimental validation in \cite[Section VI]{nasol} performed for a string comprised of two vehicles, it is worthwhile to mention that the results presented here emphasize the fact that the vehicle immediately following the leader (specifically vehicle with index ``$1$'' in our notation) does not benefit from the transmission of the leader's control signal $u_0$ which is  in general considered to be a reference signal for the entire platoon. This observation is especially useful since it implies that a platoon of vehicles equipped with the current control architecture could follow on the highway a leader vehicle operated by a human driver. Similarly, if the wireless transmission of any vehicle in the platoon gets disrupted, the global control scheme scheme can easily reconfigure such that the non--transmitting vehicle becomes the leader of a new platoon.
%Further arguments on the distinction between our leader information controller and the control scheme from  \cite{nasol} will be presented in the sequel in Section~\ref{delaydelay},  pertaining to closed--loop performance in the presence of communication delays.   The discussion  is related to the manner in which the broadcasting delays propagate through the string of vehicles (phenomenon recently analyzed in detail in \cite{rick}).

\subsection{Other Considerations} \label{vrajeala}
 The so called  {\em indirect leader broadcast} scheme from \cite{rick} studied for homogeneous strings of vehicles presents certain similarities with our leader information controller from Theorem~\ref{leaderinftare},  with the distinct feature that in our leader information controller  we broadcast the control signal of the predecessor vehicle instead of an estimate of the leader's state. The control signal is basically generated on board of the predecessor vehicle, hence there is no need to estimate it and the fact that {\em exact} information is broadcasted  (with some unavoidable time--delay)  has profound implications in terms of the performance of the closed loop. Furthermore, the leader information control scheme from Corollary~\ref{invent}   can be adapted such as to compensate for the feedforward time--delay induced by the wireless communication broadcasting of the predecessor's control signal, as  explained in full detail in  Section~\ref{delaydelay}.

The particular type of structure featured by the controller in (\ref{fff}) has been initially investigated in
\cite{R, CDC} on the basis of the so--called {\em dynamical structure function} of a LTI network, as introduced in \cite{Sean08}. One particular topology discussed in \cite{R, CDC} is the ``ring'' network with LTI dynamics, while the controller from (\ref{fff}) of this paper features a ``line'' topology (in fact a unidirectional  ``ring'' with the link between  agents $n$ and $1$ cut off).  The scope of the state--space analysis from \cite{R, CDC} is to establish the connections between all the left coprime factorizations  (\ref{exceptzeonal}) associated with a certain TFM $K$  and  all possible dynamical structure functions \cite{Sean08} associated with the same TFM.

\subsection{Connections with Quadratic Invariance} \label{QQII}
The $\mathcal{H}_2$ optimality feature discussed in Subsection~\ref{consid}, stimulated the investigation of eventual connections of leader information controllers with quadratic invariant feedback structures.  The so--called quadratically invariant ({\bf QI}) configurations \cite{Michael's_I3E} constitute the largest known class of tractable problems in decentralized control. In this subsection we address the connections between QI and the {\em leader information controllers} for platooning. In many cases of interest, the decentralized nature of the control problem can be formulated by constraining the stabilizing controller $K\in \mathbb{R}(s)^{n \times n}$ to belong to a pre--specified linear subspace $\mathcal{S}$ of $\mathbb{R}(s)^{n \times n}$.  Often, this framework is used to impose sparsity constraints on the controller, by taking for instance $\mathcal{S}$ to be the subspace of all diagonal TFMs  in $\mathbb{R}(s)^{n \times n}$(or the subspace of all lower triangular TFMs in $\mathbb{R}(s)^{n \times n}$). The authors of \cite{Michael's_I3E} identified a property (dubbed quadratic invariance)  of the plant $G$ in conjunction with the controller's constraints set $\mathcal{S}$,  that guarantees a convex parameterization of all admissible stabilizing controllers (belonging to $\mathcal{S}$).

\begin{defn}
\label{QI} \cite[Definition 2]{Michael's_I3E}
A closed linear subspace $\mathcal{S}$ of $\mathbb{R}(s)^{n \times n}$ is called {\em quadratically invariant}  under the plant $G$ if
\begin{equation} \label{EqQI}
 K  G K \in \mathcal{S}, \quad \text{for all} \quad  K\:  \in \mathcal{S}.
\end{equation}
\end{defn}

\begin{defn} \label{fdbckptrans}
Define the  feedback transformation $\varg_G: \mathbb{R}(s)^{n \times n} \rightarrow \mathbb{R}(s)^{n \times n}$ of $G$ with $K$, as follows:
\begin{equation}
\label{hG}
\varg_G(K) \overset{def}{=}K \big{(}I+ G K \big{)}^{-1}, \qquad K \in  \mathbb{R}(s)^{n \times n}.
\end{equation}
\end{defn}

%\begin{rem} \label{invfdbckptrans}
%The feedback transformation $h_G(\cdot) $ is invertible, and its inverse is given by

%\begin{equation}
%\label{invhG}
%\varg^{-1}_G(K) \overset{}{=}K \big{(}I- G K \big{)}^{-1}
%\end{equation}

%\end{rem}
%\begin{proof} Note that $h_G(\cdot) $ is well--posed because $K$ is proper and $G$ is strictly proper. The rest of the proof follows by direct algebraic computations and is omitted for brevity.
%\end{proof}

The intrinsic features of QI configurations are rooted in invariance principles (such as the earlier concept of {\em funnel causality} \cite{Voulg1,Voulg2,Voulg3,Voulg4}) best encapsulated by the following property:
\begin{theorem} \cite[Theorem~14]{Michael's_I3E}
\label{Mike's}
Given a sparsity constraint $\mathcal{S}$, the following equivalence holds:
\begin{equation} \label{Rotk}
\mathcal{S}\; \text{is QI under} \; G \Longleftrightarrow \varg_{G}(\mathcal{S} ) = \mathcal{S},
\end{equation} where we adopt the following abuse of notation: $$ \varg_{G}(\mathcal{S} )\overset{def}{=} \{\varg_G(K)\: | \: K \in \mathcal{S} \}.$$
\end{theorem}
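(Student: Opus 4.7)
The plan is to handle the two implications separately. For the easy direction ($\varg_G(\mathcal{S}) = \mathcal{S} \Rightarrow$ QI), I would fix $K \in \mathcal{S}$, note that $tK \in \mathcal{S}$ for every scalar $t$, and invoke the hypothesis to get $\varg_G(tK) \in \mathcal{S}$. The elementary identity
\[
tK - \varg_G(tK) \;=\; tK\bigl[I - (I+tGK)^{-1}\bigr] \;=\; t^2\, K(I+tGK)^{-1}GK
\]
then places $K(I+tGK)^{-1}GK$ in $\mathcal{S}$ for every $t \neq 0$; sending $t \to 0$ and using closedness of $\mathcal{S}$ yields $KGK \in \mathcal{S}$, which is precisely the QI property.

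For the converse (QI $\Rightarrow \varg_G(\mathcal{S}) = \mathcal{S}$), the starting point would be the geometric series $\varg_G(K) = \sum_{n\geq 0}(-1)^n K(GK)^n$, which reduces the claim $\varg_G(K) \in \mathcal{S}$ to showing that each monomial $K(GK)^n$ lies in $\mathcal{S}$ -- once that is established, every partial sum is in $\mathcal{S}$ by linearity and the limit follows by closedness. I would prove the monomial claim by induction on $n$: the cases $n=0,1$ are immediate from $K \in \mathcal{S}$ and $KGK \in \mathcal{S}$ (QI), and for the inductive step, assuming $K(GK)^i \in \mathcal{S}$ for all $i \leq N$ and fixing $m \leq N$, I would consider the one-parameter family $K_t := K + t\,K(GK)^m \in \mathcal{S}$ and compute
\[
K_tGK_t \;=\; KGK + 2t\, K(GK)^{m+1} + t^2\, K(GK)^{2m+1}.
\]
Since QI forces $K_tGK_t \in \mathcal{S}$ for every $t$, taking the linear combinations of the $t=+1$ and $t=-1$ instances extracts both $K(GK)^{m+1}$ and $K(GK)^{2m+1}$ into $\mathcal{S}$, closing the induction. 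The opposite inclusion $\mathcal{S} \subseteq \varg_G(\mathcal{S})$ would then follow by symmetry: QI of $\mathcal{S}$ under $G$ is equivalent to QI under $-G$ (because $\mathcal{S}$ is a subspace and $-KGK$ differs from $KGK$ only by a sign), so the same argument applied with $-G$ in place of $G$ gives $\varg_{-G}(\mathcal{S}) \subseteq \mathcal{S}$; a short direct calculation verifies $\varg_G \circ \varg_{-G} = \mathrm{id}$, whence $\mathcal{S} = \varg_G(\varg_{-G}(\mathcal{S})) \subseteq \varg_G(\mathcal{S})$.

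The main obstacle will be making the analytic layer rigorous. The statement treats $\mathcal{S}$ as a closed linear subspace of $\mathbb{R}(s)^{n \times n}$, but convergence of the Neumann series $\sum (-1)^n K(GK)^n$ and the very notion of closedness both presuppose a topology in which the argument can be carried out -- the natural choice being a Banach algebra such as $\mathcal{R}H^\infty$ equipped with the $\mathcal{H}_\infty$ norm (as in the Rotkowitz--Lall formulation), possibly after a preliminary scaling/homotopy argument to ensure that the spectral radius of $GK$ is strictly less than one. By contrast, the algebraic core of the proof -- the inductive extraction of $K(GK)^{m+1}$ from the quadratic form $K_tGK_t$ -- is purely linear-algebraic and should go through without difficulty.
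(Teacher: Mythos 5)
The paper does not prove this statement at all: it is quoted verbatim, with citation, as Theorem~14 of Rotkowitz--Lall \cite{Michael's_I3E}, so there is no in-paper proof to compare against. Judged on its own terms, your reconstruction is essentially the original argument and its algebraic core is correct. The computation $tK - \varg_G(tK) = t^2 K(I+tGK)^{-1}GK$ checks out and, after dividing by $t^2$ and letting $t\to 0$, does yield $KGK\in\mathcal{S}$; the polarization step $K_tGK_t = KGK + 2t\,K(GK)^{m+1} + t^2 K(GK)^{2m+1}$ with $t=\pm 1$ correctly extracts $K(GK)^{m+1}\in\mathcal{S}$ and closes the induction; and the identity $\varg_G\circ\varg_{-G}=\mathrm{id}$ (via $I+G\varg_{-G}(K) = (I-GK)^{-1}$) combined with the observation that QI under $G$ and under $-G$ coincide is a clean way to get the reverse inclusion $\mathcal{S}\subseteq\varg_G(\mathcal{S})$.

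The one genuine issue is exactly the one you flag: the Neumann series $\sum_{n\ge 0}(-1)^n K(GK)^n$ need not converge for a given $K\in\mathcal{S}$, and ``closed subspace of $\mathbb{R}(s)^{n\times n}$'' has no canonical topology. Your proposed fix (work in an $\mathcal{H}_\infty$-type Banach algebra after rescaling) is one route, but note that rescaling $K\mapsto tK$ changes the element whose image you are trying to place in $\mathcal{S}$, so by itself it only proves $\varg_G(tK)\in\mathcal{S}$ for small $t$, not for $t=1$. The standard way to bridge this gap in the rational setting --- and the one implicitly relied on when $\mathcal{S}$ is a sparsity or finite-dimensional subspace constraint, as in this paper --- is to observe that each entry of $\varg_G(tK)=tK(I+tGK)^{-1}$ is a rational function of $t$; membership in $\mathcal{S}$ (vanishing of prescribed linear functionals of the entries) for all $t$ in a neighborhood of $0$ then forces membership for all $t$ at which $I+tGK$ is invertible, in particular $t=1$. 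With that substitution for your ``scaling/homotopy argument,'' the proof is complete and matches the argument in \cite{Michael's_I3E}.
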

%An alternative algebraic proof of Theorem (\ref{Rotk})  is given in \cite[Theorem~9]{Lessard:2010be}.

The main attribute of QI feedback configurations is that the corresponding constrained optimal $\mathcal{H}_2$--control problem (involving the norm of a Linear Fractional Transformation of the plant $G$) is tractable:
\begin{equation} \label{QILFT}
\min_{\begin{array}{c} K \; \text{stabilizes} \; G \\  K \in \mathcal{S} \end{array}}  \; \Big \|   T_{11} + T_{12}K \big{(}I+ G K \big{)}^{-1}T_{21}    \Big \|_2.
\end{equation}
In (\ref{QILFT}) above, $T_{11}, T_{12}$,$T_{21}$  and $G$ respectively, represent the pre--specified TFMs of a given {\em generalized plant} \cite[Chapter~3]{Francis}. The tractability of (\ref{QILFT}) hinges on the fact that it can always be recast as a $\mathcal{H}_2$ model--matching problem \cite{Boyd} with additional subspace constraints on the Youla parameter \cite[Section~IV--D]{Michael's_I3E},\cite{me}. We are now ready to state the following result, which is the scope of the current subsection:
\begin{prop} \label{Subspace} Given the platoon's  plant $G$ (having the expression given in Assumption~\ref{A4}), let us define%the following set
\begin{equation} \label{SS}
\mathcal{S}\overset{def}{=} \big\{ K\in \mathbb{R}(s)^{n \times n} \:  \big | K=\Phi^{-1}T^{-1}\mathcal{D} \{D_{11}, D_{22} \dots D_{nn}\} \: \text{with} \: D_{jj}\in \mathbb{R}(s), \: 1\leq j\leq n\big\}.
\end{equation}
The set $\mathcal{S}$ is a closed linear subspace of $\mathbb{R}(s)^{n \times n}$ having dimension $n$. Furthermore, any leader information controller $K_Q$ belongs to $\mathcal{S}$ and $\mathcal{S}$ is QI under the platoon's plant $G$.
\end{prop}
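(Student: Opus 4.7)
The plan is to verify the three claims in the order they are stated, exploiting the block structure of $G=T\Phi G_\wp$ from Assumption~\ref{A4} and the explicit left coprime factorization in Theorem~\ref{leaderinftare}.

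First, I would observe that $\mathcal{S}$ is the image of the linear subspace $\mathcal{D}_n\subset\mathbb{R}(s)^{n\times n}$ of diagonal TFMs under left multiplication by the fixed invertible TFM $\Phi^{-1}T^{-1}$. Since left multiplication by an invertible matrix is an $\mathbb{R}(s)$-linear isomorphism on $\mathbb{R}(s)^{n\times n}$, it preserves closure and dimension; $\mathcal{D}_n$ is plainly a closed linear subspace of $\mathbb{R}(s)$-dimension $n$ (parameterized by the $n$ diagonal entries $D_{11},\dots,D_{nn}\in\mathbb{R}(s)$), and so $\mathcal{S}$ inherits these properties.

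Second, to place every leader information controller inside $\mathcal{S}$, I would take the left coprime factorization $K_Q=Y_Q^{-1}X_Q$ from equation (\ref{ff}) of Theorem~\ref{leaderinftare}(B) and factor out $T\Phi$ from $Y_Q$ to obtain
\begin{equation*}
K_Q \;=\; \Phi^{-1}T^{-1}\bigl(H^{-1}Y_\wp I_n-\tilde N_\wp Q\bigr)^{-1}\bigl(H^{-1}X_\wp I_n+\tilde M_\wp Q\bigr).
\end{equation*}
Since $Y_\wp$, $X_\wp$, $\tilde N_\wp$, $\tilde M_\wp$ and $H$ are all scalar rational functions and $Q$ is diagonal by Theorem~\ref{leaderinftare}(B), both factors in brackets are diagonal, and so is their product. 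Denoting this diagonal product by $D\in\mathcal{D}_n$, we obtain $K_Q=\Phi^{-1}T^{-1}D\in\mathcal{S}$.

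Third, for quadratic invariance I would pick an arbitrary $K=\Phi^{-1}T^{-1}D\in\mathcal{S}$ with $D$ diagonal and compute directly using $G=T\Phi G_\wp$:
\begin{equation*}
KGK=\bigl(\Phi^{-1}T^{-1}D\bigr)\bigl(T\Phi G_\wp\bigr)\bigl(\Phi^{-1}T^{-1}D\bigr).
\end{equation*}
The scalar factor $G_\wp$ commutes with every matrix, so it can be pulled to the front; the inner pair $\Phi\Phi^{-1}$ cancels, followed by $T\,T^{-1}$, leaving $KGK=G_\wp\,\Phi^{-1}T^{-1}(D\,D)$. Because $D$ is diagonal, $D^2$ is diagonal, hence $G_\wp D^2\in\mathcal{D}_n$ and $KGK\in\mathcal{S}$, which is exactly the condition in Definition~\ref{QI}.

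The only subtle step is the middle cancellation in the third part: it relies crucially on the ordering $\Phi^{-1}T^{-1}D\cdot T\Phi$ being ``undone'' on the right by $\Phi^{-1}T^{-1}$, which only works because $G_\wp$ is a scalar and can be commuted past $T$, $\Phi$, and $D$. Had $G_\wp$ been matrix-valued, or had the heterogeneity in Assumption~\ref{A4} not factored out as a diagonal unimodular $\Phi$, this cancellation would fail; I expect this to be the main (and essentially only) conceptual point worth flagging.
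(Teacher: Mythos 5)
Your proposal is correct and follows essentially the same route as the paper: identifying $\mathcal{S}$ as $\Phi^{-1}T^{-1}$ times the diagonal subspace, extracting the factor $\Phi^{-1}T^{-1}$ from a coprime factorization of $K_Q$ with diagonal $Q$, and using the scalarity of $G_\wp$ to collapse $KGK$ to $\Phi^{-1}T^{-1}D^2G_\wp$. The only cosmetic differences are that you argue dimension via the isomorphism induced by left multiplication rather than exhibiting the explicit column basis, and you work from the left coprime factorization (\ref{ff}) where the paper uses $K_Q=\tilde X_Q\tilde Y_Q^{-1}$; both yield the same diagonal factor $D$.
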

\begin{proof} Clearly $\mathcal{S}$ is closed under addition and under multiplication with scalar rational functions  in $\mathbb{R}(s)$ and is therefore a linear subspace. One basis of $\mathcal{S}$ is comprised of exactly $n$ TFMs from $\mathbb{R}(s)^{n \times n}$, where (for $1 \leq j \leq n$) the $j$--th TFM in the basis has its $j$--th column identical to the $j$--th column of $\Phi^{-1}T^{-1}$ and zero entries elsewhere. If we apply Theorem~\ref{Youlaaa} to the doubly coprime factorization (\ref{hurray}) of Theorem~\ref{leaderinftare}, then any leader information controller $K_Q$ is of the form $K_Q=\tilde X_Q \tilde Y_Q^{-1}$, where $\displaystyle Q\overset{def}{=}\mathcal{D}\big\{Q_{11},Q_{22}, \dots, Q_{nn}\big\}$ % , with $Q_{kk}\in \mathbb{R}(\la)$, $Q_{kk}$ stable (for any $1\leq k \leq n$)
is a diagonal Youla parameter. More explicitly, following (\ref{hurray}) any such $K_Q$ can be written as
\begin{equation} \label{SS1}
K_Q=\Phi^{-1}T^{-1} (\tilde X_\wp I_n+HM_\wp Q) (\tilde Y_\wp I_n-HN_\wp Q)^{-1}
\end{equation}
which obviously lies in $\mathcal{S}$, since the involved  Youla parameters $Q$ are diagonal. Finally, we will prove that $\mathcal{S}$ satisfies  Definition~\ref{QI}, with respect to our plant  $G=T \Phi G_\wp$ (from the statement of Theorem~\ref{leaderinftare} ) where $G_\wp  \in \mathbb{R}(s)$. According to (\ref{SS}), for any $K\in\mathcal{S}$ there exists a diagonal TFM $D$ belonging to $\mathbb{R}(s)^{n \times n}$, such that $K=\Phi^{-1}T^{-1}D$. Then $KGK=(\Phi^{-1}T^{-1}D)(T \Phi G_\wp )(\Phi^{-1}T^{-1}D)$. Since $G_\wp$ is a scalar TFM, its multiplication is commutative and  we obtain $KGK=(\Phi^{-1}T^{-1}D)(T \Phi)(\Phi^{-1}T^{-1}D)G_\wp$ and after simplification $KGK=\Phi^{-1}T^{-1}D^2G_\wp$ which belongs to $\mathcal{S}$. The proof ends.
\end{proof}

\begin{rem} \label{cntrb} Previously known practical interpretations for subspace constraints %on the controller
consist of the following: sparsity constraints on the controller\footnote{For a practical interpretation of QI {\em sparsity} constraints in terms of the interconnection structure of the distributed controller, we refer to \cite{Nuno}.}, controllers having symmetric TFMs and modeling the communications time-delays between sub--controllers, respectively. We remark that the subspace $\mathcal{S}$ we have introduced in (\ref{SS}) delineates a distinct type of  subspace constraints, which are \underline{not} of the  sparsity type. This is because {\em leader information controllers} are not simply constrained to have lower triangular TFMs. (The subspace of lower triangular TFMs in $\mathbb{R}(s)^{n \times n}$ has dimension $n(n+1)/2$, while the $\mathcal{S}$ subspace from (\ref{SS}) has dimension $n$). It is especially noteworthy that the particular structure enforced by $\mathcal{S}$ on the {\em leader information controllers} is not relevant in itself to a distributed implementation of the controller, such as the particularly useful one from Corollary~\ref{invent}. In turn, the meaningful structure of {\em leader information controllers} is  completely captured by the sparsity constraints imposed on their left coprime factors, as specified in  Theorem~\ref{leaderinftare}.
\end{rem}

For the platooning problem, the QI specific  type (\ref{QILFT}) cost  is involved  in the expression of $T_{uw_0}=K(I_GK)^{-1}V_1G_\wp \Phi_0$ (Proposition~\ref{cheia1} (B)). Accordingly, a direct consequence of Proposition~\ref{Subspace} is the tractability of the minimization problem %of the $\mathcal{H}_2$ norm
of the control effort caused by disturbances to the leader:

\begin{equation} \label{AS97}
\min_{\begin{array}{c} K_Q \; \text{stabilizes} \; G \\  K_Q \; \text{leader information  controller} \end{array}}  \; \big \|   T_{uw_0} \big \|_2. 
\end{equation}

More recently, various solutions for the $\mathcal{H}_\infty$ counterpart of the control problem (\ref{QILFT}) for QI configurations have been proposed in \cite{Hinf1,Hinf2}. However, these methods can only cope with the situation when $\mathcal{S}$ is described by sparsity constraints (mainly lower triangular sparsity constraints), therefore they cannot be directly adapted for the {\em leader information controller} constraints of (\ref{SS}).

%For an analysis  of the important advantages offered by the leader information controller scheme both with respect to string stability and sensitivity to disturbances we refer to \cite{Hedrick}.

\section{A numerical example}
\label{nex}
We present in this section a numerical MATLAB simulation for the platoon motion with $n=6$ vehicles, having the transfer function
\begin{equation}\label{mai12}
G_k(s) = \frac{s+\sigma_k}{m_ks^2(\tau_k s+1)}\ e^{-(\phi+\theta)s},\ k=1,2,\dots 6.
\end{equation}
where $\phi = 0.1$ sec. is the electro--hydraulic break/throttle actuator delay and $\theta = 0.03$ sec. is the wireless communications delay. For the $k$-th vehicle, the mass $m_k$, the actuator time constant $\tau_k$ and the stable zero $\sigma_k>0$ are given in Table \ref{param}, next.

\begin{table}[h]
\centering
\begin{tabular}{||c|| c| c| c| c| c| c||}
\hline
k  & 1 & 2 & 3& 4 & 5 & 6\\
\hline
$m_k$ [kg] &8& 4& 1& 3& 2& 7\\
\hline
 $\tau_k$ [s] & 0.1 &    0.2&   0.05&    0.1&    0.1&    0.3\\
\hline
$\sigma_k$ &1 &2& 3& 4& 5& 6\\
\hline
\end{tabular}
\caption{Numerical parameters for the vehicles}
\label{param}
\end{table}

\begin{figure}
\hspace{-12mm}
\begin{tabular}{cc}
\centering
\includegraphics[scale=0.48]{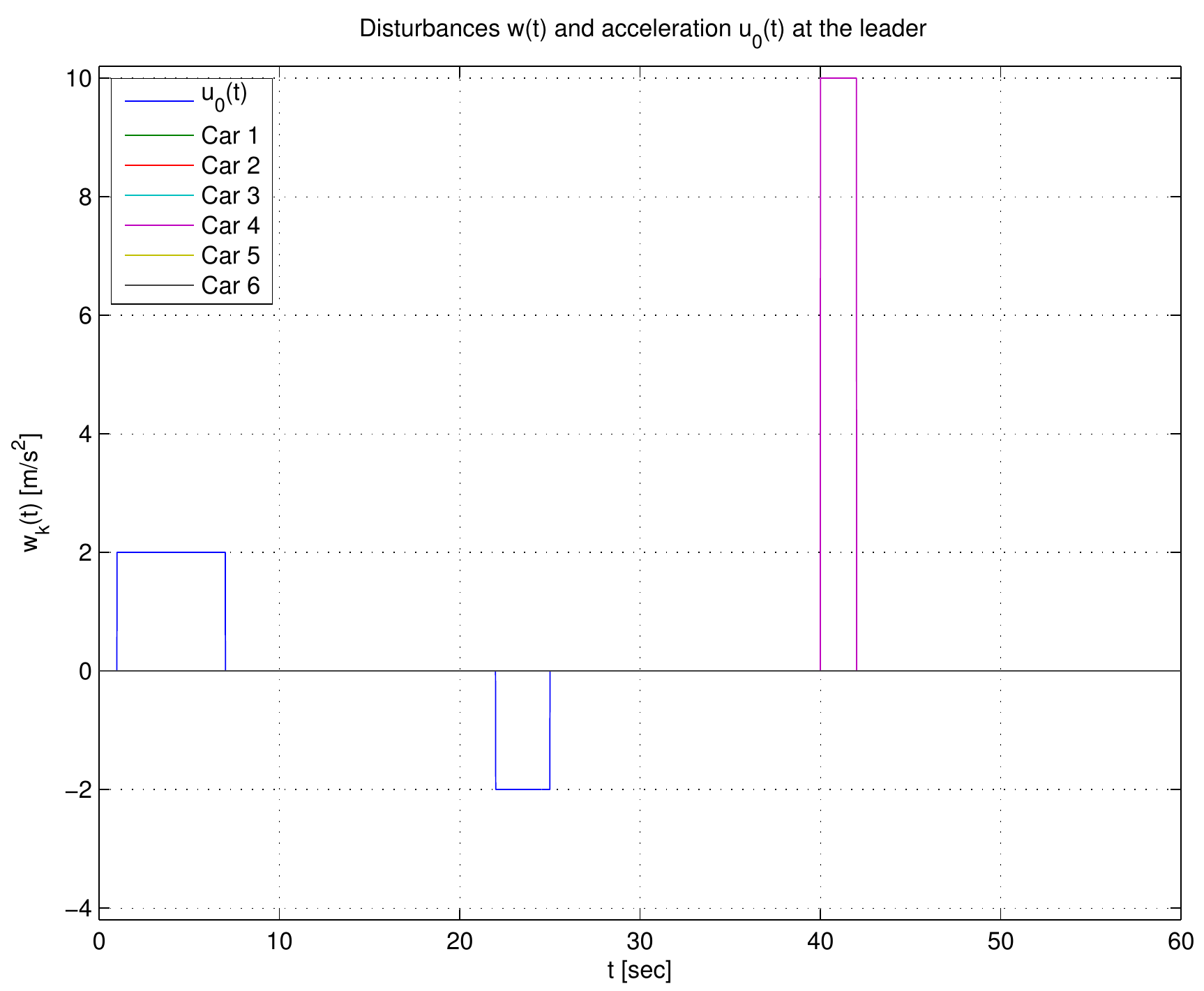}&
\includegraphics[scale=0.48]{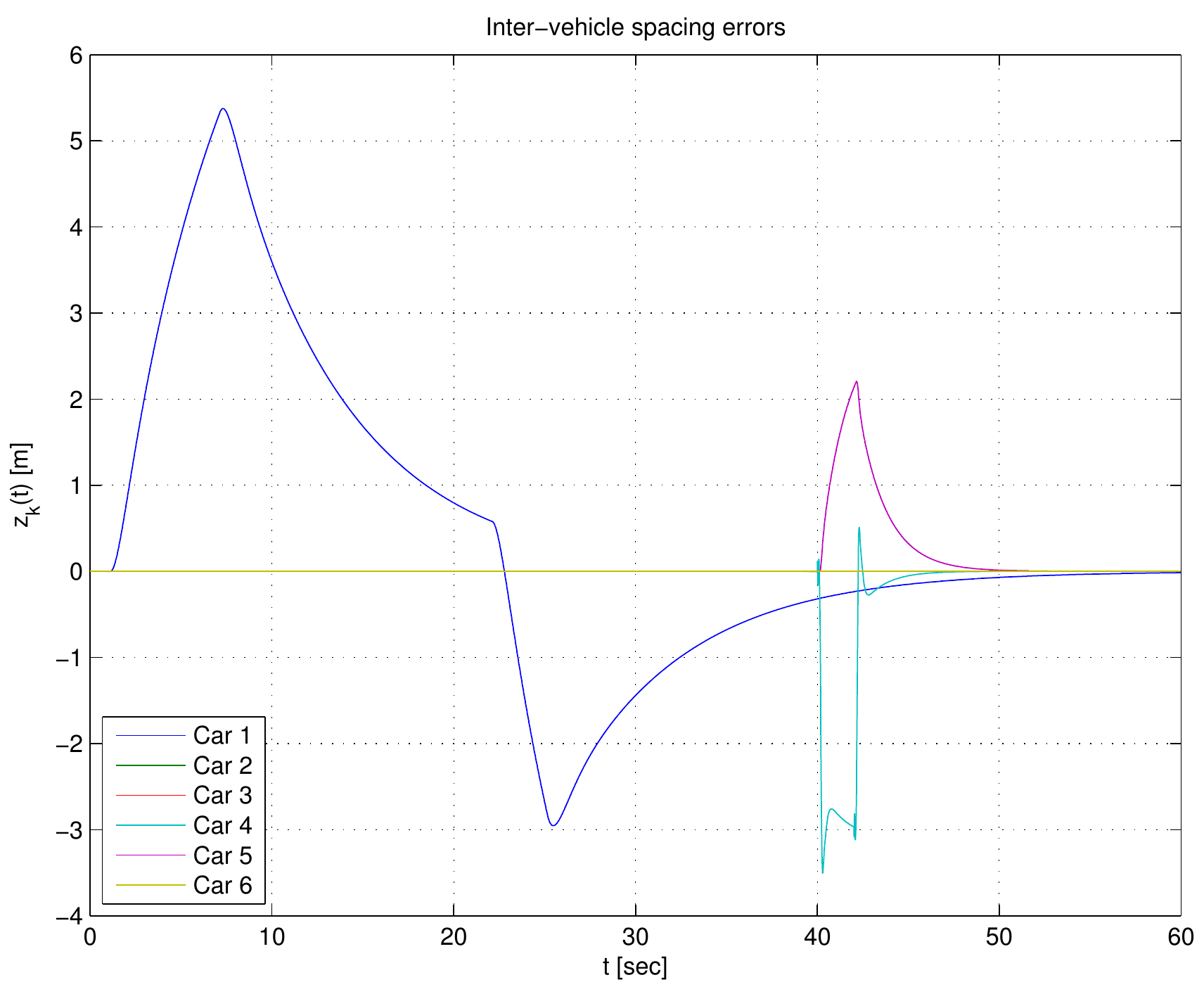}
\\
\includegraphics[scale=0.48]{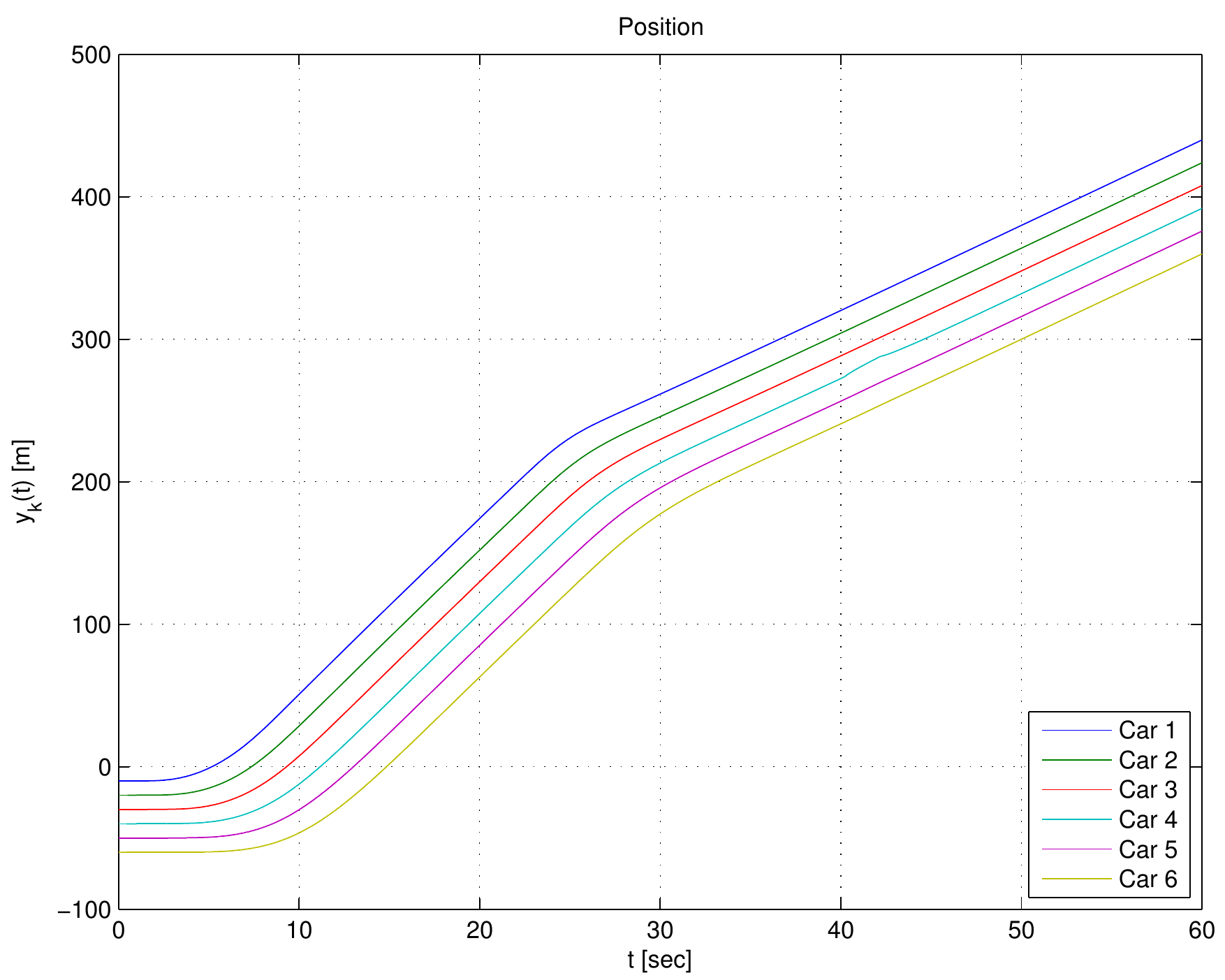}&
\includegraphics[scale=0.48]{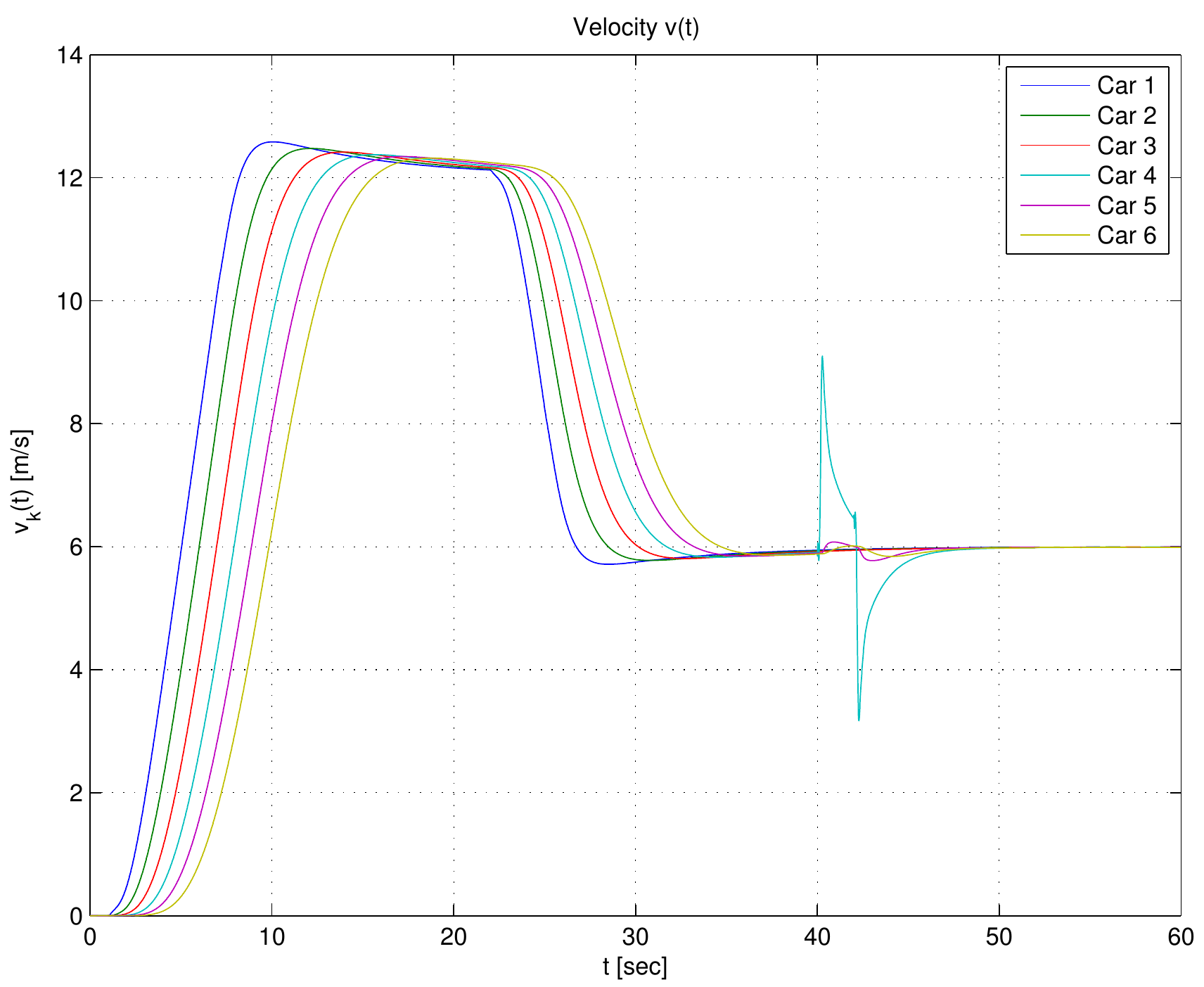}
\end{tabular}
\caption{Simulation results for Distributed Leader Information Control for Platooning (from top left): disturbances for vehicle k / acceleration at the leader; Inter-vehicle spacing errors z(t); Position y(t) (absolute value); Velocity v(t).}
\label{grafice}
\end{figure}

The leader information controllers for each vehicle were designed according to Theorem \ref{leaderinftare} and Corollary \ref{invent}, by taking  a Pade approximation for the time delays from (\ref{mai12}). The parameters $Q_{jj}$, $j=1,\dots,6$ were obtained by minimizing the practical $\mathcal{H}_\infty$ criterion from Section \ref{Praktika}, given in \eqref{synthesis}. %We present in Table \ref{param2} the numerical results.
The simulation results are given in Figure \ref{grafice}. The leader's control signal $u_0(t)$ (generated by the human driver in the leader vehicle) is given in the top left plot, along with a  rectangular pulse disturbance $w_4(t)$ at the $4$--th vehicle. It can be observed that the $u_0$ causes nonzero inter-vehicle spacing error  only at $z_1$, specifically the car behind the leader (vehicle with index $1$) and not at all for the vehicles $2$ and behind. The disturbance at the $4$--th vehicle affects the inter--spacing errors $z_4$ and $z_5$ only (at vehicles $4$ and $5$, respectively) and not at all for vehicle $6$.

%{\tiny
%$$
%K_1(s) = \dfrac{54769461.6094 (s+473.2) (s-95.38) (s+33.23) (s-0.5706) (s^2 - 0.017s + 9.48\cdot 10^{-5}) (s^2 - 0.361s + 0.032)(s^2 - 1.738s + 0.779) (s^2 - 8.146s + 33.13) (s^2 + 6.563s + 531.7) (s^2 - 117.6s + 7683)(s^2 - 309.1s + 7.079\cdot 10^{4}) (s^2 + 431.2s + 1.648\cdot 10^{5})}{(s+3.277\cdot 10^{7}) (s-814.5) (s-76.07) (s-1.602) (s+1.332) (s-0.189) (s-0.0288) (s-0.011)(s^2 - 1.1s + 0.346) (s^2 - 13.14s + 57.76) (s^2 + 15.11s + 128) (s^2 - 137.5s + 7934)(s^2 + 156.3s + 1.102\cdot 10^{4}) (s^2 + 1284s + 1.184\cdot 10^{6})}
%$$}
%\begin{table}[h]
%\centering
% \begin{tabular}{||c|| c| c| c| c| c| c||}
%\hline
%$K_k$  & $K_1$ & $K_2$ & $K_3$ & $K_4$ & $K_5$ & $K_6$\\
%\hline
%Gain\\
%\hline
%
%
%Poles\\ &
%$
%-32767999.29,
%-642.13 \pm 878.48i,
%814.52,
%-78.17\pm 70.1i,
%68.76 \pm 56.61i,
%76.06,
%-7.55 \pm 8.42i,
%6.57 \pm 3.81i,
%1.6,
%-1.33,
%0.55\pm 0.21i
%0.18,
%0.02,
%0.01
%$
%\hline
%Zeros\\
%\hline
%\end{tabular}
%\caption{The leader information controller}
%\label{param2}
%\end{table}

\section{Conclusions}
\label{concl}

We have introduced  a generalization of the concept of {\em leader information controller}  for a non homogeneous platoon of  vehicles and  we have provided a Youla--like parameterization of all such stabilizing controllers. The key feature of  the leader information controller scheme is that it allows for a distributed implementation where the controller placed on each vehicle uses only locally available information. The proposed scheme is also amenable to optimal controller design using norm based costs, it guarantees string stability and it eliminates the accordion effect from the behavior of the platoon. A comprehensive analysis detailing the underlying connections with previous platooning control strategies and with existing distributed/decentralized control architectures is  performed.  We have also presented a method for exact compensation of the time delays introduced by the wireless broadcasting of information, such as to preserve all the  leader information controller performance features.

\end{document}